\documentclass[11pt]{article}

\usepackage[top=1.25in, bottom=1.25in, left=1.25in, right=1.25in]{geometry}
\usepackage{microtype}
\usepackage{graphicx}
\graphicspath{{figs/}}
\usepackage{subcaption}
\usepackage{booktabs} 
\usepackage{authblk}

\usepackage[hyperfootnotes=false]{hyperref}
\usepackage{enumerate}
\usepackage[authoryear,round,semicolon]{natbib}

\usepackage{amsmath}
\usepackage{amssymb}
\usepackage{mathtools}
\mathtoolsset{showonlyrefs}
\usepackage{amsthm}
\usepackage{enumerate}
\usepackage{algorithmic}
\usepackage{algorithm}
\usepackage{xcolor}
\usepackage{float}

\theoremstyle{plain}
\newtheorem{theorem}{Theorem}[section]
\newtheorem{proposition}[theorem]{Proposition}
\newtheorem{lemma}[theorem]{Lemma}

\theoremstyle{definition}
\newtheorem{definition}[theorem]{Definition}
\newtheorem{assumption}{Assumption}
\theoremstyle{remark}

\newcommand{\ATE}{\widehat{\mathrm{ATE}}}
\newcommand{\AggS}{{\scriptstyle \mathrm{ AggS}}}
\newcommand{\ArmS}{{\scriptstyle \mathrm{ ArmS}}}
\newcommand{\AggSm}{{\scriptstyle \mathrm{ AggS-m}}}
\newcommand{\ArmSm}{{\scriptstyle \mathrm{ ArmS-m}}}
\newcommand{\ERS}{{\scriptstyle \mathrm{ ERS}}}
\newcommand{\ERSnm}{{\scriptstyle \mathrm{ ERS-nm}}}
\newcommand{\ERSplus}{{\scriptstyle \mathrm{ ERS+}}}
\newcommand{\E}{\mathbb{E}}
\renewcommand{\P}{\mathbb{P}}
\newcommand{\Var}{\mathrm{Var}}
\newcommand{\R}{\mathbb{R}}
\renewcommand{\L}{\mathbb{L}}
\DeclareMathOperator*{\argmin}{argmin}
\newcommand{\D}{\mathcal{D}}
\newcommand{\sh}{{\scriptstyle \mathrm{ short}}}

\newcommand{\norm}[1]{\left\lVert #1\right\rVert}
\newcommand{\opnorm}[1]{\left\lVert #1\right\rVert_{\mathrm{op}}}

\title{Many Hands Make Light Work: \\ Long-Term Effect Inference for the Experiment Reservoir}

\author[1]{Will Hartog\thanks{This work was done while William Hartog was a PhD student at Stanford University.}}
\author[2,3]{Lihua Lei}
\author[4]{Mandar Rahurkar}
\author[2]{Navdeep Sahni}
\author[5]{Yixin Tang}

  \affil[1]{Lyft, Inc., San Francisco, CA, USA}
  \affil[2]{Graduate School of Business, Stanford University, Stanford, CA, USA}
  \affil[3]{Department of Statistics, Stanford University, Stanford, CA, USA}
  \affil[4]{DoorDash, Inc., San Francisco, CA, USA}
  \affil[5]{Galileo Technologies, Inc., Burlingame, CA, USA}

\begin{document}
\maketitle

\begin{abstract}
  The surrogate index method is the modern standard for estimation of long-term effects in a designed experiment using only short-term outcomes. This method relies on the surrogacy and comparability assumptions which may be violated in practice leading to estimation bias. We propose a new method called Experiment Reservoir Surrogate (ERS) which leverages a reservoir of historical long-term experiments that is common in large-scale experimentation platforms. ERS avoids the surrogacy assumption by predicting the long-term average treatment effect directly, and relaxes the comparability assumption by matching to similar historical experiments, without requiring every experiment in the reservoir to be comparable to the target experiment. Moreover, ERS is memory-efficient, as it only requires aggregate data for each time period, instead of raw individual-level data. We theoretically justify the improved efficiency of ERS and demonstrate the improvement empirically on a reservoir of experiments from the online platform DoorDash.
\end{abstract}

\section{Introduction}

Estimating long-term effects from short-term data is a fundamental problem of modern experimentation. In an experimental setting in which outcome data is sequential, practitioners must contend with the natural tradeoff of reducing the length of the experiment to make decisions more quickly, and taking more time to obtain better estimates of target effects. In many cases, it is realistically impossible to observe a target long-term outcome, necessitating the use of a short-term outcome as a surrogate measurement. While the theoretical basis of formal `surrogate' measurements comes from the clinical trial literature \citep{prentice89}, with many subsequent advancements \mbox{\citep{frangakis02, lauritzen04, chen07}}, recently the problem of long-term effect estimation in online experimentation platforms has pushed the field empirically and theoretically. 

Large tech companies collect massive amounts of data and are constantly running A/B tests to inform decision-making. Decision-makers care about long-term business growth and health, at the scale of months ahead, but want to make decisions quickly, at the scale of a couple weeks, necessitating the use of short-term outcomes to inform business strategy. Many companies invest resources into estimating long-term effects with short-term data, covering a variety of strategies \citep{sadeghi22, jeunen24, saito24}.

To justify and design appropriate short-term surrogates, many experimenters utilize the modern framework of \citet{athey25}, including many tech companies \citep{tripuraneni2024choosing,zhang24,bibaut2024nonparametric,shi2024using, yang2024targeting, lal2026longtermcausalinferencenoisy}. 
\citet{athey25} introduce and formalize the surrogate index, a framework for using training data to fit a prediction function that gives an estimate of the long-term outcome given short-term measurements. The surrogate index learns a prediction function using historical observational data to estimate the long-term outcome given short-term outcomes. The long-term effect inference on a new experiment comes from computing an estimated treatment effect on the imputed surrogate index for each unit.

The validity of the surrogate index relies on two key assumptions: surrogacy and comparability. The surrogacy assumption states that the long-term outcome and treatment are independent, conditional on the explanatory surrogate index computed from short-term covariates. The comparability assumption states that the training data have the same underlying relationship between short-term and long-term as the test data on which one wishes to estimate the treatment effect. In the case of \citet{athey25} the training data is historical observational data. Both assumptions concern the invariance of the functional relationship between short-term and long-term outcomes -- surrogacy implies invariance in treated and control groups, and comparability implies  invariance in short-term and long-term experiments. They are necessary for identification of the treatment effect on the long-term outcomes in the setting of \citet{athey25}.

However, in many common experimentation settings these assumptions may not hold. For example, many experiments have a continued exposure to treatment \citep{tran24} which can affect the treated unit after the short-term window. Theoretically, the surrogacy assumption implies that the outcome and treatment effect time series have the same trend over time. In settings where this is not the case, operating under the surrogacy assumption can introduce bias. We visualize the clear surrogacy violation with empirical data in Section \ref{sec:ERS_motivation}.

The comparability assumption assumes a common relationship between short-term and long-term outcomes in data that may vary greatly. The difference in experiment timing and context may result in significantly different trends, especially for observational data. In addition, the comparability assumption may be violated in settings where outcome-level metrics can be highly variable, such as in the online experimentation setting, which we visualize with empirical data in Section \ref{sec:ERS_motivation}.

There have been some attempts in relaxing the surrogacy and comparability assumptions but all come at the cost of imposing alternative assumptions. For example, \citet{kallus24} replace the surrogacy assumption by a missing-at-random assumption which leads to point identification of long-term effects even in the absence of short-term surrogates. As a result, the surrogates are used to improve statistical efficiency. 
\citet{tran24} model the experiment as a Markov Decision Process (MDP) with a binary action. Their identification strategy is based on extrapolation across time periods, rather than experiments, by exploiting the Markov property and stationarity. \citet{bibaut2024learning} leverages multiple experiments without invoking surrogacy or comparability assumptions,  but they study  different estimands than the long-term effect of a given short-term experiment.

In contrast, we aim to estimate the experiment-level treatment effect as in \citet{athey25}, but take advantage of the availability of multiple historical experiments in many tech companies. In this paper, we introduce a novel alternate procedure to predict the long-term ATE that we call Experiment Reservoir Surrogate (ERS), which predicts the long-term ATE directly using historical experiments from a large "reservoir" as units. This shift is especially applicable for large online experimentation platforms which already have a large database of completed long-term experiments, as opposed to the setting of \citet{athey25} and \citet{chen23} which only assume access to a single observational dataset, or \citet{kallus24}, which assume a single experimental setting with missing data. 

Our method modifies the surrogate methods in three ways: (a) it operates on aggregated data instead of individual-level data; (b) it fits a horizontal regression directly on the aggregated short-term and long-term ATEs instead of the outcome values, and (c) it applies a matching procedure to select the experiments from the reservoir that are most representative of the test experiment. Restricting ourselves to only aggregated data makes the statistical problem harder, but we counteract this issue through access to many historical experiments, meaning the power comes through learning across experiments. This modification substantially improves memory efficiency over the surrogate index approach, and this restriction can be worthwhile in many cases, as experiment-level aggregated data is much less computationally costly to store and access than the user-level data required for a surrogate index approach. Working with user-level data can require hours or days of compute while an aggregate analysis is on the scale of seconds or minutes.

In Section \ref{sec:our-method} we introduce ERS, and argue why it relaxes the surrogacy and comparability assumptions. We introduce two competing methods -- Aggregate Surrogate and Armwise Surrogate, as analogs of the surrogate index method that only requires aggregated data. We justify the improvement of ERS theoretically in Section \ref{sec:theory}. We start by introducing a decomposition of the mean squared error (MSE), revealing multiple noise channels that are important for method comparison. In Section \ref{sec:bias-theory}, we justify the benefit of running regressions on ATEs instead of outcomes by comparing unmatched ERS to the Aggregate Surrogate and Armwise Surrogate methods. In Section \ref{sec:theory-nonparam}, we justify the benefit of matching by comparing ERS with the variants of Aggregate Surrogate and Armwise Surrogate with matching. In Section \ref{sec:theory-w/-control}, we analyze a variant of ERS and show that ERS has a lower MSE asymptotically under an additional assumption that is supported by an empirical observation in our dataset and likely plausible for online experimentation in the tech industry. In Section \ref{sec:empirical} we demonstrate the improvement of ERS in comparison to competing methods, through an numerical exercise on a database of historical long-term experiments from DoorDash.

\section{Our Method: Experiment Reservoir Surrogate}\label{sec:our-method}

\subsection{Setting}
Our goal is to design a method that predicts the long-term treatment effect for a new online experiment, given short-term outcome data for the new experiment, and a large reservoir of completed past experiments. Formally, we assume that the user-level data consists of outcomes $\{Y_{jt}^{(i)}\}$, at a sequence of $T$ times from a set of $n$ experiments. The indices correspond to experiment $i$, user $j$, and time $t$. The users across experiments may be different.

Each (experiment, user) pair $(i,j)$ has a treatment assignment $W_{ij}$ corresponding to whether unit $j$ receives the relevant treatment in experiment $i$. As a simplification of the potential outcome framework, we write that 
\begin{align*}    Y_{jt}^{(i)}=Y_{jt}^{(i)}(W_{ij})=m_{jt,0}^{(i)}+\tau_{jt}^{(i)}W_{ij}+\epsilon_{jt}^{(i)}(W_{ij}),
\end{align*}
for expected treatment and control outcomes $m_{jt,1}^{(i)}, m_{jt,0}^{(i)}$, expected individual treatment effect $\tau_{jt}^{(i)} = m_{jt,1}^{(i)}- m_{jt,0}^{(i)}$, and mean-zero noises $\epsilon_{jt}^{(i)}(1)$ and $\epsilon_{jt}^{(i)}(0)$ conditional on $(m_{jt,0}^{(i)}, \tau_{jt}^{(i)})$. 

Here $T$ represents a long-term time of interest, at which we would like to estimate the treatment effect. In practice, $T$ is at the scale of many weeks. In our empirical illustration in Section \ref{sec:empirical}, $T=55$ days. For a new test experiment $i=0$, we assume only access to $\{Y_{jt}^{(0)}\}_{t\leq t_0}$ for a short-term cutoff $t_0$ at which point we would like to have an estimate of the long-term treatment effect at time $T$. In practice, $t_0$ is at the scale of a few weeks. In Section \ref{sec:empirical} it is set to $t_0=20$ days. For each experiment $i$, we denote by the experiment-time-level parameters $m_{t,0}^{(i)}$ and $\tau_t^{(i)}$ as
\begin{equation}\label{eq:individual_draw}
\E\left[m_{jt,1}^{(i)}\right] = m_{t,1}^{(i)}, \,\, \E\left[m_{jt,0}^{(i)}\right] = m_{t,0}^{(i)},\,\, \E\left[\tau_{jt}^{(i)}\right]=\tau_t^{(i)}.
\end{equation}

To capture the heterogeneity in the dynamics across experiments, we adopt a superpopulation framework in which each experiment has a population-level baseline $m_{t,0}^{(i)}$ and treatment effect $\tau_t^{(i)}$, which both vary across time, as we would expect from most online experiments. We assume that these are drawn from a super population.
\begin{assumption}\label{ass:pop-dist}
There exists a distribution $F$ such that 
\begin{equation*}
(m_{1,0}^{(i)},...,m_{T,0}^{(i)},\tau_1^{(i)},...,\tau_T^{(i)}) \stackrel{i.i.d.}{\sim} F.
\end{equation*}
\end{assumption}

Given the short-term data, our goal for the new experiment $i=0$ is to obtain an estimate of $\tau_T^{(0)}$,
the population-level long-term treatment effect, which may be thought of as the empirical treatment effect when the number of units is large. Unlike the surrogate index framework \citep{athey25} we propose a method that uses only aggregated experiment-level data, in particular the daily average treatment effects (ATEs), defined as
\begin{align*}
    \ATE_{t}^{(i)}&=\bar Y_{t,1}^{(i)}-\bar Y_{t,0}^{(i)} \\ \bar Y_{T,w}^{(i)}&=\frac{\sum_jY_{jt}^{(i)}\mathbf{1}\{W_{j}^{(i)}=w\}}{\sum_j\mathbf{1}\{W_{j}^{(i)}=w\}} \ \text{ for }w\in\{0,1\}.
\end{align*}
where $\{\bar Y_{T,w}^{(i)}\}$ are aggregated treatment arm-level outcome means. Here, we implicitly assume the users are randomized on the individual level, as in common practice of A/B testing. For more complex experimental designs, we can use the inverse probability weighting (IPW) or Augmented IPW (AIPW) estimators. 



\subsection{Experiment Reservoir Surrogate}\label{sec:ers}

We propose a method that leverages a large pool of past long-term historical experiments, which we call the experiment reservoir, leading to the name Experiment Reservoir Surrogate (ERS). Under the superpopulation framework (Assumption \ref{ass:pop-dist}), the design of ERS is to learn the trend in experiment-level treatment effects $\tau_t^{(i)}$ over time, and from this trend impute a long-term treatment effect $\tau_T^{(0)}$ for a new experiment where $i=0$. According to common practice for surrogate methodology, we estimate this trend with linear regression, which can be with or without regularization. However, the nonparametric and potentially complex nature of $F$ necessitates a matching component to ensure the training data is local enough that a linear trend is representative.

ERS takes a reservoir of training data $\{\ATE_{t}^{(i)}\}_{i\in[n],t\in[T]}$, and short-term test data $\{\ATE_{t}^{(0)}\}_{t\in[t_0]}$. Importantly, ERS does not require individual-level data. We build a predictor through a horizontal regression on the training reservoir, with $Y_i=\ATE_{T}^{(i)}$ as the outcome, and the vector $X_i=(1,\ATE_{1}^{(i)},...,\ATE_{t_0}^{(i)})$ as the covariates. 
We define the procedure with all short-term times $1\leq t\leq t_0$, but the experimenter may choose to restrict the number of times, which will reduce variance in the model, as we do in our empirical demonstration in Section \ref{sec:empirical}. We detail the procedure in Algorithm \ref{alg:ers}. 

Furthermore, ERS introduces a matching component to the fitting process. For a test experiment with covariates $X^{(0)}$, ERS finds the experiments $i\in[n]$ in the reservoir with covariates $X^{(i)}$ close to $X^{(0)}$, and fits the model on this subset, drawing only the experiments from the reservoir that are relevant to the current experiment.

We define a matching procedure as a deterministic function that takes as inputs an $(n,d)$-dimensional training reservoir of experimental covariates $X$, a $d$-dimensional test experiment covariate vector $X^{(0)}$, and a target number of matched experiments $k$, and outputs a $(k,d)$-dimensional subset of $X$. In this paper, we focus on matching procedures that are based on a distance metric $\delta$.

\begin{definition}\label{def:matching-alg} We define a $\delta$-matching procedure as $M_\delta(X,X^{(0)};k)\triangleq X^*$ where $\tilde X$ has dimensions $(k,d)$ and the $j$th row $\tilde X^{(j)}$ is the row of $X$ with the $i$th smallest value of $\delta_i=\delta(X^{(i)},X^{(0)})$.
\end{definition}

Definition \ref{def:matching-alg} may be modified to replace $k$ with a bandwidth hyperparameter $h$, where the subset $\tilde X$ does not have a fixed number of rows, and we select a subset of all experiments with $\delta^{(i)}\leq h$. We use this formulation of matching in Section \ref{sec:theory-nonparam} since it is easier to analyze. In practice, when the covariate space is more complex, it is more interpretable to simply use a fixed number of matched experiments $k$.

Moreover, Definition \ref{def:matching-alg} gives a flexible framework for matching with general covariates $X$, although our main focus is on covariates which are the time series of short-term ATEs. In online experimentation, other experiment-level metadata such as type of metric, region, target population, etc. may be useful in matching more appropriately, which is why we present this more general framework which experimenters may adapt to their setting. Definition \ref{def:matching-alg} is also flexible with regards to distance metric $\delta$; we discuss potential choices of metric in Section \ref{sec:empirical}.

Given a matching procedure $M_\delta$ and a pre-selected $k$, we define the ERS method for predicting the long-term ATE of a test experiment in Algorithm \ref{alg:ers}. In Algorithm \ref{alg:ers}, if a linear regression is applied to obtain $\hat{f}$, $\hat{y}^{(0)}$ can be written as $\hat{\eta}' X^{(0)}$, where $\hat{\eta}$ is the estimated coefficient vector. In addition, the number of matched experiments $k$ is an input, which should be selected beforehand by the experimenter using the training reservoir. To do this selection, we suggest a leave-one-out cross-validation-type procedure, which we explain further in Section \ref{sec:empirical}.

\begin{algorithm}
    \caption{Experiment Reservoir Surrogate (ERS)}\label{alg:ers}
    \begin{algorithmic}
        \STATE {\bfseries Input:} vectors of long-term and short-term ATEs from the reservoir $y=(\ATE_{T}^{(1)}, \ldots, \ATE_{T}^{(n)})'$ and\\ $X^{(i)}=(1, \ATE_{1}^{(i)}, \ldots, \ATE_{t_0}^{(i)})'$ for $i\in [n]$,  
        vector of short-term ATEs from the target experiment $X^{(0)}=(1, \ATE_{1}^{(0)}, \ldots, \ATE_{t_0}^{(0)})'$, distance metric $\delta$, integer $k$.

        \vspace{0.2cm}
        \STATE {\bfseries Procedure:} 
        
        \STATE Find matches \[\tilde X=M_\delta([X^{(1)}, X^{(2)}, \ldots, X^{(n)}],X^{(0)};k).\]

        \STATE Find the corresponding long-term ATEs $\tilde y$.

        \STATE Fit a predictor $\hat{f}(x)$ on $(\tilde X, \tilde y)$.
        
        \vspace{0.2cm}
        \STATE {\bfseries Output:} $\hat y^{(0)}=\hat{f}(X^{(0)})$        
    \end{algorithmic}
    
\end{algorithm}
It is evident that ERS is easy to implement and only requires access to data aggregated at the experiment level, in fact only the ATEs, whereas many other surrogate methods may require saving individual data which can be costly to both store and analyze.

\subsection{Addressing surrogacy and comparability violation through aggregate data}\label{sec:ERS_motivation}

The surrogate index method \citep{athey25} relies on two key assumptions -- surrogacy and comparability. In the context of long-term effects, both assumptions are invariance of the relationship between long- and short-term outcomes -- the former implies invariance in the treatment and control groups and the latter implies invariance in historical and target experiments. More formally, surrogacy assumes, for any unit $j$ in this in any experiment $i$, 
\begin{equation}\label{eq:surrogacy}
Y_{jT}^{(i)}(1)\mid \{Y_{jt}^{(i)}(1): t\le t_0\}\stackrel{d}{=} Y_{jT}^{(i)}(0)\mid \{Y_{jt}^{(i)}(0): t\le t_0\},
\end{equation}
and comparability assumes
\begin{equation}\label{eq:comparability}
Y_{jT}^{(i)}(0)\mid \{Y_{jt}^{(i)}(0): t\le t_0\}\stackrel{d}{=} Y_{jT}^{(0)}(0)\mid \{Y_{jt}^{(0)}(0): t\le t_0\}.
\end{equation}
Both assumptions are sensitive to the error distributions of $\epsilon_{jt}^{(i)}(1)$ and $\epsilon_{jt}^{(i)}(0)$. Specifically, even if the following experiment-level surrogacy holds:
\begin{align*}
m_{T,1}^{(i)}\mid m_{1,1}^{(i)}, \ldots,  m_{t_0,1}^{(i)}\stackrel{d}{=} m_{T,0}^{(i)}\mid m_{1,0}^{(i)}, \ldots,   m_{t_0,0}^{(i)},
\end{align*}
the assumption \eqref{eq:surrogacy} would be generally violated if $\epsilon_{jt}^{(i)}(1)$ and $\epsilon_{jt}^{(i)}(0)$ have different distributions. Similarly, even though Assumption \ref{ass:pop-dist} implies 
\begin{align*}
m_{T,0}^{(i)}\mid m_{1,0}^{(i)}, \ldots,  m_{t_0,0}^{(i)} \stackrel{d}{=} m_{T,0}^{(0)}\mid m_{1,0}^{(0)}, \ldots,   m_{t_0,0}^{(0)},
\end{align*}
The assumption \eqref{eq:comparability} fails if $\epsilon_{jt}^{(i)}(w)$ has an experiment-dependent distribution. 

Therefore, using aggregate data not only saves compute and memory but also relaxes surrogacy and comparability assumptions that are required for the surrogate index method. This benefit, however, comes at a cost because aggregation reduces each experiment to a single observation. As a result, to make the strategy viable, one must therefore have access to at least $t_0$, or typically a much larger set of, experiments in order to run a meaningful linear regression of the long-term outcomes on the short-run outcomes. 


\subsection{Other motivations of ERS: some empirical findings}\label{subsec:empirical_findings}

We construct a dataset of daily average outcome levels for both the treatment and control groups from $196$ experiments in DoorDash. Each experiment has multiple metrics, and the index $i$ encodes an experiment-metric pair. Using this dataset, we provide empirical evidence for violation of surrogacy and comparability. In addition, we will describe two other important empirical findings that are important for the theoretical framework. Later in Section \ref{sec:theory}, we will formalize how ERS benefits from these empirical findings.

\begin{figure*}
    \centering
    \includegraphics[scale = 1]{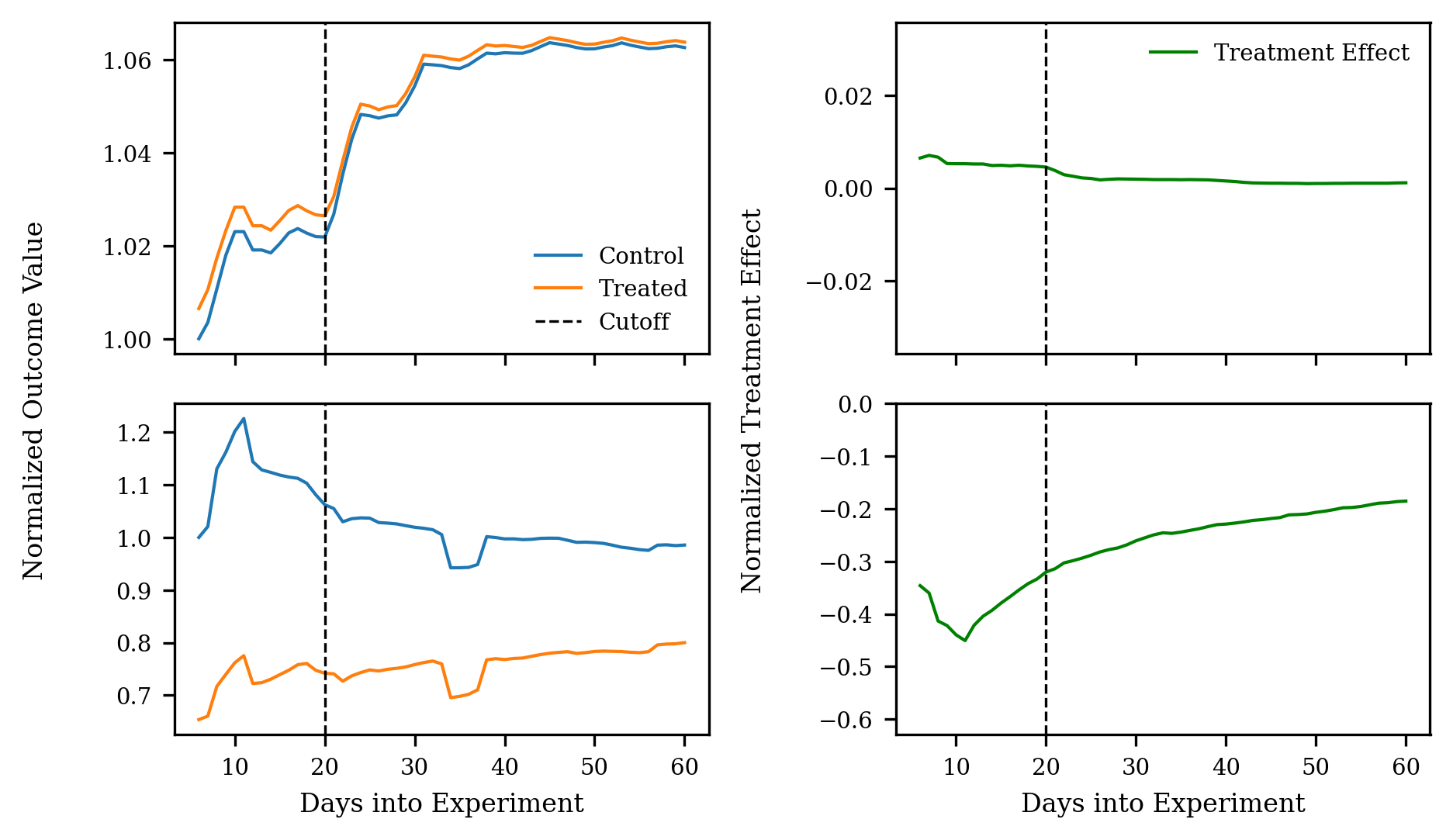}
    \caption{Two sample long-term experiments from the DoorDash database, where the top row and bottom row are the normalized outcome values and treatment effects for different experiments. Despite fluctuations in the outcome, the trend in treatment effect is more stable.}
    \label{fig:metric_treatment_plots}
\end{figure*}


\begin{figure}
    \centering 
    \includegraphics[width=0.65\linewidth]{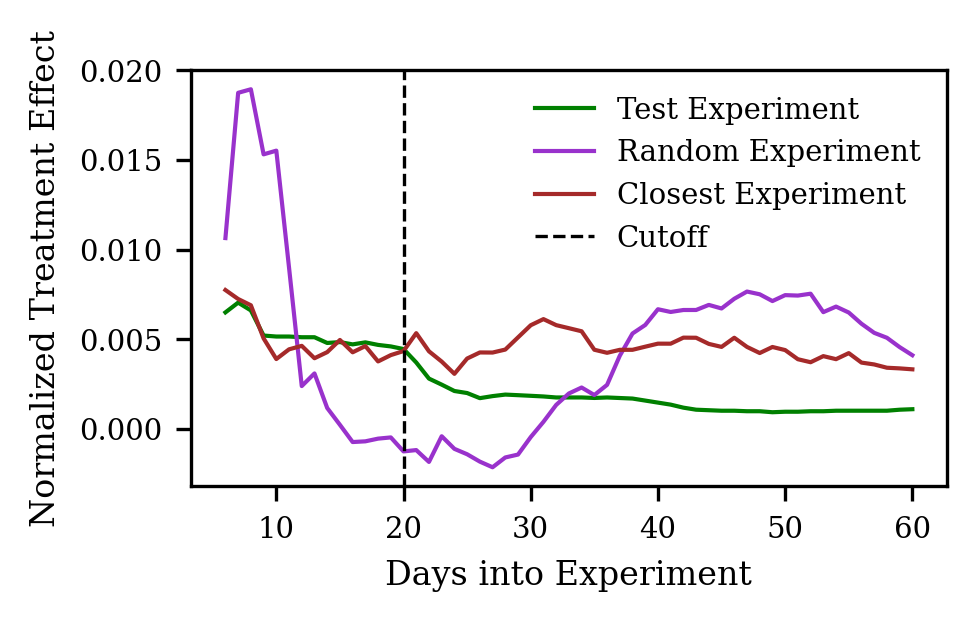}
    \caption{The ATE trend for the first experiment from Figure \ref{fig:metric_treatment_plots}, along with that of a random experiment and the closest experiment to the pre-cutoff time series in $\ell_2$ distance. The closest experiment has a much more similar trend than the random experiment.}
    \label{fig:experiment_plot}
\end{figure}

\paragraph{Empirical observation 1: ATE drives its own trend.} 
In principle, $\tau_T^{(i)}$ could be affected by outcome levels in both the treatment and control groups. Put another way, $\E[\tau_T^{(i)}\mid \tau_{1}^{(i)}, \ldots, \tau_{t_0}^{(i)}, m_{1,0}^{(i)}, \ldots, m_{t_0,0}^{(i)}]$ could depend on all variables in the conditioning set. Nevertheless, Figure \ref{fig:metric_treatment_plots} suggests that the outcome levels are barely predictive of the long-term ATE. We formalize this hypothesis in Assumption \ref{ass:cond-indep-tau-alpha}.
\begin{assumption}\label{ass:cond-indep-tau-alpha}
For any $i$,
\begin{equation}\label{eq:conditional_mean_independence}
\E[\tau_T^{(i)}\mid \tau_1^{(i)},\ldots,\tau_{t_0}^{(i)},m_{1,0}^{(i)},\ldots,m_{t_0,0}^{(i)}]
=\E[\tau_T^{(i)}\mid \tau_1^{(i)},\ldots,\tau_{t_0}^{(i)}]\qquad\text{a.s.}
\end{equation}
\end{assumption}
To assess Assumption \ref{ass:cond-indep-tau-alpha} within a model with linear conditional means, we run a linear regression of $\tau_T^{(i)}$ on $\tau_{1}^{(i)}, \ldots, \tau_{t_0}^{(i)}, m_{1,0}^{(i)}, \ldots, m_{t_0,0}^{(i)}$ and apply an F-test to the coefficients corresponding to control outcome levels, clustered by experiments. This F-test returns a p-value of $0.36$, yielding no statistically significant evidence against the restriction within this linear model. In the following, we will consider Assumption \ref{ass:cond-indep-tau-alpha} in some theoretical comparisons.

\paragraph{Empirical observation 2: ATE is less volatile. } In general, we would expect that imputing long-term values should be easier for less volatile quantities. Figure \ref{fig:metric_treatment_plots} already alludes to the idea that the treatment effect is more stable, but we also have empirical evidence that the treatment effects are less variable overall. For our empirical observation, both ATE and control mean are normalized by the control mean. For each experiment metric we compute the variance (or sum of squares) across all days in the data, including the long-term data. Then we plot a histogram of the log of this summary statistic across all experiments in Figure \ref{fig:var_ss_ratios}. We observe that most experiments have a more variable control mean, on an exponential order, indicating that on average, experiments have ATEs which are much less volatile than the control means.

\begin{figure}
    \centering
    {\includegraphics[width=0.48\textwidth]{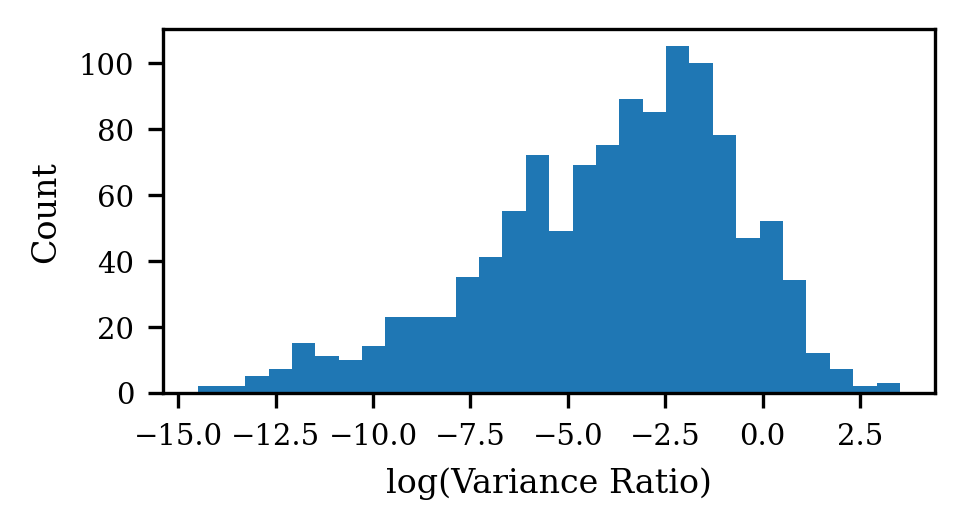}}
  \hfill
 {\includegraphics[width=0.48\textwidth]{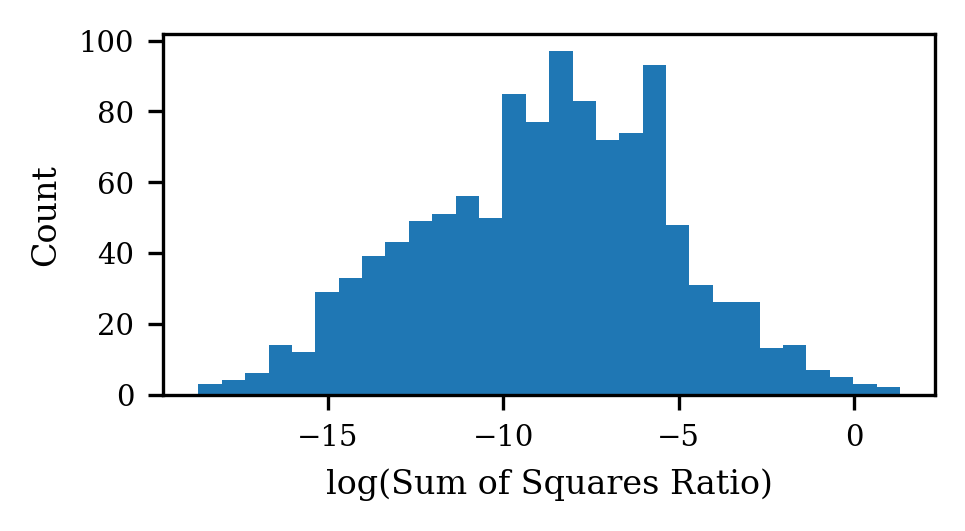}}
    \caption{For all days up until day 55, we compute the variance and sum of squares for each experiment, for both ATEs and control means, and compute the ratios. We see that ATEs in general are much less volatile than control means.}
    \label{fig:var_ss_ratios}
\end{figure}

\paragraph{Empirical observation 3: ATE trend differs from outcome trend.} Under the comparability assumption for all $T > t_0$, one would expect the trend of all experiments to be similar. Furthermore, under the surrogacy assumption, if the functional relationship is linear, we would expect $\tau$ and $m_0$ to have the same trend. We can test these assumptions jointly by computing an F-test for whether distinguishing between control mean and ATE in a regression improves the model fit. We define the outcome vector with length $2n$ as the stacked long-term control means and ATEs, corresponding stacked short-term covariates, and a dummy $D$ indicating whether the observation is of control means or ATEs. We test whether the coefficients on the interactions between $D$ and the short-term covariates are jointly zero, allowing separate intercepts for control means and ATEs. This F-test, fit with standard errors clustered by the original experiment, returns a significant p-value of $2.3\times10^{-5}$, indicating that the linear trends of ATE and control mean differ.

\paragraph{Empirical observation 4: ATE trend is more comparable for closer experiments.} Anecdotally, we show in Figure \ref{fig:experiment_plot} a test experiment as well as its closest experiment and a random experiment, visualizing that close experiments may have more similar trends. When the conditional mean of the long-term ATE is a common linear function of the short-term ATEs, adding a proximity indicator and its interactions should not improve the fit. To test this, we run an F-test for each metric within each test experiment, for a total of about 1000 metrics. We find the 50 closest experiments in Euclidean distance to the test experiment, excluding the metrics that share the same original experiment, and assign a dummy variable $G=1$ to these experiments and the test experiment and $G=0$ to the remaining experiments. Then we train a linear regression on the long-term ATE given the short-term ATEs, including the dummy $G$ and its interaction with the short-term ATEs. We run an F-test on the dummy and its interactions, which tells us whether the added information of being in a metric cluster chosen by closeness of short-term ATEs is significant. We show a histogram of these F-test p-values in Figure \ref{fig:comp-f-test-pvals}, with the left tail of about 15\% of experiments cut off. About 77\% of the p-values are under the nominal significance threshold of 0.05.

\begin{figure}
    \centering
    \includegraphics[scale = 1]{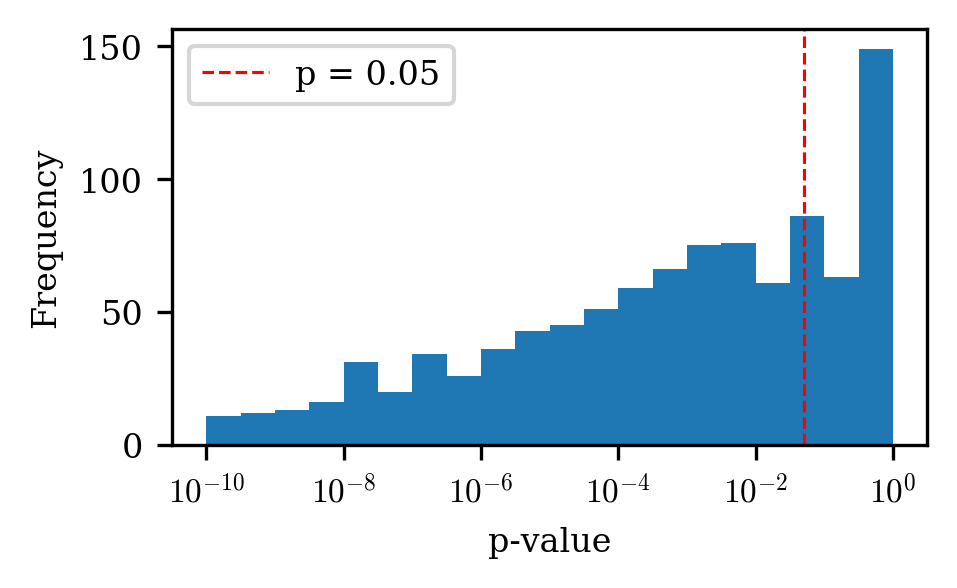}
    \caption{Histogram of the p-values for 1000 F-tests on a dummy for metric cluster and its interactions with short-term ATEs. The left tail of 15\% of metrics is cut off and about 77\% of metrics are below the nominal significance threshold of 0.05.}
    \label{fig:comp-f-test-pvals}
\end{figure}

\paragraph{Empirical observation 5: ATE and control mean are not too correlated.}

The plots in Figure \ref{fig:metric_treatment_plots} indicate that the trends of ATE and outcome level may not align, and also that they may not be too correlated. We conduct a simple analysis to show that they are not too correlated. We run a canonical correlation analysis (CCA) on the short-term ATEs $\tau_{1:t_0}$ and $m_{0,1:t_0}$, which computes the linear combination of the covariates that is maximally correlated across experiments. We compute the first CCA direction on the first 100 experiments, and compute the correlation in this direction for the remaining 96 experiments, obtaining a correlation of 0.244, which is not too large. Note that this exercise differs from the previous empirical observations: this speaks to the marginal correlation of ATE and control.

\subsection{Points of comparison: Aggregate Surrogate and Armwise Surrogate}\label{sec:agg-surr}

Throughout the rest of the paper we argue why ERS can improve on a standard outcome-based surrogate procedure. However, for the purposes of this paper, we cannot make a direct comparison between ERS and the surrogate index, since our empirical data is aggregated at the experiment level, not the user level, which the surrogate index requires. Moreover, as outlined in Section \ref{sec:ers}, using a method that only takes aggregate data as an input has many computational and memory benefits, so we want our traditional point of comparison surrogate method to have the same aggregate inputs.

We introduce Aggregate Surrogate, which functions like the surrogate index, but with experimental arms as units. This is analogous to the surrogate index method described in \citep{zhang24}. We expect that Aggregate Surrogate, on many experiments with the same outcome variable, will have an output similar to that of the surrogate index. Formally, Aggregate Surrogate uses the reservoir to fit a training function $\hat g(\bar Y_{1,w}^{(i)},...,\bar Y_{t_0,w}^{(i)})$ as an estimate of $\bar Y_{T,w}^{(i)}$, and then predicts the long-term ATE on a new test experiment with $i=0$ as $\hat\tau_T^{(0)}=\hat g(\bar Y_{11}^{(0)},...,\bar Y_{t_01}^{(0)})-\hat{g}(\bar Y_{10}^{(0)},...,\bar Y_{t_00}^{(0)})$.

For the formal algorithm for Aggregate Surrogate, we have from the training reservoir a matrix of arm-level short-term outcome means $\begin{bmatrix}X_0 \\ X_1\end{bmatrix}$ where $X_w := \{\bar Y_{T,w}^{(i)}\}_{i\in[n],t\in[t_0]}\in\R^{n\times t_0}$ and a vector of arm-level long-term outcome means $\begin{bmatrix}Y_0 \\ Y_1\end{bmatrix}$ where $y_w := \{\bar Y_{T,w}^{(i)}\}_{i\in[n]}\in\R^{n}$. For the test experiment we have two arm-level short-term outcome means $X_w^{(0)} := \{\bar Y_{t,w}^{(0)}\}_{t\in[t_0]}\in\R^{t_0}$ for $w\in\{0,1\}$. The training data $X$ and $Y$ build a linear predictor which is applied to the test data $(X_1^{(0)},X_0^{(0)})$ to get the predicted long-term ATE.

\begin{algorithm}
\caption{Aggregate Surrogate (AggS)}\label{alg:agg-surr}
    \begin{algorithmic}
        \STATE {\bfseries Input:} vectors of long-term and short-term group averages from the reservoir $Y_w=(\bar Y_{T,w}^{(1)}, \ldots, \bar Y_{T,w}^{(n)})'$ and $X_w^{(i)} = (1, \bar Y_{1,w}^{(i)},\ldots, \bar Y_{t_0,w}^{(i)})'$ for $i\in [n]$ and $w\in\{0,1\}$, vectors of short-term group averages from the target experiments $X_w^{(0)}=(1, \bar Y_{1,w}^{(0)}, \ldots, \bar Y_{t_0, w}^{(0)})'$ for $w\in\{0,1\}$.

        \vspace{0.2cm}
        \STATE {\bfseries Procedure:}

        \STATE Set $X_w = (X_w^{(1)}, \ldots, X_w^{(n)})'$ for $w\in \{0, 1\}$.
        
        \STATE Fit a predictor $\hat{g}(x)$ on $\left(\begin{bmatrix}
        X_0\\ X_1
        \end{bmatrix}, \begin{bmatrix}
            y_0 \\ y_1
        \end{bmatrix}\right)$.

        \vspace{0.2cm}
        \STATE {\bfseries Output:} $\hat{\tau}_T^{(0)} = \hat{g}(X_1^{(0)}) -\hat{g}(X_0^{(0)})$
    \end{algorithmic}
    
\end{algorithm}

When the unregularized linear regression is applied, the Aggregate Surrogate output can be factored as $\hat\eta'(X_{1}^{(0)}-X_{0}^{(0)}) = \hat\eta' X_0$, where $\hat\eta$ is the estimated coefficient vector, so it effectively decomposes into a function of the short-term ATEs. Thus, the predicted long-term effect has the same form as the ERS estimate $\hat\eta' X_0$ without the intercept, except that the coefficient vector is estimated using group averages. If the treatment effect had a different dynamics than the group averages, as illustrated in Figure \ref{fig:metric_treatment_plots}, the Aggregate Surrogate estimate would be biased. 
We formalize this intuition in Section \ref{sec:bias-theory}. 

As a result of this bias, it is natural to also consider a variant of surrogate where the long-term mean of each arm is predicted separately, and the long-term ATE is the difference of these predictions. We call this procedure Armwise Surrogate.

To disentangle the benefits and matching and ATE regression, we will also the unmatched version of ERS, ERS-nm, that runs the ATE regression on the entire reservoir, as well as the matched versions of AggS and ArmS, AggS-m and ArmS-m, that only use experiments that are close to the target as in ERS; See Appendix \ref{subapp:algorithms}, Algorithms \ref{alg:ers-unm}, \ref{alg:agg-surr-variant1} and \ref{alg:arm-surr-variant1}, for details.

\begin{algorithm}
\caption{Armwise Surrogate (ArmS)}\label{alg:arm-surr}
    \begin{algorithmic}
        \STATE {\bfseries Input:} Same as Algorithm \ref{alg:agg-surr}.

        \vspace{0.2cm}
        \STATE {\bfseries Procedure:}

        \STATE Set $X_w = (X_w^{(1)}, \ldots, X_w^{(n)})'$ for $w\in \{0, 1\}$.
        
        \STATE For each $w\in\{0,1\}$, fit a predictor $\hat{g}_w(x)$ on $\left(
        X_w, 
            y_w\right)$.

        \vspace{0.2cm}
        \STATE {\bfseries Output:} $\hat{\tau}_T^{(0)} = \hat{g}_1(X_1^{(0)}) -\hat{g}_0(X_0^{(0)})$
    \end{algorithmic}
    
\end{algorithm}

\section{Theory of ERS and other methods}\label{sec:theory}

The ERS method consists of two essential changes to Aggregate Surrogate. The first is to predict the ATE directly, instead of the treatment and control outcome values separately. The second is to match a test experiment to similar training experiments to improve the estimation. To disentangle the contribution of each component, we analyze the theoretical properties of ERS, Aggregate Surrogate (AggS), Armwise Surrogate (ArmS), and their variants. For ERS, we consider two variants: ERS without matching (ERS-nm), which applies the horizontal regression of estimated ATEs to the entire reservoir (Algorithm \ref{alg:ers-unm}), and ERS with control arm mean outcome as extra covariates (ERS+) in the horizontal regression (Algorithm \ref{alg:ers-with-control}). For AggS and ArmS, we consider a variant with matching; AggS-m in Algorithm \ref{alg:agg-surr-variant1} and ArmS-m in \ref{alg:arm-surr-variant1}).

\subsection{Asymptotic regime, target estimand, and MSE decomposition}
We are treating experiments as units in this framework, so we establish results for ERS under a distribution for these experiments. In particular, as described in Section \ref{sec:our-method}, we adopt a superpopulation framework under the assumptions \eqref{ass:pop-dist} and \eqref{eq:individual_draw}. To focus on experiment-level variations, we assume the number of individuals in the experiments is sufficiently large such that the individual-level noise is negligible. Formally, we assume 
\begin{equation}\label{eq:infinite_sample_size_experiment}
\bar{Y}_{t,w}^{(i)} = m_{t,w}^{(i)} \,\, (w=0,1), \quad \widehat{\mathrm{ATE}}_t^{(i)} = \tau_{t}^{(i)}.
\end{equation}

\noindent Our estimand is $\tau_T^{(0)}$, the long-term effect of the target experiment. 

For notational convenience, we let  \[m_{\sh, w}^{(i)} = (m_{1,w}^{(i)}, \ldots, m_{t_0, w}^{(i)}), \,\, \tau_{\sh}^{(i)} = (\tau_1^{(i)}, \ldots, \tau_{t_0}^{(i)}),\] 
and let $\D$ denote all available experiment-level data:
\[\D^{(i)} = (m_{\sh,0}^{(i)}, \tau_{\sh}^{(i)}).\]
Given an estimator $\hat{\tau}  = \hat{g}_n(\D^{(0)})$, where $\hat{g}_n$ is estimated using the reservoir
\begin{equation}\label{eq:experiment_reservoir}
\mathcal{R} := \{(\D^{(i)}, m_{T,0}^{(i)}, \tau_{T}^{(i)}): i=1,\ldots,n\},
\end{equation}
we are interested in the MSE conditional on the experiment reservoir:
\[\mathrm{MSE}(\hat{\tau}) = \E^{(0)}\left[\left\{\hat{g}_n(\D^{(0)}) - \tau_T^{(0)}\right\}^2\right],\]
where $\E^{(0)}[\cdot] = \E[\cdot \mid \mathcal{R}]$; that is, $\E^{(0)}$ treats $\hat{g}_n$ as fixed and averages over the target experiment $(\D^{(0)}, \tau_T^{(0)})$. Similarly, $\Var^{(0)}$ and $\mathrm{Cov}^{(0)}$ denote conditional variance and covariance given $\mathcal{R}$.
In particular, it measures  average performance over all potential realizations of target experiments drawn from the superpopulation $F$. Note that $\mathrm{MSE}(\hat{\tau})$ is stochastic due to the random reservoir.

Since $\tau_T^{(0)}$ is a random variable, we shall generally not expect $\mathrm{MSE}(\hat{\tau})$ to vanish, unless $\tau_{T}^{(0)}$ is a deterministic function of $\D^{(0)}$. As a consequence, we cannot define (asymptotic) unbiasedness through consistency as usual. Instead, we consider an MSE decomposition that separates irreducible errors that purely arise from randomness of $\tau_T$ from other noise channels.
\begin{proposition}\label{prop:MSE_decomp}
Consider an estimator $\hat{\tau}  = \hat{g}_n(\D^{(0)})$. Assume that, as $n\rightarrow \infty$, $\Var^{(0)}\left(\hat{g}_n(\D^{(0)}) - g(\D^{(0)})\right) = o_\P(1)$ for some deterministic function $g$.  Then
\begin{align*}
& \mathrm{MSE}(\hat{\tau})  = \underbrace{\E^{(0)}\left[\Var\left(\tau_T^{(0)}\mid \D^{(0)}\right)\right]}_{\text{irreducible error}}\\
& \quad + \underbrace{\left(\E^{(0)}\left[\hat{g}_n(\D^{(0)})\right] - \E\left[\tau_T^{(0)}\right]\right)^2}_{\text{squared bias } =: \mathrm{Bias}^2(\hat{\tau})}\\
& \quad +  \underbrace{\Var^{(0)}\left(g(\D^{(0)}) - \E\left[\tau_T^{(0)}\mid \D^{(0)}\right]\right)}_{\text{approximation error} =: \mathrm{AE}(\hat{\tau})}\\
& \quad + \underbrace{O_\P\left(\sqrt{\Var^{(0)}[\hat{g}_n(\D^{(0)}) - g(\D^{(0)})]}\right)}_{\text{reducible error} =: \mathrm{RE}(\hat{\tau})}.
\end{align*}
In addition, if $\hat{g}_n(\D^{(0)}) = \tilde{g}_n(h(\D^{(0)}))$ and $g(\D^{(0)}) = \E[\tau_T^{(0)}\mid h(\D^{(0)})]$ for some deterministic function $h$, 
\begin{equation}\label{eq:RE_information_loss}
\mathrm{RE}(\hat{\tau}) = \Var^{(0)}\left(\hat{g}_n(\D^{(0)}) - g(\D^{(0)})\right).
\end{equation}

\end{proposition}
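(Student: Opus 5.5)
We prove this by a three-way orthogonal decomposition around the conditional mean $\mu(\D^{(0)}) := \E[\tau_T^{(0)}\mid \D^{(0)}]$. Write
\[
\hat g_n(\D^{(0)}) - \tau_T^{(0)} = \underbrace{\{\hat g_n(\D^{(0)}) - g(\D^{(0)})\}}_{A} + \underbrace{\{g(\D^{(0)}) - \mu(\D^{(0)})\}}_{B} + \underbrace{\{\mu(\D^{(0)}) - \tau_T^{(0)}\}}_{C}.
\]
Since the target experiment is independent of $\mathcal{R}$ (Assumption \ref{ass:pop-dist}), conditioning on $\mathcal{R}$ treats $\hat g_n$ as a fixed function. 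Hence $C$ satisfies $\E^{(0)}[C\mid \D^{(0)}]=0$, while $A$ and $B$ are functions of $\D^{(0)}$. The tower property then kills the cross terms $\E^{(0)}[AC]$ and $\E^{(0)}[BC]$. We also have $\E^{(0)}[C^2] = \E^{(0)}[\Var(\tau_T^{(0)}\mid\D^{(0)})]$, which is the irreducible error.

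\textbf{The $(A+B)$ part.} Next we split $\E^{(0)}[(A+B)^2]$ into its squared mean and its variance. The mean is $\E^{(0)}[A+B] = \E^{(0)}[\hat g_n(\D^{(0)})] - \E[\tau_T^{(0)}]$, because $\E^{(0)}[\mu(\D^{(0)})] = \E[\tau_T^{(0)}]$. Its square is exactly $\mathrm{Bias}^2(\hat\tau)$. The variance expands as
\[
\Var^{(0)}(A+B) = \Var^{(0)}(B) + \Var^{(0)}(A) + 2\,\mathrm{Cov}^{(0)}(A,B).
\]
Here $\Var^{(0)}(B)=\mathrm{AE}(\hat\tau)$. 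By Cauchy--Schwarz, $|\mathrm{Cov}^{(0)}(A,B)|\le \sqrt{\Var^{(0)}(A)}\sqrt{\mathrm{AE}(\hat\tau)}$.

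\textbf{The remainder and the main obstacle.} The main obstacle is getting the $O_\P$ remainder in the stated form, and it reduces to one point: $\mathrm{AE}$ must be bounded. This holds whenever $\tau_T$ and $g$ are square integrable, which I will state as an implicit moment condition. Then $\mathrm{AE}(\hat\tau)$ is a deterministic finite constant, since $g$ is deterministic and independent of $\mathcal{R}$. So the covariance term is $O_\P(\sqrt{\Var^{(0)}(A)})$. Because $\Var^{(0)}(A)=o_\P(1)$, we get $\Var^{(0)}(A)\le\sqrt{\Var^{(0)}(A)}$ with probability tending to one, so it too is absorbed into $O_\P(\sqrt{\Var^{(0)}(A)})$. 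This gives the first display.

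\textbf{The refinement \eqref{eq:RE_information_loss}.} Under the additional structure, set $U=h(\D^{(0)})$. Then $A = \tilde g_n(U) - \E[\tau_T^{(0)}\mid U]$ is $\sigma(U)$-measurable. Also $B = \E[\mu(\D^{(0)})\mid U] - \mu(\D^{(0)})$, so $\E^{(0)}[B\mid U]=0$ and therefore $\E^{(0)}[AB]=0$. Moreover $\E^{(0)}[B]=0$, so $\mathrm{Cov}^{(0)}(A,B)=0$ exactly. The cross term vanishes, the remainder is exactly $\Var^{(0)}(A)$, and this gives \eqref{eq:RE_information_loss}.
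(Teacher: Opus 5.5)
Your proposal is correct and follows essentially the same route as the paper. The paper also removes the irreducible error first via the tower property (your cross terms with $C$), then splits off the squared bias from $\Var^{(0)}(A+B)$, bounds $\mathrm{Cov}^{(0)}(A,B)$ by Cauchy--Schwarz, and for \eqref{eq:RE_information_loss} shows the covariance vanishes by conditioning on $h(\D^{(0)})$; your explicit points that $\mathrm{AE}(\hat\tau)$ must be a finite constant and that the $o_\P(1)$ hypothesis lets $\Var^{(0)}(A)$ be absorbed into $O_\P(\sqrt{\Var^{(0)}(A)})$ make precise two steps the paper uses only implicitly.
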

The decomposition in Proposition \ref{prop:MSE_decomp} lays the foundation for comparing methods. In particular, the irreducible error is constant across methods. The remaining terms characterize three factors that affect the performance of a method. 
\begin{itemize}
\item The squared bias, denoted by $\mathrm{Bias}^2(\hat{\tau})$, captures whether the estimate centers at the expected long-term effect in the population of experiments. 
\item The approximation error, denoted by $\mathrm{AE}(\hat{\tau})$, depends on the choice of the function class for $\hat{g}_n$. It is driven by misspecification or information loss. For example, if $\E[\tau_T^{(0)}\mid \D^{(0)}]$ is nonlinear in $\D^{(0)}$ but $\hat{g}_n$ is estimated from a linear class (misspecification), the term inside of the variance is non-zero. Similarly, if $\hat{g}_n$ is obtained through a flexible nonparametric method but only leverages a subset of information from $\D^{(0)}$ (information loss), the approximation error is also non-zero. 
\item The reducible error, denoted by $\mathrm{RE}(\hat{\tau})$, may have either sign. It arises from the finite sample of experiments and tends to vanish as the number of experiments $n\rightarrow \infty$. The bounds below apply to its absolute value. Equation \eqref{eq:RE_information_loss} gives a sharper bound under information loss than the generic bound under misspecification.
\end{itemize}

\subsection{Benefit of ATE regression: analysis of AggS, ArmS, and ERS-nm}\label{sec:bias-theory}

In this subsection we assume $\hat{f}$ is obtained by an unregularized linear regression. For notational convenience, we write $(m_{\sh, 0}, \tau_{\sh}, m_{T,0}, \tau_{T})$ without superscripts for a generic draw from $F$. Given a random variable $y\in \R$ and a random vector $X\in \R^d$, define the linear projection of $y$ on $X$ as 
\begin{equation}\label{eq:linear_projection}
\L(y\mid X) = \mu(y\mid X) + X'\beta(y\mid X), 
\end{equation}
where 
\[\left(\mu(y\mid X), \beta(y\mid X)'\right) := \argmin_{\mu\in \R, \beta\in \R^d}\E\left[\left\{y - \mu - X'\beta\right\}^2\right].\] 

We first show that ArmS and ERS without matching are asymptotically unbiased while AggS is asymptotically biased, under the following mild moment condition on $m_{t,w}^{(i)}$. 

\begin{assumption}\label{ass:moment}
For any $i\in [n], t\in [T]$ and $w\in \{0, 1\}$,
\[\E\left[\left(m_{t,w}^{(i)}\right)^4\right] < M, \quad \text{for some constant }M<\infty.\]
\end{assumption}

In all following theorems, the $O_\P$ terms are taken with respect to the variation induced by the experiment reservoir (that is, through $\hat{g}_n$).

\begin{theorem}\label{thm:bias_unmatched}
Assume Assumptions \ref{ass:pop-dist} and \ref{ass:moment} hold and the population Gram matrices satisfy
\begin{equation}\label{eq:unmatched_gram_condition}
\E[(1\,\,X')'(1\,\,X')]\succ0
\qquad\text{for }X\in\{\tau_{\sh},m_{\sh,0},m_{\sh,1}\}.
\end{equation}
\begin{itemize}
\item \textbf{Reducible error.}
\begin{equation}\label{eq:unmatched_RE_ERSnm}
|\mathrm{RE}(\hat{\tau}_{\AggS})|,\ |\mathrm{RE}(\hat{\tau}_{\ArmS})|
=O_\P\left(\frac{1}{\sqrt n}\right),\qquad
\mathrm{RE}(\hat{\tau}_{\ERSnm})=O_\P\left(\frac1n\right).
\end{equation}
\item \textbf{Square bias.}
\[\mathrm{Bias}^2(\hat{\tau}_{\ArmS}), \mathrm{Bias}^2(\hat{\tau}_{\ERSnm}) = O_\P\left(\frac{1}{n}\right), \qquad \mathrm{Bias}^2(\hat{\tau}_{\AggS}) = O_\P(1).\]
Without further restrictions on $F$, $\mathrm{Bias}(\hat{\tau}_{\AggS})$ is generally bounded away from zero.
If the long-term outcomes have the same linear projections on the short-term outcomes across the treatment and control groups, i.e.,
\begin{equation}\label{eq:experiment_surrogacy}
\left(\mu(m_{T,1}\mid m_{\sh, 1}), \beta(m_{T,1}\mid m_{\sh, 1})'\right) = \left(\mu(m_{T,0}\mid m_{\sh, 0}), \beta(m_{T,0}\mid m_{\sh, 0})'\right),
\end{equation}
then 
\[\mathrm{Bias}^2(\hat{\tau}_{\AggS}) = O_\P\left(\frac{1}{n}\right).\]
Without further constraints on the distribution of $m_{\sh, w}$, \eqref{eq:experiment_surrogacy} is necessary for the bias of AggS to vanish, i.e., $\mathrm{Bias}^2(\hat{\tau}_{\AggS}) = o_\P(1)$.
\item \textbf{Approximation error. }
\begin{align*}
\mathrm{AE}(\hat{\tau}_{\ArmS}) &= \Var(\L(m_{T,1}\mid m_{\sh, 1}) - \L(m_{T,0}\mid m_{\sh, 0}) - \E[\tau_T\mid \tau_{\sh}, m_{\sh, 0}]).
\end{align*}
and 
\begin{align*}
\mathrm{AE}(\hat{\tau}_{\ERSnm}) &= \Var(\L(\tau_T\mid \tau_{\sh}) - \E[\tau_T\mid \tau_{\sh}, m_{\sh,0}]).
\end{align*}
Under \eqref{eq:experiment_surrogacy}, 
\begin{align*}
\mathrm{AE}(\hat{\tau}_{\ERSnm}) \le \mathrm{AE}(\hat{\tau}_{\AggS}) & = \Var(\tau_{\sh}' \beta(m_{T,0}\mid m_{\sh, 0})- \E[\tau_T\mid \tau_{\sh}, m_{\sh, 0}]).
\end{align*}
\end{itemize}
\end{theorem}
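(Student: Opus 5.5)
The plan is to identify, for each of the three estimators, the deterministic population limit $g$ of the fitted predictor $\hat g_n$. We then read off the bias, approximation error and reducible error from Proposition~\ref{prop:MSE_decomp}. Under \eqref{eq:infinite_sample_size_experiment} each method is OLS on i.i.d.\ rows (Assumption~\ref{ass:pop-dist}).

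For ERS-nm the rows are $(1,\tau_{\sh}^{(i)})$ with response $\tau_T^{(i)}$. For ArmS there is one OLS fit per arm. For AggS there is one OLS fit on the $2n$ stacked rows; these are i.i.d.\ across experiments, though the two rows within an experiment are dependent. Assumption~\ref{ass:moment} together with the Gram condition \eqref{eq:unmatched_gram_condition} gives, by the LLN and CLT, $\hat\mu-\mu^*=O_\P(n^{-1/2})$ and $\hat\beta-\beta^*=O_\P(n^{-1/2})$. The population coefficients are as follows:
\begin{itemize}
\item for ERS-nm, $(\mu^*,\beta^*)=(\mu(\tau_T\mid\tau_{\sh}),\beta(\tau_T\mid\tau_{\sh}))$;
\item for ArmS, the arm-wise projection coefficients;
\item for AggS, the projection of the stacked response on the stacked regressor, i.e.\ the minimizer of $\tfrac12\sum_w\E[(m_{T,w}-\mu-m_{\sh,w}'\beta)^2]$, with output $\beta_{\mathrm{agg}}'\tau_{\sh}$. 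The intercepts cancel here.
\end{itemize}
Taking $g(\D^{(0)})$ to be the corresponding linear map with these coefficients, we get $\hat g_n-g=(\hat\mu-\mu^*)+(\hat\beta-\beta^*)'X^{(0)}$. Hence $\Var^{(0)}(\hat g_n-g)=(\hat\beta-\beta^*)'\Var(X)(\hat\beta-\beta^*)=O_\P(1/n)$, which verifies the hypothesis of Proposition~\ref{prop:MSE_decomp}.

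\textbf{Reducible error.} For AggS and ArmS the generic $O_\P(\sqrt{\Var^{(0)}})$ bound gives $O_\P(n^{-1/2})$. For ERS-nm I would not use the generic bound. Instead, I would expand the MSE exactly as $\E^{(0)}[(\tau_T-g)^2]+2\E^{(0)}[(\tau_T-g)(g-\hat g_n)]+\E^{(0)}[(g-\hat g_n)^2]$. The cross term vanishes because $g=\L(\tau_T\mid\tau_{\sh})$ and $\hat g_n-g$ is affine in $\tau_{\sh}^{(0)}$, and the projection residual is orthogonal to $1$ and $\tau_{\sh}$. What remains is $\E^{(0)}[(\hat g_n-g)^2]=O_\P(1/n)$. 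This is the same mechanism as \eqref{eq:RE_information_loss}, with orthogonality to linear functions replacing conditional-mean orthogonality. The main subtlety is to define $\mathrm{RE}$ so that it absorbs the squared mean of $\hat g_n-g$; the bias term then carries that same $O_\P(1/n)$ piece, consistent with the bias claim below.

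\textbf{Bias.} For the bias I would compute $\E^{(0)}[\hat g_n]-\E[\tau_T]=\E[g]-\E[\tau_T]+\E^{(0)}[\hat g_n-g]$. The last term is $O_\P(n^{-1/2})$, so its square is $O_\P(1/n)$. For ERS-nm, $\E[\L(\tau_T\mid\tau_{\sh})]=\E[\tau_T]$ because of the intercept. For ArmS, $\E[g]=\E[m_{T,1}]-\E[m_{T,0}]=\E[\tau_T]$ for the same reason. For AggS, $\E[g]=\beta_{\mathrm{agg}}'\E[\tau_{\sh}]$, while $\E[\tau_T]=\mu_1-\mu_0+\beta_1'\E[m_{\sh,1}]-\beta_0'\E[m_{\sh,0}]$, writing $(\mu_w,\beta_w)$ for the armwise projection coefficients. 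Under \eqref{eq:experiment_surrogacy} we have $\mu_1=\mu_0$ and $\beta_1=\beta_0$, so the stacked population problem is minimized by that common coefficient and $\beta_{\mathrm{agg}}=\beta_0$. The two expressions then agree, giving $O_\P(1/n)$. For necessity I would argue by contradiction. If $\mathrm{Bias}\to0$ for all admissible distributions of $m_{\sh,w}$, then I would perturb the means $\E[m_{\sh,w}]$ while holding the conditional relation of the long-term outcome to the short-term outcomes fixed, so that the armwise coefficients $(\mu_w,\beta_w)$ stay put. The bias is affine in these means, and it must vanish identically in them. That forces $\mu_1=\mu_0$ and $\beta_1=\beta_{\mathrm{agg}}=\beta_0$. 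Making this perturbation argument rigorous is the step I expect to be most delicate: I must ensure that shifting means changes $\beta_{\mathrm{agg}}$ only in a controlled way (e.g.\ by shifting both arms equally, which leaves the within-arm covariances unchanged), and a concrete counterexample family is likely cleanest.

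\textbf{Approximation error.} The AE formulas follow by substituting the corresponding $g$ into the definition, with $\E[\tau_T\mid\D^{(0)}]=\E[\tau_T\mid\tau_{\sh},m_{\sh,0}]$. Under \eqref{eq:experiment_surrogacy} the AggS limit is $\tau_{\sh}'\beta_0$. The inequality $\mathrm{AE}(\ERSnm)\le\mathrm{AE}(\AggS)$ follows because $\L(\tau_T\mid\tau_{\sh})$ is the $L^2$ projection of $\E[\tau_T\mid\tau_{\sh},m_{\sh,0}]$ onto affine functions of $\tau_{\sh}$. It therefore minimizes $\Var(\ell(\tau_{\sh})-\E[\tau_T\mid\tau_{\sh},m_{\sh,0}])$ over affine $\ell$, and $\tau_{\sh}'\beta_0$ is one such affine function.
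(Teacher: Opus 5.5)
Your proposal is correct and follows the paper's proof closely, with one step argued differently. The shared ingredients are:
\begin{itemize}
\item the same population limits $g$ (your pooled coefficient is the paper's $A_1(\mu_1,\beta_1')'+A_0(\mu_0,\beta_0')'$);
\item the same $\sqrt n$-consistency of OLS and the generic Cauchy--Schwarz bound for AggS and ArmS;
\item the same orthogonality of the projection residual to $(1,\tau_{\sh}')$ for the sharper ERS-nm bound;
\item projection optimality for the AE inequality. Here you project $\E[\tau_T\mid\D]$ directly, which is slightly cleaner than the paper's detour through $\E[\tau_T\mid\tau_{\sh}]$.
\end{itemize}

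One bookkeeping remark: RE is not yours to define. It is fixed by \eqref{eq:MSE_step3}, whose covariance term vanishes here, so $\mathrm{RE}(\hat\tau_{\ERSnm})=\Var^{(0)}(\hat g_n-g)$ exactly. The squared mean of $\hat g_n-g$ is exactly $\mathrm{Bias}^2$ (because $\E[g]=\E[\tau_T]$), not part of RE. The rate is unaffected.

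The genuinely different step is necessity. The paper lets $A_1$ range over an open set, while you shift means. The AggS bias is not affine in the means in general, but your equal-shift variant does work. Set $\nu_w=\E[m_{\sh,w}]$, $\Sigma_w=\Var(m_{\sh,w})$, $\delta=\nu_1-\nu_0$ and $\bar\beta=(\Sigma_0+\Sigma_1)^{-1}(\Sigma_0\beta_0+\Sigma_1\beta_1)$. Profiling out the pooled intercept gives
\[
\delta'\beta_{\mathrm{agg}}-\E[\tau_T]=\frac{\delta'\bar\beta-\E[\tau_T]}{1+\tfrac12\,\delta'(\Sigma_0+\Sigma_1)^{-1}\delta}.
\]
Now shift both $m_{\sh,w}$ by $a$ and $m_{T,w}$ by $a'\beta_w$. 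This leaves $\delta$, $\Sigma_w$, $\bar\beta$ and $(\mu_w,\beta_w)$ unchanged and moves $\E[\tau_T]$ by $a'(\beta_1-\beta_0)$. The numerator is therefore affine in $a$, and vanishing for all $a$ forces $\beta_1=\beta_0$. The numerator then reduces to $\mu_0-\mu_1$, which forces $\mu_1=\mu_0$. This makes the ``without further constraints'' claim more explicit than the paper's free-parameter argument.
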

Theorem \ref{thm:bias_unmatched}  implies that (a) ERS-nm has a lower upper bound on reducible error than ArmS and AggS; (b) AggS is biased in general unless an experiment-level surrogacy assumption \eqref{eq:experiment_surrogacy} holds while the other two methods are approximately unbiased; (c) even when AggS is approximately unbiased, ERS-nm still has a smaller approximation error. All of these results suggest that the ATE regression in ERS-nm is superior to the pooled outcome regression in AggS. 

To understand the benefit of ATE regression over arm-wise outcome regression, we need to compare the approximation errors of ERS-nm and ArmS.

\begin{theorem}\label{thm:AE_comparison_ERSnm_ArmS}
Assume Assumptions \ref{ass:pop-dist} and \ref{ass:moment} hold. If
\begin{equation}\label{eq:AE_projection_condition}
\L(\tau_T\mid\tau_{\sh},m_{\sh,0})
=\L(\tau_T\mid\tau_{\sh})\qquad\text{a.s.},
\end{equation}
then
\begin{equation}\label{eq:AE_projection_comparison}
\mathrm{AE}(\hat{\tau}_{\ArmS})
\ge \mathrm{AE}(\hat{\tau}_{\ERSnm}).
\end{equation}
In particular, \eqref{eq:AE_projection_condition} holds if Assumption \ref{ass:cond-indep-tau-alpha} holds and $\E[\tau_T\mid\tau_{\sh}]$ is linear in $\tau_{\sh}$.
\end{theorem}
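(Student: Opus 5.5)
We take the approximation-error formulas of Theorem \ref{thm:bias_unmatched} as the starting point and write both errors as variances of a linear predictor minus the common target $\E[\tau_T\mid \tau_{\sh}, m_{\sh,0}]$. Set $Z=(\tau_{\sh}, m_{\sh,0})$ and let $c(Z)=\E[\tau_T\mid Z]$. Theorem \ref{thm:bias_unmatched} gives
\begin{equation*}
\mathrm{AE}(\hat\tau_{\ERSnm})=\Var(\L(\tau_T\mid\tau_{\sh})-c(Z)).
\end{equation*}
Define $P:=\L(m_{T,1}\mid m_{\sh,1})-\L(m_{T,0}\mid m_{\sh,0})$. Then
\begin{equation*}
\mathrm{AE}(\hat\tau_{\ArmS})=\Var(P-c(Z)).
\end{equation*}
Since $m_{\sh,1}=m_{\sh,0}+\tau_{\sh}$, the quantity $P$ is an affine function of $Z$. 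The whole argument is a Pythagorean decomposition in $L^2$ with respect to the span of affine functions of $Z$.

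\textbf{Step 1.} Write $L:=\L(\tau_T\mid Z)$, the best affine predictor of $\tau_T$ based on $Z$. Because $c(Z)$ is the conditional mean, $L$ is also the $L^2$ projection of $c(Z)$ onto affine functions of $Z$. So $c(Z)-L$ is mean zero and uncorrelated with every affine function $A$ of $Z$. For any such $A$ this gives
\begin{equation*}
\Var(A-c(Z))=\Var(A-L)+\Var(L-c(Z))\ge \Var(L-c(Z)).
\end{equation*}
We need $\E[c(Z)^2]<\infty$ and square integrability of $Z$; both follow from Assumption \ref{ass:moment}, since $\tau_t=m_{t,1}-m_{t,0}$. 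Applying this bound with $A=P$ yields $\mathrm{AE}(\hat\tau_{\ArmS})\ge \Var(L-c(Z))$.

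\textbf{Step 2.} Under \eqref{eq:AE_projection_condition} we have $L=\L(\tau_T\mid\tau_{\sh})$ almost surely. Hence $\Var(L-c(Z))=\mathrm{AE}(\hat\tau_{\ERSnm})$, which proves \eqref{eq:AE_projection_comparison}. One technical point needs care: if the Gram matrix of $Z$ is singular, the projection coefficients may not be unique. The fitted projection is still unique as an element of $L^2$, so the argument is unaffected.

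\textbf{Step 3 (the ``in particular'' claim).} Assume Assumption \ref{ass:cond-indep-tau-alpha} and that $\E[\tau_T\mid\tau_{\sh}]=a+b'\tau_{\sh}$. Then $c(Z)=\E[\tau_T\mid\tau_{\sh}]=a+b'\tau_{\sh}$ is itself affine in $Z$, so $\L(\tau_T\mid Z)=a+b'\tau_{\sh}$. Being affine in $\tau_{\sh}$ and equal to a conditional mean, it also equals $\L(\tau_T\mid\tau_{\sh})$.

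The main obstacle is Step 1: we must correctly identify $\L(\tau_T\mid Z)$ as the projection of $c(Z)$, and not only of $\tau_T$, which requires the tower property. We must also check that $P$ genuinely lies in the affine span of $Z$. Everything else is bookkeeping.
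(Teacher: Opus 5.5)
Your proposal is correct and is essentially the paper's argument. Both proofs use two facts: $P$ is affine in $\D$ because $m_{\sh,1}=m_{\sh,0}+\tau_{\sh}$, and under \eqref{eq:AE_projection_condition} the predictor $\L(\tau_T\mid\tau_{\sh})$ is the affine projection of $\tau_T$, and hence of $\E[\tau_T\mid\D]$, onto $\D$; the cross term then vanishes and $\mathrm{AE}(\hat\tau_{\ArmS})-\mathrm{AE}(\hat\tau_{\ERSnm})$ is a nonnegative square. The only cosmetic difference is ordering: you first prove the unconditional bound $\mathrm{AE}(\hat\tau_{\ArmS})\ge\Var(\L(\tau_T\mid\D)-\E[\tau_T\mid\D])$ and then invoke the condition, whereas the paper invokes the condition first and expands around $\L(\tau_T\mid\tau_{\sh})$.
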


Theorem \ref{thm:AE_comparison_ERSnm_ArmS} gives a weak approximation-error comparison under the linear-projection condition \eqref{eq:AE_projection_condition}. Assumption \ref{ass:cond-indep-tau-alpha} alone does not imply this linear-projection condition. Small short-term ATEs and weak dependence between $\tau_{\sh}$ and $m_{\sh,0}$ alone do not guarantee the approximation-error ordering. The proof is in Appendix \ref{subapp:proof_thm_AE_comparison_ERSnm_ArmS}.

\subsection{Benefit of matching: analysis of AggS-m, ArmS-m, and ERS}\label{sec:theory-nonparam}

In this subsection we quantify the benefit of matching. In particular, we derive the MSE decomposition for ERS (Algorithm \ref{alg:ers}), AggS with matching (AggS-m), and ArmS with matching (ArmS-m). 
As mentioned below Definition \ref{def:matching-alg}, instead of sticking with the nearest neighbor version, we analyze the matching procedure that includes all experiments with $\delta^{(i)}\le h$ where $h$ is the bandwidth parameter. 

We start with the technical assumptions. 

\begin{assumption}\label{ass:local-linear-regularity}
For each pair of short- and long-term outcomes $(S,Y)$, let $\mathcal S\subset\R^q$ be its support. Assume $\mathcal S$ is compact and convex with nonempty interior. Take an open neighborhood $\mathcal O\supset\mathcal S$.
\begin{itemize}
\item The density $f$ on $\mathcal S$ and $\mu(m)=\E[Y\mid S=m]$ have extensions to $\mathcal O$ with bounded and continuous third-order derivatives with $\inf_{m\in\mathcal S}f(m)>0$.
\item $\sigma^2(m)=\Var(Y\mid S=m)$ is bounded and uniformly continuous on $\mathcal S$.
\item $\sup_{m\in\mathcal S}\E[(Y-\mu(S))^4\mid S=m]<\infty$.
\end{itemize}
\end{assumption}

\begin{assumption}\label{ass:matching-bandwidth}
For each local-linear regression of fixed dimension $q$,
\begin{equation}\label{eq:matching_bandwidth}
h=h_n\to0,
\qquad
\frac{nh^q}{\log n}\to\infty,
\qquad
nh^{q+2}\to\infty.
\end{equation}
\end{assumption}

The next theorem establishes bounds on the reducible error, squared bias, and approximation error of ERS, ArmS-m, and AggS-m.

\begin{theorem}\label{thm:decomp_matching}
Assume Assumptions \ref{ass:pop-dist}, \ref{ass:local-linear-regularity}, and \ref{ass:matching-bandwidth} hold with $q=t_0$.
\begin{itemize}
\item \textbf{Reducible error.}
\begin{align}
|\mathrm{RE}(\hat\tau_{\ERS})|
&=O_\P\left(h^4+\frac{1}{nh^{t_0}}\right),
\label{eq:matching_RE_ERS}\\
|\mathrm{RE}(\hat\tau_{\ArmSm})|,\ |\mathrm{RE}(\hat\tau_{\AggSm})|
&=O_\P\left(h^2+\frac{1}{\sqrt{nh^{t_0}}}\right).
\label{eq:matching_RE_arm_pooled}
\end{align}
\item \textbf{Square bias.} Let $b_{0,G}$ be defined in \eqref{eq:matching_pooled_bias_calculation} in Appendix \ref{subapp:proof_thm_decomp_matching}.
\begin{align}
\mathrm{Bias}^2(\hat\tau_{\ERS}),\ \mathrm{Bias}^2(\hat\tau_{\ArmSm})
&=O_\P\left(h^4+\frac{1}{n}\right),
\label{eq:matching_bias_ERS_arm}\\
\mathrm{Bias}^2(\hat\tau_{\AggSm})
&=
\begin{cases}
O_\P(h^4+n^{-1}), & b_{0,G}=0,\\
b_{0,G}^2+O_\P(h^2+n^{-1/2}), & b_{0,G}\neq0.
\end{cases}
\label{eq:matching_bias_pooled}
\end{align}
The constant $b_{0,G}$ is generally nonzero.
\item \textbf{Approximation error.} Under Assumption \ref{ass:cond-indep-tau-alpha},
\begin{equation}\label{eq:AE_matching}
\mathrm{AE}(\hat\tau_{\ERS}) = 0,
\qquad
\mathrm{AE}(\hat\tau_{\ArmSm}), \mathrm{AE}(\hat\tau_{\AggSm}) = O(1).
\end{equation}
The approximation errors of ArmS-m and AggS-m need not vanish as $h\to0$.
\end{itemize}
\end{theorem}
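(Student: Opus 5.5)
We treat each of the three matched methods as a local-linear regression, with a uniform kernel of bandwidth $h$, of a long-term quantity on a $t_0$-dimensional short-term covariate. We then combine standard local-linear asymptotics with Proposition \ref{prop:MSE_decomp}. For ERS the pair $(S,Y)$ is $(\tau_{\sh},\tau_T)$. For ArmS-m the pairs are $(m_{\sh,w},m_{T,w})$, one per arm. For AggS-m the single fit uses $(m_{\sh,w},m_{T,w})$ pooled over $w$, so its design density is the mixture $G=\tfrac12(F_{m_{\sh,0}}+F_{m_{\sh,1}})$. Under Assumptions \ref{ass:local-linear-regularity} and \ref{ass:matching-bandwidth} with $q=t_0$, we use the uniform expansion for local-linear estimators:
\[
\sup_{s\in\mathcal S}\bigl|\hat\mu_h(s)-\mu(s)-h^2 B(s)\bigr| = O_\P\!\left(h^3+\sqrt{\tfrac{\log n}{nh^{t_0}}}\right),
\]
where $B(s)=\tfrac12\kappa_2\,\mathrm{tr}\nabla^2\mu(s)$. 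We also use a pointwise variance of order $(nh^{t_0})^{-1}$. The condition $nh^{t_0}/\log n\to\infty$ makes the stochastic term uniformly small, and $nh^{t_0+2}\to\infty$ lets boundary effects and design-matrix invertibility be controlled uniformly; convexity and compactness of $\mathcal S$ handle the boundary.

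\emph{ERS.} Take $g(\D^{(0)})=\E[\tau_T\mid\tau_{\sh}]$. By Assumption \ref{ass:cond-indep-tau-alpha} this equals $\E[\tau_T\mid\D^{(0)}]$, so $\mathrm{AE}(\hat\tau_{\ERS})=0$. Since $\hat g_n$ depends on $\D^{(0)}$ only through $h(\D^{(0)})=\tau_{\sh}$, \eqref{eq:RE_information_loss} applies. It gives $\mathrm{RE}=\Var^{(0)}(\hat\mu_h(\tau_{\sh})-\mu(\tau_{\sh}))\le \E^{(0)}[(\hat\mu_h-\mu)^2]$. We bound this by an integrated squared bias $O(h^4)$ plus an integrated variance $O((nh^{t_0})^{-1})$. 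The variance part comes from the conditional-on-design variance of the local-linear weights, integrated against $f$; this is where the fourth-moment bound and the bounded $\sigma^2$ enter. This gives \eqref{eq:matching_RE_ERS}.

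\emph{Bias of ERS.} The bias is $\E^{(0)}[\hat\mu_h(\tau_{\sh})]-\E[\mu(\tau_{\sh})]$. We split it as $\int(\E[\hat\mu_h\mid\text{design}]-\mu)f + \int(\hat\mu_h-\E[\hat\mu_h\mid\text{design}])f$. The first term is $O(h^2)$. The second is a weighted average of $n$ independent centered residuals whose integrated weights are $O(1/n)$ each, so it is $O_\P(n^{-1/2})$. Squaring gives $O_\P(h^4+n^{-1})$.

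\emph{ArmS-m.} The same two bias terms, applied to each arm and subtracted, give the same bias rate. Its reducible error uses only the generic $O_\P(\sqrt{\Var})$ bound of Proposition \ref{prop:MSE_decomp}, with $\sqrt{h^4+(nh^{t_0})^{-1}}$, which gives \eqref{eq:matching_RE_arm_pooled}. Here $g=\mu_1(m_{\sh,1})-\mu_0(m_{\sh,0})$, which is not a conditional expectation of $\tau_T$ given a function of $\D^{(0)}$ that $\hat g_n$ factors through, so the sharper identity is unavailable. Its approximation error is $\Var(\mu_1(m_{\sh,1})-\mu_0(m_{\sh,0})-\E[\tau_T\mid\tau_{\sh}])$. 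This is $O(1)$ by the moment bounds and does not depend on $h$. To show it need not vanish, we construct an $F$ in which $\E[m_{T,1}\mid m_{\sh,1}]-\E[m_{T,0}\mid m_{\sh,0}]\ne\E[\tau_T\mid\tau_{\sh}]$, for instance by letting $m_{T,0}$ depend nonlinearly on $m_{\sh,0}$.

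\emph{AggS-m.} The pooled fit converges to $\mu_G(s)=\E_G[Y\mid S=s]$, a density-weighted mixture of $\mu_0$ and $\mu_1$. The limiting bias is $b_{0,G}=\E[\mu_G(m_{\sh,1})-\mu_G(m_{\sh,0})]-\E[\tau_T]$, which is generally nonzero. When $b_{0,G}=0$ the ERS argument gives $O_\P(h^4+n^{-1})$. When $b_{0,G}\neq0$ we expand $(b_{0,G}+O_\P(h^2+n^{-1/2}))^2$, which gives $b_{0,G}^2+O_\P(h^2+n^{-1/2})$. The reducible error and the $O(1)$ approximation error are handled as for ArmS-m.

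We expect the main obstacle to be the uniform local-linear expansion over $\mathcal S$, including the boundary, together with the sharper $O(n^{-1/2})$ rate for the integrated stochastic term that yields the $n^{-1}$ in the squared bias. We would first try to cite standard results, following Fan--Gijbels and Masry's uniform rates, and to verify their conditions from Assumptions \ref{ass:local-linear-regularity} and \ref{ass:matching-bandwidth}.
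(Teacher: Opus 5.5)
Your architecture matches the paper's. You feed integrated local-linear rates into Proposition~\ref{prop:MSE_decomp}, use the identity \eqref{eq:RE_information_loss} for ERS, use the Cauchy--Schwarz bound with $\mathrm{AE}=O(1)$ for ArmS-m and AggS-m, and expand $(b_{0,G}+\E^{(0)}[e_{\AggSm}])^2$. Your route to the integrated mean error $O_\P(h^2+n^{-1/2})$ differs from the paper's. You condition on the design and use $\int|w_i|\,dF\lesssim 1/n$ for the local-linear weights. The paper instead replaces $Q_{n,h}(s)^{-1}$ by the deterministic $\E[Q_{n,h}(s)]^{-1}$, so the main term is an i.i.d.\ average $\frac1n\sum_i\psi_{i,h}$. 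It then bounds the remainder by $O_\P((nh^{t_0})^{-1})=o_\P(h^2)$; that is the role of $nh^{t_0+2}\to\infty$, not boundary control. For ERS and ArmS-m, where each fit uses one i.i.d.\ row per experiment, your conditional argument is valid and even dispenses with that condition.

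\textbf{The gap is AggS-m.} The pooled fit uses two rows per experiment, $(m_{\sh,0}^{(i)},m_{T,0}^{(i)})$ and $(m_{\sh,1}^{(i)},m_{T,1}^{(i)})$. These rows are dependent and have different laws, so neither the i.i.d.\ results you plan to cite nor ``the ERS argument'' carry over.
\begin{itemize}
\item Conditional on the full design, $\E[m_{T,w}^{(i)}\mid\text{design}]=\E[m_{T,w}^{(i)}\mid m_{\sh,0}^{(i)},m_{\sh,1}^{(i)}]$, which is not $r_G(m_{\sh,w}^{(i)})$.
\item So your first term $\int(\E[\hat\mu_h\mid\text{design}]-\mu_G)\,dF$ is not $O(h^2)$. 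It carries a design-driven fluctuation of pointwise order $(nh^{t_0})^{-1/2}$.
\item That fluctuation is not conditionally centered, so your conditional-variance trick cannot show it integrates to $O_\P(n^{-1/2})$.
\end{itemize}
Only a marginal identity is available. After a random permutation of the two rows within each experiment, $\E[Y_{i,j}\mid S_{i,j}]=r_G(S_{i,j})$ with design density $f_G$, and independence holds only across experiments. You therefore need a clustered local-linear lemma that never conditions on the design. This is Lemma~\ref{lem:matching_local_linear_rates} with $J=2$, proved through the deterministic-inverse decomposition above, where $nh^{t_0+2}\to\infty$ becomes essential. It is what delivers both cases of \eqref{eq:matching_bias_pooled}.

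A secondary issue concerns your uniform expansion with the interior constant $B(s)=\tfrac12\kappa_2\mathrm{tr}\nabla^2\mu(s)$. It is wrong within distance $h$ of $\partial\mathcal S$. Only a uniform $O(h^2)$ bound is needed, and that does hold. But it rests on $\inf_{s\in\mathcal S}\lambda_{\min}(\E[Q_{n,h}(s)])>0$ holding up to the boundary, which you defer to citations. The paper proves it with the inscribed-ball argument \eqref{eq:matching_support_ball}, using convexity of $\mathcal S$ and $\inf f>0$.
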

The proof is in Appendix \ref{subapp:proof_thm_decomp_matching}. Theorem \ref{thm:decomp_matching} gives three comparisons. (a) ERS has a smaller upper bound on reducible error than ArmS-m and AggS-m. In particular, using the optimal bandwidth for each method in \eqref{eq:matching_RE_ERS} and \eqref{eq:matching_RE_arm_pooled} gives $O_\P(n^{-4/(t_0+4)})$ for ERS and $O_\P(n^{-2/(t_0+4)})$ for ArmS-m and AggS-m. (b) ERS and ArmS-m are asymptotically unbiased, whereas AggS-m remains biased in general. (c) Under Assumption \ref{ass:cond-indep-tau-alpha}, ERS has zero approximation error, while ArmS-m and AggS-m generally have positive approximation errors. Therefore, under these assumptions, ERS offers advantages in both robustness and efficiency over ArmS-m and AggS-m.

Compared with ERS-nm in Theorem \ref{thm:bias_unmatched}, ERS replaces the global linear projection $\L(\tau_T\mid\tau_{\sh})$ with local fits that estimate $\E[\tau_T\mid\tau_{\sh}]$. Under Assumption \ref{ass:cond-indep-tau-alpha}, this eliminates approximation error from nonlinear relationship between the long- and short-term outcomes. Under Assumption \ref{ass:cond-indep-tau-alpha} and the conditions of Theorems \ref{thm:bias_unmatched} and \ref{thm:decomp_matching}, Proposition \ref{prop:MSE_decomp} implies that ERS has a strictly lower MSE asymptotically when the approximation error of ERS-nm is bounded away from zero.

\subsection{Do control outcome levels help? Analysis of ERS+}\label{sec:theory-w/-control}

One source of information that ERS leaves unused is the short-term control means. ERS+ augments the short-term ATEs with the short-term control means, using all covariates in $\D$ to predict $\tau_T$; see Algorithm \ref{alg:ers-with-control}. Write $h_+$ for its bandwidth and define
\begin{equation}\label{eq:ersplus_regression}
r_+(\D)=\E[\tau_T\mid\D],
\qquad
\hat\tau_{\ERSplus}=\hat r_+(\D^{(0)}),
\end{equation}
where $\hat r_+$ is the local-linear regression of $\tau_T$ on $\D$. Adding control means doubles the covariate dimension from $t_0$ to $2t_0$, making matching more difficult. The following result establishes the MSE decomposition for ERS+.

\begin{theorem}\label{thm:decomp_ersplus}
Assume Assumption \ref{ass:pop-dist} holds, and Assumptions \ref{ass:local-linear-regularity} and \ref{ass:matching-bandwidth} hold for the regression $(\D,\tau_T)$ with $q=2t_0$ and bandwidth $h_+$.
\begin{itemize}
\item \textbf{Reducible error.}
\begin{equation}\label{eq:ersplus_RE}
|\mathrm{RE}(\hat\tau_{\ERSplus})|
=O_\P\left(h_+^4+\frac{1}{nh_+^{2t_0}}\right).
\end{equation}
\item \textbf{Square bias.}
\begin{equation}\label{eq:ersplus_bias}
\mathrm{Bias}^2(\hat\tau_{\ERSplus})
=O_\P\left(h_+^4+\frac1n\right).
\end{equation}
\item \textbf{Approximation error.}
\begin{equation}\label{eq:ersplus_AE}
\mathrm{AE}(\hat\tau_{\ERSplus})=0.
\end{equation}
This equality does not require Assumption \ref{ass:cond-indep-tau-alpha}.
\end{itemize}
\end{theorem}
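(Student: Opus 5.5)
The plan is to reduce Theorem \ref{thm:decomp_ersplus} to the ERS part of Theorem \ref{thm:decomp_matching} by observing that ERS+ is exactly ERS with the covariate vector $\tau_{\sh}$ replaced by $\D=(m_{\sh,0},\tau_{\sh})\in\R^{2t_0}$. The proof of the ERS bounds uses only three ingredients. The first is the standard local-linear regression expansion under Assumptions \ref{ass:local-linear-regularity} and \ref{ass:matching-bandwidth}, giving bias $O(h^2)$ and variance $O(1/(nh^q))$. The second is the MSE decomposition of Proposition \ref{prop:MSE_decomp}. The third is the identification of the limit function $g$. These ingredients are dimension-agnostic. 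I would therefore rerun that argument with $q=2t_0$, bandwidth $h_+$, and target $r_+(\D)=\E[\tau_T\mid \D]$. Under \eqref{eq:infinite_sample_size_experiment}, the matched reservoir consists of exact draws of $(\D^{(i)},\tau_T^{(i)})$, so no measurement error enters.

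\textbf{Step 1: the limit function and approximation error.} I would take $g=r_+$. Then $g(\D^{(0)})-\E[\tau_T^{(0)}\mid\D^{(0)}]\equiv 0$, so $\mathrm{AE}(\hat\tau_{\ERSplus})=\Var^{(0)}(0)=0$. This uses no assumption beyond the definition of $r_+$; in particular Assumption \ref{ass:cond-indep-tau-alpha} is not needed. In ERS that assumption was needed to equate $\E[\tau_T\mid\tau_{\sh}]$ with $\E[\tau_T\mid\D]$. Here the regression already conditions on all of $\D$, so this gives \eqref{eq:ersplus_AE}.

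\textbf{Step 2: reducible error.} We have $g=\E[\tau_T\mid h(\D)]$ with $h$ the identity, and $\hat g_n=\tilde g_n(\D)$. Hence the sharper form \eqref{eq:RE_information_loss} applies, and $\mathrm{RE}=\Var^{(0)}(\hat r_+(\D^{(0)})-r_+(\D^{(0)}))\le \E^{(0)}[(\hat r_+-r_+)^2(\D^{(0)})]$, the integrated squared error of the local-linear estimator. Standard pointwise and uniform expansions give the following. The conditional bias is $\tfrac{h_+^2}{2}\mathrm{tr}(\nabla^2 r_+)\kappa_2+O(h_+^3)$, using bounded third derivatives; the interior/boundary distinction only affects constants because local-linear fitting is boundary-adaptive and $\mathcal S$ is convex. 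The conditional variance is $O(1/(nh_+^{2t_0}))$, using the density lower bound and bounded $\sigma^2$. Integrating gives $O_\P(h_+^4+1/(nh_+^{2t_0}))$, which is \eqref{eq:ersplus_RE}. The condition $nh_+^{2t_0}/\log n\to\infty$ ensures that the local Gram matrices are invertible uniformly with probability tending to one, so the estimator is well defined on all of $\mathcal S$.

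\textbf{Step 3: squared bias.} Write $\E^{(0)}[\hat r_+(\D^{(0)})]-\E[\tau_T]=\E^{(0)}[\hat r_+-r_+]$, since $\E[r_+(\D)]=\E[\tau_T]$. Split $\hat r_+-r_+$ into a deterministic smoothing bias term, which is $O(h_+^2)$ uniformly, and a stochastic linear term $\frac1n\sum_i w_{h}(\D^{(i)},\cdot)\varepsilon_i$ with $\varepsilon_i=\tau_T^{(i)}-r_+(\D^{(i)})$. Averaging the stochastic term over $\D^{(0)}$ integrates out the kernel localization. The result is approximately $\frac1n\sum_i \varepsilon_i\,c(\D^{(i)})$ with bounded $c$, which is $O_\P(n^{-1/2})$ by the CLT using the fourth-moment bound. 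Squaring gives $O_\P(h_+^4+1/n)$, which is \eqref{eq:ersplus_bias}.

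I expect this averaging step to be the main obstacle. One must show that the higher-order remainder from inverting the random local Gram matrix is also negligible after integration, namely $O_\P(h_+^2+n^{-1/2})$ rather than merely $O_\P((nh_+^{2t_0})^{-1/2})$. My plan is to reuse verbatim the corresponding argument for ERS in Theorem \ref{thm:decomp_matching}, substituting $q=2t_0$. The rate conditions $nh_+^{q+2}\to\infty$ and $nh_+^q/\log n\to\infty$ are exactly what that argument uses to control the Gram-matrix perturbation term.
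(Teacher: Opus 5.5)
Your proposal is correct and matches the paper's proof: the paper applies Lemma~\ref{lem:matching_local_linear_rates} directly with $J=1$, $S=\D$, $Y=\tau_T$, $q=2t_0$, $h=h_+$ and $\nu$ the law of $\D^{(0)}$, which gives $\E^{(0)}[e^2]=O_\P(h_+^4+1/(nh_+^{2t_0}))$ and $\E^{(0)}[e]=O_\P(h_+^2+n^{-1/2})$ for $e=\hat\tau_{\ERSplus}-r_+(\D^{(0)})$; it then reads off $\mathrm{RE}=\Var^{(0)}(e)\le\E^{(0)}[e^2]$, $\mathrm{Bias}^2=\{\E^{(0)}[e]\}^2$ and $\mathrm{AE}=0$, as in your Steps 1--3. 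The integrated-mean step you flag as the main obstacle is exactly the second conclusion of that lemma: a smoothing term $O(h^2)$, a Gram-inverse perturbation term $O_\P(1/(nh^q))=o_\P(h^2)$, and a linear term $O_\P(n^{-1/2})$ by Chebyshev, so you need no new argument beyond citing it.
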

The proof is in Appendix \ref{subapp:proof_thm_decomp_ersplus}. Comparing Theorems \ref{thm:decomp_ersplus} and \ref{thm:decomp_matching}, ERS+ retains zero approximation error even when Assumption \ref{ass:cond-indep-tau-alpha} fails, whereas ERS may have positive approximation error. Nevertheless, the empirical observation that ATEs drive their own trends in Section \ref{subsec:empirical_findings} suggests that Assumption \ref{ass:cond-indep-tau-alpha} is plausible for our data. Under Assumption \ref{ass:cond-indep-tau-alpha}, ERS and ERS+ both have zero approximation error. Matching on only $t_0$ covariates instead of $2t_0$ then gives ERS a faster rate for its MSE bound. In particular, under the conditions of both theorems, using the optimal bandwidths $h\asymp n^{-1/(t_0+4)}$ and $h_+\asymp n^{-1/(2t_0+4)}$ for these bounds gives
\begin{align}
\mathrm{MSE}(\hat\tau_{\ERS})-\E[\Var(\tau_T\mid\D)]
&=O_\P\left(n^{-4/(t_0+4)}\right),\nonumber\\
\mathrm{MSE}(\hat\tau_{\ERSplus})-\E[\Var(\tau_T\mid\D)]
&=O_\P\left(n^{-4/(2t_0+4)}\right).
\label{eq:ersplus_balanced_MSE}
\end{align}
Therefore, for applications where Assumption \ref{ass:cond-indep-tau-alpha} is plausible, the rates in \eqref{eq:ersplus_balanced_MSE} favor ERS as the more efficient choice.

\subsection{Summary of Method Comparisons}\label{sec:theory-summary}

Theorem \ref{thm:bias_unmatched} and \ref{thm:decomp_matching} highlight the advantage of ERS which combines ATE regression with matching. In particular, Theorem \ref{thm:bias_unmatched} shows that ATE regression avoids the bias that pooling treatment and control outcomes (AggS) can introduce. Theorem \ref{thm:AE_comparison_ERSnm_ArmS} shows that ATE regression has no larger approximation error than separate outcome regressions (ArmS) under the linear-projection condition \eqref{eq:AE_projection_condition}, which is compatible with our empirical observations in Section \ref{subsec:empirical_findings}. On the other hand, Theorem \ref{thm:decomp_matching} implies ATE regression works better with matching than the outcome regression. Under Assumption \ref{ass:cond-indep-tau-alpha} and other stated conditions, \eqref{eq:ersplus_balanced_MSE} and \eqref{eq:MSE_step1} imply that ERS and ERS+ both achieve the minimal limiting MSE $\E[\Var(\tau_T\mid\D)]$. ERS has a lower upper bound asymptotically because it matches on fewer variables.

\section{Application on DoorDash Experiment Data}\label{sec:empirical}

The ERS method provides an improvement in practice on real databases of historical experiments. We demonstrate this improvement with DoorDash experiment data, which is a large set of 196 experiments, each of which has one to many metrics of interest, each of which represents a treatment effect time series. In effect, the metrics represent unique experiments, albeit with a high dependence between metrics within the same parent experiment.

We evaluate the accuracy of five methods: Aggregate Surrogate, Armwise Surrogate, ERS, ERS without matching, and ERS with the control arm means as extra covariates. To represent the way our method works in practice, we will evaluate these accuracies at multiple chronological time points, representing increasing reservoir sizes. In general, the effect of a larger reservoir is directionally unclear: on the one hand, it tends to yield better matches and hence improves the quality; on the other hand, it may involve more distribution shifts from older experiments, which would worsen the accuracy.


Given our DoorDash reservoir with a total of 196 experiments, we order and label each experiment chronologically in time from $i=1$ to $i=196$. For the purposes of our evaluation we denote $t_0=20$ days to be the short-term cutoff date, and $T=55$ days to be the long-term target date. Our data consists of daily average treatment effects, and our short-term data is from days 6 through 19, giving a total of $d=14$ observations for the regression. A practitioner may choose to include more or fewer days of ATEs, taking into consideration the natural bias-variance tradeoff of the number of parameters in a regression.

We devise an evaluation method in which we construct a series of training reservoirs of experiments up until a given time, with sizes $n\in[50,140]$ in intervals of 10. For each reservoir we take the next $m=50$ experiments, with indices $[n+1,n+m]$, as the set of test experiments. For each test experiment we compute the predicted long-term ATE, and compute the root mean squared error (RMSE) for each method across the 50 test experiments. This process imitates the action of enacting a surrogate method in practice with a given reservoir and a set of incoming test experiments.

For ERS in particular, we match on Euclidean distance on the normalized short-term ATE vector, which we prefer to distance on the original vector so we match on the trend and not the magnitude of the pre-cutoff treatment effect. To tune the number of matched metrics $k$ for a given reservoir size $n$, we run a leave-one-out cross-validation, using the remaining $n-1$ reservoir experiments as the local reservoir. We impose the maximum value of $k$ to be a third of the total number of reservoir metrics, to ensure matching deviates enough from the unmatched procedure. 

We show in Figure \ref{fig:percent-improvement} the percent improvement of ERS (matched, unmatched, and match with control covariates) over Armwise Surrogate without matching, where larger is better, and the Armwise Surrogate baseline performance is at 0. Clearly, the precision gain is substantial. We also show the performance of Aggregate Surrogate which is likewise worse than the ERS methods. Figure \ref{fig:ks} in Appendix \ref{subapp:figures} shows the cross-validated value of $k$ as a percentage of the total number of metrics. It hovers around 25\%, slightly less than the maximum value of a third of the total number of metrics.

\begin{figure}
    \centering
    \includegraphics[width=0.85\linewidth]{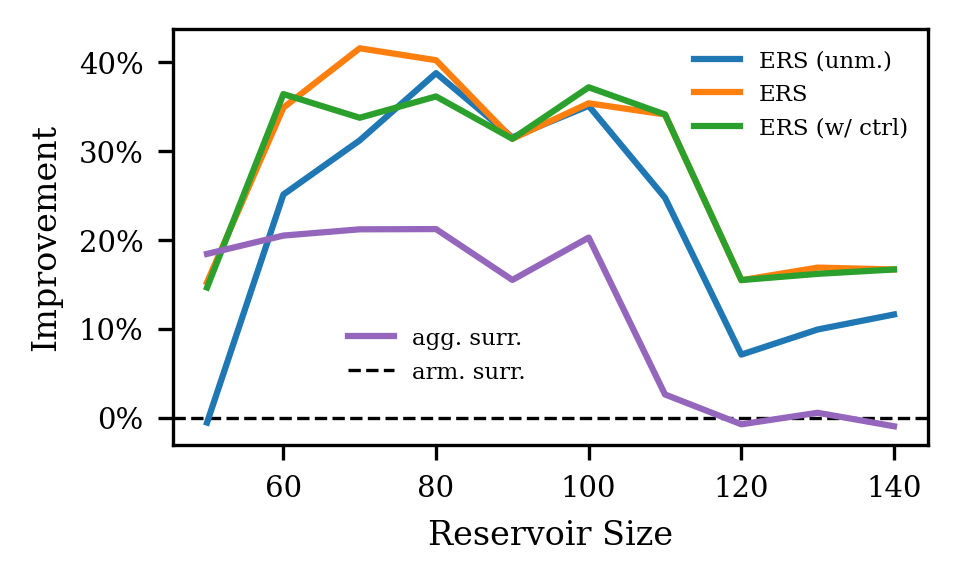}
    \caption{Percent improvement in RMSE for ERS with and without matching, compared to a baseline of ArmS. We give the comparison for a sequence of reservoir sizes, with the next 50 experiments chronologically representing the test set. The matching procedure is to cross validate on the reservoir and pick the best performing k which is less than a third of the total number of metrics in the reservoir.}
    \label{fig:percent-improvement}
\end{figure}

\section{Conclusion}

In this paper we introduce the Experiment Reservoir Surrogate (ERS) procedure, which estimates the long-term treatment effect of a new experiment given a reservoir of aggregate historical experiment data consisting of many completed long-term experiments. ERS avoids the surrogacy assumption of the surrogate index by predicting ATEs directly and relaxes the comparability assumption through matching and local smoothness, without requiring every experiment in the reservoir to be comparable to the target experiment. Moreover, ERS is efficient in both memory usage and computation as it only relies on aggregate ATE data, in contrast to the standard surrogate methods that requires individual-level data. 

We present ERS as a matching and regression on short-term treatment effects, but the framework can extend to other covariates. This flexibility is useful in the context of online experimentation, where experiment-level metadata can show which experiments may have similar trends. A further direction is to construct prediction intervals for the long-term ATE that account for both estimation error and irreducible variation. We can leverage the undersmoothing technique \citep[e.g.,][]{calonico18} or the bias-aware inference \citep[e.g.,][]{armstrong20}. Other future directions for research include selection of an optimal short-term cutoff date and dimension reduction to avoid curse of dimensionality of LLR.



\section*{Acknowledgements}

The authors would like to thank DoorDash Inc. for their collaboration and Xiaoman Luo, Dean Eckles, and Wenjing Zheng for helpful comments. The authors used ChatGPT to polish the writing and audit the proofs. The first author was supported by Achievement Rewards for College Scientists Fellowship.




\bibliography{ERS_ref}
\bibliographystyle{abbrvnat}

\newpage
\appendix
\onecolumn
\section{Proofs}

\subsection{Proof of Proposition \ref{prop:MSE_decomp}}

The reservoir $\mathcal R$ in \eqref{eq:experiment_reservoir} is independent of $(\D^{(0)},\tau_T^{(0)})$ by Assumption \ref{ass:pop-dist}. By definition,
\begin{align*}
\mathrm{MSE}(\hat{\tau}) &= \E^{(0)}\left[\left\{\hat{g}_n(\D^{(0)}) - \tau_T^{(0)}\right\}^2\right]\\
& = \E^{(0)}\left[\left\{\hat{g}_n(\D^{(0)}) - \E^{(0)}[\tau_T^{(0)}\mid \D^{(0)}] - \left(\tau_T^{(0)} - \E^{(0)}[\tau_T^{(0)}\mid \D^{(0)}]\right)\right\}^2\right]\\
& = \E^{(0)}\left[\left\{\hat{g}_n(\D^{(0)}) - \E^{(0)}[\tau_T^{(0)}\mid \D^{(0)}]\right\}^2\right] + \E^{(0)}\left[ \left\{\tau_T^{(0)} - \E^{(0)}[\tau_T^{(0)}\mid \D^{(0)}]\right\}^2\right]\\
& \qquad + 2\E^{(0)}\left[\left\{\hat{g}_n(\D^{(0)}) - \E^{(0)}[\tau_T^{(0)}\mid \D^{(0)}]\right\}\left\{\tau_T^{(0)} - \E^{(0)}[\tau_T^{(0)}\mid \D^{(0)}]\right\}\right].
\end{align*}
Since $\hat g_n$ is fixed conditional on $\mathcal R$,
\begin{align*}
& \E^{(0)}\left[\left\{\hat{g}_n(\D^{(0)}) - \E^{(0)}[\tau_T^{(0)}\mid \D^{(0)}]\right\}\left\{\tau_T^{(0)} - \E^{(0)}[\tau_T^{(0)}\mid \D^{(0)}]\right\}\mid \D^{(0)}\right]\\
& = \E^{(0)}\left[\hat{g}_n(\D^{(0)}) - \E^{(0)}[\tau_T^{(0)}\mid \D^{(0)}]\right]\cdot \E^{(0)}\left[\tau_T^{(0)} - \E^{(0)}[\tau_T^{(0)}\mid \D^{(0)}]\mid \D^{(0)}\right]\\
& = 0.
\end{align*}
Thus, 
\begin{align}
\mathrm{MSE}(\hat{\tau}) & =  \E^{(0)}\left[\left\{\hat{g}_n(\D^{(0)}) - \E^{(0)}[\tau_T^{(0)}\mid \D^{(0)}]\right\}^2\right] + \E^{(0)}\left[ \left\{\tau_T^{(0)} - \E^{(0)}[\tau_T^{(0)}\mid \D^{(0)}]\right\}^2\right]\nonumber\\
& = \E^{(0)}\left[\left\{\hat{g}_n(\D^{(0)}) - \E^{(0)}[\tau_T^{(0)}\mid \D^{(0)}]\right\}^2\right] +  \underbrace{\E^{(0)}\left[\Var^{(0)}\left(\tau_T^{(0)}\mid \D^{(0)}\right)\right]}_{\text{irreducible error}}.\label{eq:MSE_step1}
\end{align}
Next, we decompose the first term in \eqref{eq:MSE_step1}.
\begin{align}
& \E^{(0)}\left[\left\{\hat{g}_n(\D^{(0)})- \E^{(0)}\left[\tau_T^{(0)}\mid \D^{(0)}\right]\right\}^2\right]\nonumber\\
& = \E^{(0)}\left[\left\{\E^{(0)}\left[\hat{g}_n(\D^{(0)}) - \tau_T^{(0)}\mid \D^{(0)}\right]\right\}^2\right]\nonumber\\
& = \Var^{(0)}\left(\E^{(0)}\left[\hat{g}_n(\D^{(0)}) - \tau_T^{(0)}\mid \D^{(0)}\right]\right) + \left(\E^{(0)}\left[\E^{(0)}\left[\hat{g}_n(\D^{(0)}) - \tau_T^{(0)}\mid \D^{(0)}\right]\right]\right)^2\nonumber\\
& = \Var^{(0)}\left(\E^{(0)}\left[\hat{g}_n(\D^{(0)}) - \tau_T^{(0)}\mid \D^{(0)}\right]\right) + \left(\E^{(0)}\left[\hat{g}_n(\D^{(0)}) - \tau_T^{(0)}\right]\right)^2\nonumber\\
& =\Var^{(0)}\left(\hat{g}_n(\D^{(0)}) - \E^{(0)}\left[\tau_T^{(0)}\mid \D^{(0)}\right]\right) + \underbrace{\left(\E^{(0)}\left[\hat{g}_n(\D^{(0)})\right] - \E^{(0)}\left[\tau_T^{(0)}\right]\right)^2}_{\text{squared bias}}, \label{eq:MSE_step2}
\end{align}
where the second line uses the fact that $\E^{(0)}[X^2] = \Var^{(0)}(X) + \left(\E^{(0)}[X]\right)^2$. Lastly, we decompose the first term of \eqref{eq:MSE_step2}. 
\begin{align}
& \Var^{(0)}\left(\hat{g}_n(\D^{(0)}) - \E^{(0)}\left[\tau_T^{(0)}\mid \D^{(0)}\right]\right)\nonumber\\
& = \underbrace{\Var^{(0)}\left(g(\D^{(0)}) - \E^{(0)}\left[\tau_T^{(0)}\mid \D^{(0)}\right]\right)}_{\text{approximation error}} \nonumber\\
& \quad + \underbrace{\begin{aligned}
&\Var^{(0)}\left(\hat{g}_n(\D^{(0)}) - g(\D^{(0)})\right)\\
&\quad+2\mathrm{Cov}^{(0)}\left(\hat{g}_n(\D^{(0)}) - g(\D^{(0)}), g(\D^{(0)}) - \E^{(0)}\left[\tau_T^{(0)}\mid \D^{(0)}\right]\right)
\end{aligned}}_{\text{reducible error}}.\label{eq:MSE_step3}
\end{align}
By Cauchy-Schwarz inequality, 
\begin{align*}
& \bigg|\mathrm{Cov}^{(0)}\left(\hat{g}_n(\D^{(0)}) - g(\D^{(0)}), g(\D^{(0)}) - \E^{(0)}\left[\tau_T^{(0)}\mid \D^{(0)}\right]\right)\bigg|\\
& \le \sqrt{\Var^{(0)}\left(\hat{g}_n(\D^{(0)}) - g(\D^{(0)})\right)}\cdot \sqrt{\Var^{(0)}\left(g(\D^{(0)}) - \E^{(0)}\left[\tau_T^{(0)}\mid \D^{(0)}\right]\right)}.
\end{align*}
Thus, 
\[|\mathrm{RE}(\hat{\tau})| = O_\P\left(\sqrt{\Var^{(0)}\left(\hat{g}_n(\D^{(0)}) - g(\D^{(0)})\right)}\right).\]

Combining \eqref{eq:MSE_step1} - \eqref{eq:MSE_step3}, we complete the proof of the generic case. 

When $g(\D^{(0)}) = \E[\tau_T^{(0)}\mid h(\D^{(0)})]$, 
\begin{align*}
&\E\left[\left(\hat{g}_n(\D^{(0)}) - g(\D^{(0)})\right)\left(g(\D^{(0)}) - \E^{(0)}\left[\tau_T^{(0)}\mid \D^{(0)}\right]\right)\right]\\
& = \E^{(0)}\left[\left(\tilde{g}_n(h(\D^{(0)})) - \E^{(0)}[\tau_T^{(0)}\mid h(\D^{(0)})]\right)\left(\E^{(0)}[\tau_T^{(0)}\mid h(\D^{(0)})] - \E^{(0)}\left[\tau_T^{(0)}\mid \D^{(0)}\right]\right)\right]\\
& = \E^{(0)}\left[\E^{(0)}\left[\left(\tilde{g}_n(h(\D^{(0)})) - \E^{(0)}[\tau_T^{(0)}\mid h(\D^{(0)})]\right)\left(\E^{(0)}[\tau_T^{(0)}\mid h(\D^{(0)})] - \E^{(0)}\left[\tau_T^{(0)}\mid \D^{(0)}\right]\right)\right]\mid h(\D^{(0)})\right]\\
& = \E^{(0)}\left[\left(\tilde{g}_n(h(\D^{(0)})) - \E^{(0)}[\tau_T^{(0)}\mid h(\D^{(0)})]\right)\cdot\left[\E^{(0)}\left[\E^{(0)}[\tau_T^{(0)}\mid h(\D^{(0)})] - \E^{(0)}\left[\tau_T^{(0)}\mid \D^{(0)}\right]\mid h(\D^{(0)})\right]\right]\right]\\
& = 0,
\end{align*}
and 
\begin{align*}
&\E\left[g(\D^{(0)}) - \E^{(0)}\left[\tau_T^{(0)}\mid \D^{(0)}\right]\right]\\
& = \E\left[\E\left[\tau_T^{(0)}\mid h(\D^{(0)})\right] - \E^{(0)}\left[\tau_T^{(0)}\mid \D^{(0)}\right]\right]\\
& = 0.
\end{align*}
Thus, 
\[\mathrm{Cov}^{(0)}\left(\hat{g}_n(\D^{(0)}) - g(\D^{(0)}), g(\D^{(0)}) - \E^{(0)}\left[\tau_T^{(0)}\mid \D^{(0)}\right]\right) = 0.\]
By \eqref{eq:MSE_step3}, 
\[\mathrm{RE}(\hat{\tau}) = \Var^{(0)}\left(\hat{g}_n(\D^{(0)}) - g(\D^{(0)})\right).\]

\subsection{Proof of Theorem \ref{thm:bias_unmatched}}
We start by proving a lemma. 
\begin{lemma}\label{lem:linear_projection}
Let $(X_0, y_0), (X_1, y_1), \ldots, (X_n, y_n)\in \R^{d+1}$ be i.i.d. random vectors with 
\[\E[\|X_i\|^4] < \infty, \quad \E[y_i^4] < \infty, \quad \E[(1\,\, X_i')'(1\,\, X_i')]\succ0.\]
Further, let $\hat{\mu}_n$ and $\hat{\beta}_n$ be the regression coefficient estimate using all data but $(X_0, y_0)$, i.e., 
\[\begin{bmatrix}
\hat{\mu}_n(y\mid X)\\ \hat{\beta}_n(y\mid X)\end{bmatrix} = \argmin_{\mu\in \R, \beta\in \R^d}\frac{1}{n}\sum_{i=1}^{n}(y_i - \mu - X_i'\beta)^2.\]
Then 
\[\begin{bmatrix}
\hat{\mu}_n(y\mid X)\\\hat{\beta}_n(y\mid X)\end{bmatrix} = \begin{bmatrix}
\mu(y\mid X)\\
\beta(y\mid X)\end{bmatrix} + O_\P\left(\frac{1}{\sqrt{n}}\right),\]
and
\[\E^{(0)}[\hat{\mu}_n(y\mid X) + X_0'\hat{\beta}_n(y\mid X)] - \mathbb{E}^{(0)}[y_0] = O_\P\left(\frac{1}{\sqrt{n}}\right).\] 
\end{lemma}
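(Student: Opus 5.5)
The plan is to treat the first claim as the standard $\sqrt{n}$-consistency of OLS under finite fourth moments, and then to reduce the second claim to it by exploiting the independence of $(X_0,y_0)$ from the sample.

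Write $Z_i=(1\,\,X_i')'$, $\theta=(\mu(y\mid X),\beta(y\mid X)')'$, and $e_i=y_i-Z_i'\theta$. The population first-order conditions give $\E[Z_ie_i]=0$ and $\theta=\Sigma^{-1}\E[Z_iy_i]$, where $\Sigma:=\E[Z_iZ_i']\succ0$. The OLS estimate then satisfies
\[\hat\theta_n-\theta=\Big(\tfrac1n\sum_{i=1}^n Z_iZ_i'\Big)^{-1}\tfrac1n\sum_{i=1}^n Z_ie_i.\]
Each entry of $Z_iZ_i'$ has finite second moment because $\E\|X_i\|^4<\infty$. Chebyshev's inequality therefore gives $\frac1n\sum Z_iZ_i'=\Sigma+O_\P(n^{-1/2})$. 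Since $\Sigma\succ0$, the inverse exists with probability tending to one, and $\big(\frac1n\sum Z_iZ_i'\big)^{-1}=\Sigma^{-1}+O_\P(n^{-1/2})$ by continuity of inversion. For the other factor, $Z_ie_i$ is mean zero, and by Cauchy--Schwarz $\E\|Z_ie_i\|^2\le(\E\|Z_i\|^4)^{1/2}(\E e_i^4)^{1/2}<\infty$, where $\E e_i^4<\infty$ follows from the fourth moments of $y_i$ and $X_i$. Chebyshev then gives $\frac1n\sum Z_ie_i=O_\P(n^{-1/2})$. Multiplying the two factors yields the first claim.

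For the second claim, conditioning on the reservoir fixes $\hat\theta_n$, and $(X_0,y_0)$ is independent of it with the same law. Hence $\E^{(0)}[\hat\mu_n+X_0'\hat\beta_n]=\E[Z_0]'\hat\theta_n$ and $\E^{(0)}[y_0]=\E[y_0]$. Because the projection includes an intercept, $\E[e_0]=0$, so $\E[y_0]=\E[Z_0]'\theta$. The difference is therefore $\E[Z_0]'(\hat\theta_n-\theta)$, which is at most $\|\E Z_0\|\,\|\hat\theta_n-\theta\|=O_\P(n^{-1/2})$ because $\|\E Z_0\|<\infty$.

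There is no real obstacle; the only care points are routine. One is that the Gram matrix is invertible with probability tending to one, so the $O_\P$ statement is interpreted on that event. The other is verifying the fourth-moment bound on $e_i$ so that Chebyshev applies to $\frac1n\sum Z_ie_i$.
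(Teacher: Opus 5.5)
Your proposal is correct and follows essentially the same route as the paper. Both arguments use Chebyshev on the empirical Gram matrix and cross-moments, positive definiteness of $\E[(1\,\,X')'(1\,\,X')]$ to pass to the coefficients, and independence of $(X_0,y_0)$ from the fitted coefficients for the second claim; the differences are cosmetic. You work with the centered score $\frac1n\sum_i Z_ie_i$ and write the second quantity as $\E[Z_0]'(\hat\theta_n-\theta)$, whereas the paper centers $X$ without loss of generality so that the second quantity reduces to $\hat\mu_n-\E[y]$.
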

\begin{proof} 
Let $\tilde{X}_i = (1\,\, X_i')'\in \R^{d+1}$. Let $(X, \tilde{X}, y)$ denote a generic draw. Without loss of generality, assume $\E[X_i]= 0$; otherwise, we shift all $X_i$ to $X_i - \E[X]$ and the linear prediction $\hat{\mu}_n + X_0'\hat{\beta}_n$ remains unchanged. Then 
\[\mu(y\mid X) = \E[y], \quad \beta(y\mid X) = \left(\E[XX']\right)^{-1}\E[Xy].\]
Since $X_0$ is independent of $(\hat{\mu}_n, \hat{\beta}_n)$, 
\[\E^{(0)}[\hat{\mu}_n + X_0'\hat{\beta}_n] -\E[y] = \hat{\mu}_n + \E[ X]'\hat{\beta}_n - \mathbb{E}[y] = \hat{\mu}_n - \E[y].\]
Then 
\[\E[\|\tilde{X}_i\|^4] < \infty, \quad \E[\|\tilde{X}_i y_i\|^2] < \infty.\]
By definition,
\[\begin{bmatrix}
\hat{\mu}_n \\ \hat{\beta}_n
\end{bmatrix} = \left(\frac{1}{n}\sum_{i=1}^{n}\tilde{X}_i \tilde{X}_i'\right)^{-1}\left(\frac{1}{n}\sum_{i=1}^{n}\tilde{X}_i y_i\right).\]
By Chebyshev inequality, 
\[\frac{1}{n}\sum_{i=1}^{n}\tilde{X}_i \tilde{X}_i' = \E[\tilde{X}\tilde{X}'] + O_\P\left(\frac{1}{\sqrt{n}}\right), \quad \frac{1}{n}\sum_{i=1}^{n}\tilde{X}_i y_i = \E[\tilde{X}y] + O_\P\left(\frac{1}{\sqrt{n}}\right).\]
Since $\E[\tilde{X}\tilde{X}']\succ 0$, we have
\begin{align*}
\begin{bmatrix}
\hat{\mu}_n \\ \hat{\beta}_n
\end{bmatrix} &= \left\{\E[\tilde{X}\tilde{X}'] + O_\P\left(\frac{1}{\sqrt{n}}\right)\right\}^{-1}\left\{\E[\tilde{X}y] + O_\P\left(\frac{1}{\sqrt{n}}\right)\right\}\\
& = \begin{bmatrix}
\mu(y\mid X) \\ \beta(y\mid X)
\end{bmatrix} + O_\P\left(\frac{1}{\sqrt{n}}\right).
\end{align*}
Then 
\[\E^{(0)}[\hat{\mu}_n + X_0'\hat{\beta}_n] -\E[y] = \mu(y\mid X)-\E[y] + O_\P\left(\frac{1}{\sqrt{n}}\right) = O_\P\left(\frac{1}{\sqrt{n}}\right).\]
\end{proof}

\noindent\textbf{Proof of Theorem \ref{thm:bias_unmatched}.} We prove the results for each method separately.

\bigskip
\noindent\textbf{Proof for ArmS. }Under Assumption \ref{ass:moment}, for any $t\in [t_0]$ and $w\in \{0, 1\}$,
\[\E\left[\|m_{\sh,w}^{(i)}\|^4\right] < \infty, \quad \E\left[\|m_{\sh,w}^{(i)}\|^2 \left(m_{T,w}^{(i)}\right)^2\right] < \infty\]
For ArmS defined in Algorithm \ref{alg:agg-surr-variant1}, \eqref{eq:unmatched_gram_condition} verifies the Gram-matrix condition of Lemma \ref{lem:linear_projection}, which implies, for $w\in \{0,1\}$,
\[\begin{bmatrix}
\hat{\mu}_n(m_{T,w}\mid m_{\sh, w})\\ \hat{\beta}_n(m_{T,w}\mid m_{\sh, w})
\end{bmatrix}= \begin{bmatrix}
\mu(m_{T,w}\mid m_{\sh, w})\\
\beta(m_{T,w}\mid m_{\sh, w})
\end{bmatrix}+ O_\P\left(\frac{1}{\sqrt{n}}\right),\]
and
\[\E^{(0)}\left[\hat{g}_1(X_1^{(0)})\right] = \E[m_{T,1}] + O_\P\left(\frac{1}{\sqrt{n}}\right), \quad \E^{(0)}\left[\hat{g}_0(X_0^{(0)})\right] = \E[m_{T,0}] + O_\P\left(\frac{1}{\sqrt{n}}\right).\]
Thus, 
\begin{align*}
\hat{\tau}_{\ArmS}^{(0)} & = \hat{\mu}_n(m_{T,1}\mid m_{\sh,1}) + m_{\sh, 1}^{(0)'}\hat{\beta}_n(m_{T,1}\mid m_{\sh,1})\\
&\quad - \hat{\mu}_n(m_{T,0}\mid m_{\sh,0}) - m_{\sh, 0}^{(0)'}\hat{\beta}_n(m_{T,0}\mid m_{\sh,0}),
\end{align*}
and
\[\E^{(0)}\left[\hat{\tau}^{(0)}_{\ArmS}\right] = \E[\tau_T] + O_\P\left(\frac{1}{\sqrt{n}}\right).\]
Choose $g(\D^{(0)})$ as 
\begin{align*}
g(\D^{(0)}) & = \mu(m_{T,1}\mid m_{\sh,1}) + m_{\sh, 1}^{(0)'}\beta(m_{T,1}\mid m_{\sh,1}) -  \mu(m_{T,0}\mid m_{\sh,0}) - m_{\sh, 0}^{(0)'}\beta(m_{T,0}\mid m_{\sh,0})\\
& = \L(m_{T,1}\mid m_{\sh, 1}) - \L(m_{T,0}\mid m_{\sh, 0}).
\end{align*}
As a result, 
\[\mathrm{AE}(\hat{\tau}_{\ArmS}) = \Var(\L(m_{T,1}\mid m_{\sh, 1}) - \L(m_{T,0}\mid m_{\sh, 0}) - \E[\tau_T\mid \tau_{\sh}, m_{\sh, 0}]).\]
By Lemma \ref{lem:linear_projection} and Assumption \ref{ass:moment},
\begin{align*}
& \mathrm{RE}^2(\hat{\tau}^{(0)}_{\ArmS})\\
& = O_\P\left(\sum_{w=0}^1 \left\{\hat{\mu}_n(m_{T,w}\mid m_{\sh,w}) - \mu(m_{T,w}\mid m_{\sh,w})\right\}^2 + \left\{\hat{\beta}_n(m_{T,w}\mid m_{\sh,w}) - \beta(m_{T,w}\mid m_{\sh,w})\right\}^2\right)\\
& = O_\P\left(\frac{1}{n}\right).
\end{align*}
By Lemma \ref{lem:linear_projection} again, 
\[ \mathrm{Bias}^2(\hat{\tau}^{(0)}_{\ArmS}) = O_\P\left(\frac{1}{n}\right).\]

\bigskip
\noindent\textbf{Proof for ERS-nm. } Similarly, Assumption \ref{ass:moment} implies, for any $t\in [t_0]$, 
\[\E\left[\|\tau_{\sh}^{(i)}\|^4\right] < \infty, \quad \E\left[\|\tau_{\sh}^{(i)}\|^2 \left(\tau_{T}^{(i)}\right)^2\right] < \infty.\]
For ERS-nm, \eqref{eq:unmatched_gram_condition} verifies the Gram-matrix condition of Lemma \ref{lem:linear_projection}, which implies
\[\begin{bmatrix}
\hat{\mu}_n(\tau_{T}\mid \tau_{\sh})\\ \hat{\beta}_n(\tau_{T}\mid \tau_{\sh})
\end{bmatrix}= \begin{bmatrix}
\mu(\tau_{T}\mid \tau_{\sh})\\
\beta(\tau_{T}\mid \tau_{\sh})
\end{bmatrix}+ O_\P\left(\frac{1}{\sqrt{n}}\right),\]
and
\[\E^{(0)}\left[\hat{\tau}^{(0)}_{\ERSnm}\right] = \E[\tau_T] + O_\P\left(\frac{1}{\sqrt{n}}\right).\]
Choose $g(\D^{(0)})$ as 
\[g(\D^{(0)}) = \mu(\tau_{T}\mid \tau_{\sh}) + \tau_{\sh}^{(0)'}\beta(\tau_{T}\mid \tau_{\sh}) = \L(\tau_T \mid \tau_{\sh}).\]
This implies 
\[\mathrm{AE}(\hat{\tau}_{\ERSnm}) = \Var(\L(\tau_T\mid \tau_{\sh}) - \E[\tau_T\mid \tau_{\sh}, m_{\sh,0}]).\]
For the RE bound, note that
\begingroup\medmuskip=1mu\thickmuskip=1mu
\begin{align}\label{eq:unmatched_ERSnm_estimation_error}
\hat\tau_{\ERSnm}^{(0)}-g(\D^{(0)})&=\hat\mu_n(\tau_T\mid\tau_{\sh})-\mu(\tau_T\mid\tau_{\sh})+\tau_{\sh}^{(0)\prime}\{\hat\beta_n(\tau_T\mid\tau_{\sh})-\beta(\tau_T\mid\tau_{\sh})\}.
\end{align}
\endgroup
\noeqref{eq:unmatched_ERSnm_estimation_error}
Since $g$ is the population linear projection,
$\E[(1\,\,\tau_{\sh}')'\{g(\D)-\tau_T\}]=0$.
Since $\hat\tau_{\ERSnm}^{(0)}-g(\D^{(0)})$ is linear in $\tau_{\sh}^{(0)}$ conditional on the reservoir, we obtain that
\begin{align}
&\mathrm{Cov}^{(0)}\left(\hat\tau_{\ERSnm}^{(0)}-g(\D^{(0)}),g(\D^{(0)})-
\E[\tau_T^{(0)}\mid\D^{(0)}]\right)\nonumber\\
&\qquad=\mathrm{Cov}^{(0)}\left(\hat\tau_{\ERSnm}^{(0)}-g(\D^{(0)}),g(\D^{(0)})-\tau_T^{(0)}\right)=0.
\label{eq:unmatched_ERSnm_covariance}
\end{align}
By Proposition \ref{prop:MSE_decomp}, \eqref{eq:MSE_step3}, and \eqref{eq:unmatched_ERSnm_covariance}, the RE can be written as
\begin{align}
\mathrm{RE}(\hat\tau_{\ERSnm}^{(0)})
&=\Var^{(0)}\left(\hat\tau_{\ERSnm}^{(0)}-g(\D^{(0)})\right).
\label{eq:unmatched_ERSnm_RE_decomposition}
\end{align}
By Equation \eqref{eq:unmatched_ERSnm_RE_decomposition} and Lemma \ref{lem:linear_projection},
\begin{align}
\mathrm{RE}(\hat\tau_{\ERSnm}^{(0)})
&=\Var^{(0)}\left(\hat\tau_{\ERSnm}^{(0)}-g(\D^{(0)})\right)\nonumber\\
&=\Var^{(0)}\left(\tau_{\sh}^{(0)\prime}
\{\hat\beta_n(\tau_T\mid\tau_{\sh})-\beta(\tau_T\mid\tau_{\sh})\}\right)\nonumber\\
&\le\E[\norm{\tau_{\sh}}^2]\,
\norm{\hat\beta_n(\tau_T\mid\tau_{\sh})-\beta(\tau_T\mid\tau_{\sh})}^2
=O_\P\left(\frac1n\right).
\label{eq:unmatched_ERSnm_RE_proof}
\end{align}
Thus, \eqref{eq:unmatched_RE_ERSnm} is proved.
By Lemma \ref{lem:linear_projection},
\[\mathrm{Bias}^2(\hat{\tau}^{(0)}_{\ERSnm}) = O_\P\left(\frac{1}{n}\right).\]

\bigskip
\noindent\textbf{Proof for AggS.} Let $(\hat{\mu}_{\AggS}, \hat
{\beta}_{\AggS})$ denote the estimated coefficients in the pooled regression, i.e., 
\[\begin{bmatrix}
\hat{\mu}_{\AggS} \\ \hat{\beta}_{\AggS}
\end{bmatrix} = \left(\frac{1}{n}\sum_{i=1}^{n} \left\{\tilde{X}_{i,1}\tilde{X}_{i,1}' + \tilde{X}_{i,0}\tilde{X}_{i,0}'\right\}\right)^{-1}\left(\frac{1}{n}\sum_{i=1}^{n}\left\{\tilde{X}_{i,1}m_{T,1}^{(i)} + \tilde{X}_{i,0}m_{T,0}^{(i)}\right\}\right),\]
where 
\[\tilde{X}_{i,w} = (1, m_{1,w}^{(i)}, \ldots, m_{t_0,w}^{(i)})'.\]
For notational convenience, for $w\in \{0,1\}$, let
\[\tilde{\Sigma}_w = \E[\tilde{X}_w\tilde{X}_w'], \quad A_w = \left(\tilde{\Sigma}_1 + \tilde{\Sigma}_0 \right)^{-1}\tilde{\Sigma}_w, \quad \begin{bmatrix}
\mu_w \\ \beta_w
\end{bmatrix} = \begin{bmatrix}
\mu(m_{T,w}\mid m_{\sh,w}) \\ \beta(m_{T,w}\mid m_{\sh,w})
\end{bmatrix}\]
Condition \eqref{eq:unmatched_gram_condition} gives $\tilde{\Sigma}_1\succ0$ and $\tilde{\Sigma}_0\succ0$, hence $\tilde{\Sigma}_1+\tilde{\Sigma}_0\succ0$. The coefficient expansion in the proof of Lemma \ref{lem:linear_projection}, applied to the independent experiment-level sums in the pooled fit, gives
\begin{align*}
\begin{bmatrix}
\hat{\mu}_{\AggS} \\ \hat{\beta}_{\AggS}
\end{bmatrix} &= \left(\E[\tilde{X}_1\tilde{X}_1'] + \E[\tilde{X}_0\tilde{X}_0']\right)^{-1}\left(\E[\tilde{X}_1m_{T,1}] + \E[\tilde{X}_0m_{T,0}]\right) + O_\P\left(\frac{1}{\sqrt{n}}\right)\nonumber\\
& = \left(\tilde{\Sigma}_1 + \tilde{\Sigma}_0 \right)^{-1}\left(\tilde{\Sigma}_1 \begin{bmatrix}
\mu_1 \\ \beta_1
\end{bmatrix} + \tilde{\Sigma}_0\begin{bmatrix}
\mu_0 \\ \beta_0
\end{bmatrix}\right) + O_\P\left(\frac{1}{\sqrt{n}}\right).
\end{align*}
Then $A_1 + A_0 = I_{t_0+1}$ and 
\[\begin{bmatrix}
\hat{\mu}_{\AggS} \\ \hat{\beta}_{\AggS}
\end{bmatrix} = A_1 \begin{bmatrix}
\mu_1 \\ \beta_1
\end{bmatrix} + A_0\begin{bmatrix}
\mu_0 \\ \beta_0
\end{bmatrix} + O_\P\left(\frac{1}{\sqrt{n}}\right).\]
Choose $g(\D^{(0)})$ as 
\[g(\D^{(0)}) = (0 \,\, \tau_{\sh}^{(0)})'\left\{A_1 \begin{bmatrix}
\mu_1 \\ \beta_1
\end{bmatrix} + A_0\begin{bmatrix}
\mu_0 \\ \beta_0
\end{bmatrix}\right\}.\]
By Assumption \ref{ass:moment}, we conclude that 
\[|\mathrm{RE}(\hat{\tau}_{\AggS})| = O_\P\left(\frac{1}{\sqrt{n}}\right).\]

Note that
\[\E[\tilde{X}_1] - \E[\tilde{X}_0] = \begin{bmatrix}
0 \\ 
\E\left[X_1\right] - \E\left[X_0\right]
\end{bmatrix} = \begin{bmatrix}
0 \\ 
\E\left[\tau_{\sh}^{(0)}\right]
\end{bmatrix}.\]
By definition, 
\begin{align*}
\E[\hat{\tau}_{\AggS}^{(0)}] & = \E[\tau_{\sh}^{(0)}]'\hat{\beta}_{\AggS} = \delta'\begin{bmatrix}
\hat{\mu}_{\AggS} \\ \hat{\beta}_{\AggS}
\end{bmatrix} \\
& = (\E[\tilde{X}_1] - \E[\tilde{X}_0])'A_1 \begin{bmatrix}
\mu_1 \\ \beta_1
\end{bmatrix} + (\E[\tilde{X}_1] - \E[\tilde{X}_0])'A_0\begin{bmatrix}
\mu_0 \\ \beta_0
\end{bmatrix} + O_\P\left(\frac{1}{\sqrt{n}}\right),
\end{align*}
and 
\[\E[\tau_T^{(0)}] = \E\left[m_{T,1}^{(0)}\right] - \E\left[m_{T,0}^{(0)}\right] = \E[\tilde{X}_1]\begin{bmatrix}
\mu_1 \\ \beta_1
\end{bmatrix} - \E[\tilde{X}_0]\begin{bmatrix}
\mu_0 \\ \beta_0
\end{bmatrix}.\]
Since $(\E[\tilde{X}_w], A_w, \mu_w, \beta_w)_{w=0}^1$ are free parameters with a single constraint $A_1 + A_0 = I_{t_0 + 1}$, the leading term of $\E[\hat{\tau}_{\AggS}^{(0)}]$ is not equal to $\E[\tau_T^{(0)}]$ in general. Thus, 
\[\mathrm{Bias}^2(\hat{\tau}^{(0)}_{\AggS})\text{ is bounded away from zero}.\]

Under the additional condition \eqref{eq:experiment_surrogacy}, 
\[\begin{bmatrix}
\mu_1 \\ \beta_1
\end{bmatrix} = \begin{bmatrix}
\mu_0 \\ \beta_0
\end{bmatrix}.\]
Thus, 
\[\E[\hat{\tau}_{\AggS}^{(0)}]  = (\E[\tilde{X}_1] - \E[\tilde{X}_0])'(A_1 + A_0) \begin{bmatrix}
\mu_0 \\ \beta_0
\end{bmatrix} + O_\P\left(\frac{1}{\sqrt{n}}\right) = \E[\tau_T^{(0)}] + O_\P\left(\frac{1}{\sqrt{n}}\right).\]
This implies 
\[\mathrm{Bias}^2(\hat{\tau}^{(0)}_{\AggS}) = O_\P\left(\frac{1}{n}\right).\]
To prove necessity, we note that the bias vanishes iff 
\[(\E[\tilde{X}_1] - \E[\tilde{X}_0])'A_1 \begin{bmatrix}
\mu_1 \\ \beta_1
\end{bmatrix} + (\E[\tilde{X}_1] - \E[\tilde{X}_0])'A_0\begin{bmatrix}
\mu_0 \\ \beta_0
\end{bmatrix} = \E[\tilde{X}_1]'\begin{bmatrix}
\mu_1 \\ \beta_1
\end{bmatrix} - \E[\tilde{X}_0]'\begin{bmatrix}
\mu_0 \\ \beta_0
\end{bmatrix},\]
for any $\E[X_w]$ and $\E[X_w X_w']$. Since these are free parameters, $A_1$ can take values in an open set $\mathcal{A}$.  Since the RHS does not depend on $A_1$, it must be that 
\[ (\E[\tilde{X}_1] - \E[\tilde{X}_0])'\begin{bmatrix}
\mu_1 \\ \beta_1
\end{bmatrix} = (\E[\tilde{X}_1] - \E[\tilde{X}_0])'\begin{bmatrix}
\mu_0 \\ \beta_0
\end{bmatrix}.\]
Since  $\E[\tilde{X}_1], \E[\tilde{X}_0]$ can be arbitrary, we conclude that 
\[\begin{bmatrix}
\mu_1 \\ \beta_1
\end{bmatrix} = \begin{bmatrix}
\mu_0 \\ \beta_0
\end{bmatrix}.\]

To establish the approximation error, we note that 
\[g(\D^{(0)}) = (0 \,\, \tau_{\sh}^{(0)})'\left\{A_1 \begin{bmatrix}
\mu_1 \\ \beta_1
\end{bmatrix} + A_0\begin{bmatrix}
\mu_0 \\ \beta_0
\end{bmatrix}\right\} = \tau_{\sh}^{(0)'} \beta_0.\]
Then 
\begin{align*}
\mathrm{AE}(\hat{\tau}_{\AggS}) = \Var(\tau_{\sh}' \beta_0 - \E[\tau_T\mid \tau_{\sh}, m_{\sh, 0}]).
\end{align*}
Note that the first terms of both $\mathrm{AE}(\hat{\tau}_{\AggS})$ and $\mathrm{AE}(\hat{\tau}_{\ERSnm})$ are linear functions in $(1, \tau_{\sh})$. By definition, 
\begin{align}
\L(\tau_T\mid \tau_{\sh})&= \argmin_{g \text{ linear}} \E\left[\{\tau_T - g(\tau_{\sh})\}^2\right] \nonumber\\
& = \argmin_{g \text{ linear}} \E\left[\{\E[\tau_T \mid \tau_{\sh}]- g(\tau_{\sh})\}^2\right] + \E[\Var(\tau_T \mid \tau_{\sh})]\nonumber\\
& = \argmin_{g \text{ linear}} \E\left[\{\E[\tau_T \mid \tau_{\sh}]- g(\tau_{\sh})\}^2\right]\label{eq:linear_projection_optimality}
\end{align}
Thus, 
\[\E\left[\{\L(\tau_T \mid \tau_{\sh}) - \E[\tau_T \mid \tau_{\sh}]\}^2\right]\le \E\left[\tau_{\sh}'\beta_0 - \E[\tau_T \mid \tau_{\sh}]\}^2\right].\]
Using the bias results for ERS-nm and AggS, we have 
\[\E\left[\L(\tau_T \mid \tau_{\sh}) - \E[\tau_T \mid \tau_{\sh}]\right] = 0 = \E\left[\tau_{\sh}'\beta_0 - \E[\tau_T \mid \tau_{\sh}]\right].\]
Therefore, 
\[\Var\left(\L(\tau_T \mid \tau_{\sh}) - \E[\tau_T \mid \tau_{\sh}]\right)\le \Var\left(\tau_{\sh}'\beta_0 - \E[\tau_T \mid \tau_{\sh}]\right).\]

\subsection{Proof of Theorem \ref{thm:AE_comparison_ERSnm_ArmS}} \label{subapp:proof_thm_AE_comparison_ERSnm_ArmS}

\begin{proof}
By Assumption \ref{ass:moment}, the population projections in \eqref{eq:linear_projection} exist. Define
\begin{align}
g_{\ERSnm}(\D)
&=\L(\tau_T\mid\tau_{\sh}),\nonumber\\
g_{\ArmS}(\D)
&=\L(m_{T,1}\mid m_{\sh,1})-\L(m_{T,0}\mid m_{\sh,0}).
\label{eq:AE_population_predictors}
\end{align}
Since both projections include an intercept, we have
\begin{align}
\E[g_{\ERSnm}(\D)]&=\E[\tau_T],\nonumber\\
\E[g_{\ArmS}(\D)]
&=\E[m_{T,1}]-\E[m_{T,0}]=\E[\tau_T].
\label{eq:AE_population_means}
\end{align}
Note that $\E[\E[\tau_T\mid\D]]=\E[\tau_T]$. Thus,
\begin{align}
\mathrm{AE}(\hat{\tau}_M)
&=\Var\{g_M(\D)-\E[\tau_T\mid\D]\}\nonumber\\
&=\E[\{g_M(\D)-\E[\tau_T\mid\D]\}^2],
\quad M\in\{\ERSnm,\ArmS\}.
\label{eq:AE_mean_square_errors}
\end{align}

Since $m_{\sh,1}=m_{\sh,0}+\tau_{\sh}$, \eqref{eq:AE_population_predictors} makes $g_{\ArmS}(\D)-g_{\ERSnm}(\D)$ a linear function of $\D$. Under the condition \eqref{eq:AE_projection_condition}, $g_{\ERSnm}(\D)=\L[\tau_T\mid\D]$. By the law of iterated expectations,
\begin{align}
&\E[\{\E[\tau_T\mid\D]-g_{\ERSnm}(\D)\}
       \{g_{\ArmS}(\D)-g_{\ERSnm}(\D)\}]\nonumber\\
&\qquad=\E[\{\tau_T-g_{\ERSnm}(\D)\}
       \{g_{\ArmS}(\D)-g_{\ERSnm}(\D)\}]
=0.
\label{eq:AE_projection_cross_term}
\end{align}
As a result,
\begin{align}
\mathrm{AE}(\hat{\tau}_{\ArmS})-\mathrm{AE}(\hat{\tau}_{\ERSnm})
&=\E[\{g_{\ArmS}(\D)-g_{\ERSnm}(\D)\}^2]\nonumber\\
&\quad+2\E[\{g_{\ArmS}(\D)-g_{\ERSnm}(\D)\}
              \{g_{\ERSnm}(\D)-\E[\tau_T\mid\D]\}]\nonumber\\
&=\E[\{g_{\ArmS}(\D)-g_{\ERSnm}(\D)\}^2]\ge0,
\label{eq:AE_projection_gap}
\end{align}
which proves \eqref{eq:AE_projection_comparison}.

To show the sufficiency, suppose $\E[\tau_T\mid\tau_{\sh}]$ is linear. Assumption \ref{ass:cond-indep-tau-alpha} gives
\begin{equation}
\E[\tau_T\mid\tau_{\sh}]=\E[\tau_T\mid\D].
\label{eq:AE_linear_conditional_means}
\end{equation}
Consequently,
\begin{align}
\L(\tau_T\mid\tau_{\sh},m_{\sh,0})
&=\E[\tau_T\mid\D]\nonumber\\
&=\E[\tau_T\mid\tau_{\sh}]
=\L(\tau_T\mid\tau_{\sh}),
\label{eq:AE_projection_sufficient}
\end{align}
so \eqref{eq:AE_projection_condition} holds.
\end{proof}

\subsection{Proof of Theorem \ref{thm:decomp_matching}}\label{subapp:proof_thm_decomp_matching}

We start with a lemma that bounds the integrated MSE for local-linear estimators in a slightly more general setting than the standard one \citep[e.g.][]{ruppert94}.

\begin{lemma}\label{lem:matching_local_linear_rates}
Let $\{(S_{i,j},Y_{i,j}):1\le j\le J\}_{i=1}^n$ be i.i.d. Suppose Assumptions \ref{ass:local-linear-regularity} and \ref{ass:matching-bandwidth} hold for $(S_{i,j},Y_{i,j})$ for every $i$ and $j$, with $S_{i,j}\in\R^q$, density $f$, and $\mu(s)=\E[Y_{i,j}\mid S_{i,j}=s]$. Let $\nu$ be a probability measure with bounded density supported on $\mathcal S$. Define $\hat\mu_h(s)$ as the intercept in
\begin{equation}\label{eq:matching_local_fit}
\min_{a\in\R,\ b\in\R^q}
\sum_{i=1}^n\sum_{j=1}^J
\mathbf 1\{\norm{S_{i,j}-s}\le h\}
\{Y_{i,j}-a-b'(S_{i,j}-s)\}^2.
\end{equation}
Then
\begin{align}
\int_{\mathcal S}\{\hat\mu_h(s)-\mu(s)\}^2\,d\nu(s)
&=O_\P\left(h^4+\frac{1}{nh^q}\right),
\label{eq:matching_local_L2}\\
\int_{\mathcal S}\{\hat\mu_h(s)-\mu(s)\}\,d\nu(s)
&=O_\P\left(h^2+\frac{1}{\sqrt n}\right).
\label{eq:matching_local_mean}
\end{align}
\end{lemma}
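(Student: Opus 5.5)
We prove the lemma with the standard equivalent-kernel (bias–variance) analysis of the local-linear estimator. The design has $nJ$ points that are i.i.d. across $i$ but possibly dependent within a block; since $J$ is fixed, this dependence only changes constants. Fix $s\in\mathcal S$ and write $Z_{ij}(s)=(1,(S_{ij}-s)'/h)'$ and $K_{ij}(s)=\mathbf 1\{\norm{S_{ij}-s}\le h\}$. Define
\[
\hat\Gamma(s)=\frac{1}{nJh^q}\sum_{i,j}K_{ij}(s)Z_{ij}(s)Z_{ij}(s)',
\]
so that
\[
\hat\mu_h(s)-\mu(s)=e_1'\hat\Gamma(s)^{-1}\frac{1}{nJh^q}\sum_{i,j}K_{ij}(s)Z_{ij}(s)\{Y_{ij}-\mu(s)-\nabla\mu(s)'(S_{ij}-s)\}.
\]
Split the numerator into two parts. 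The \emph{bias part} is $R_{ij}=\mu(S_{ij})-\mu(s)-\nabla\mu(s)'(S_{ij}-s)$, which is $O(h^2)$ on the window by the $C^3$ extension in Assumption \ref{ass:local-linear-regularity}. The \emph{noise part} is $\varepsilon_{ij}=Y_{ij}-\mu(S_{ij})$.

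\textbf{Step 1 (uniform design).} We show that $\sup_{s\in\mathcal S}\opnorm{\hat\Gamma(s)-\Gamma_h(s)}=o_\P(1)$, where $\Gamma_h(s)=\E\hat\Gamma(s)$ has eigenvalues bounded below uniformly in $s$, including near the boundary. The lower bound uses $\inf f>0$ and convexity of $\mathcal S$ with nonempty interior: for every $s$, the set $B(s,h)\cap\mathcal S$ contains a cone of volume $\gtrsim h^q$. The uniform concentration follows from a covering/Bernstein argument over a polynomial grid of $s$-values together with Lipschitz-in-$s$ control of the indicators via small bracketing. Applying Bernstein to block sums in $i$ gives deviations of order $\sqrt{\log n/(nh^q)}\to0$ by Assumption \ref{ass:matching-bandwidth}. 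This is the step I expect to be the main obstacle, specifically the boundary eigenvalue bound and the uniformity. I would handle it with a VC-class argument: balls form a VC class, so the empirical process indexed by $s$ is controlled with the $\log n$ factor.

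\textbf{Step 2 ($L_2$ rate).} On the event from Step 1, $\hat\Gamma^{-1}$ is uniformly bounded. The bias part is therefore bounded by $C h^2$ uniformly in $s$, since $\frac{1}{nJh^q}\sum_{i,j}K_{ij}\norm{Z_{ij}}=O_\P(1)$ uniformly. For the noise part we bound the expectation of the squared noise numerator conditionally on the design. Conditional on the $S$'s, the $\varepsilon_{ij}$ are mean zero and independent across $i$, with conditional second moments bounded by $\sup\sigma^2$. Within a block we apply Cauchy–Schwarz, which costs a factor $J$. Integrating against $\nu$ and using Fubini together with the bounded density of $\nu$ gives
\[
\E\int\norm{\text{noise numerator}}^2\,d\nu\lesssim \frac{1}{nh^{q}}.
\]
Markov's inequality then yields \eqref{eq:matching_local_L2}.

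\textbf{Step 3 (integrated mean).} The bias part contributes $O_\P(h^2)$ directly. For the noise part, write
\[
\int(\text{noise})\,d\nu=\frac{1}{n}\sum_i\xi_i+\text{remainder},
\qquad
\xi_i=\frac{1}{J}\sum_j\varepsilon_{ij}\int e_1'\Gamma_h(s)^{-1}Z_{ij}(s)K_{ij}(s)h^{-q}\,d\nu(s).
\]
Because $\nu$ has bounded density, the integral defining $\xi_i$ averages the kernel over a ball of volume $h^q$ and is $O(1)$. Hence the $\xi_i$ are i.i.d. with mean zero and bounded variance, and $\frac{1}{n}\sum_i\xi_i=O_\P(n^{-1/2})$. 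The remainder replaces $\hat\Gamma^{-1}$ by $\Gamma_h^{-1}$. It is bounded by $\sup_s\opnorm{\hat\Gamma^{-1}-\Gamma_h^{-1}}$ times the $L_2(\nu)$ norm of the noise numerator, which gives
\[
O_\P\Bigl(\sqrt{\tfrac{\log n}{nh^q}}\cdot\tfrac{1}{\sqrt{nh^q}}\Bigr)=O_\P\Bigl(\tfrac{\sqrt{\log n}}{nh^q}\Bigr).
\]
We need this to be $O(h^2+n^{-1/2})$. This holds when $nh^{q}\gtrsim\sqrt{n\log n}$, which is the case at the relevant bandwidths. Otherwise, I would refine the bound with a second-order (U-statistic) expansion of $\hat\Gamma^{-1}$, using fourth moments of $\varepsilon$ and the condition $nh^{q+2}\to\infty$ to absorb the remaining term. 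Combining the bias and noise parts gives \eqref{eq:matching_local_mean}.
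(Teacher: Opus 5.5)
Your architecture matches the paper's. You use uniform concentration of the local design matrix to get a uniformly bounded inverse on a high-probability event, and an $O(h^2)$ Taylor bias term. You bound the second moment of the noise numerator by $1/(nh^q)$ and apply Markov for \eqref{eq:matching_local_L2}. For \eqref{eq:matching_local_mean} you linearize with $\Gamma_h(s)^{-1}$ to get the mean-zero i.i.d.\ average $\frac1n\sum_i\xi_i$, which is the paper's $\psi_{i,h}$.

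\textbf{Gap: the Step 3 remainder.} You bound it by $\sup_s\opnorm{\hat\Gamma(s)^{-1}-\Gamma_h(s)^{-1}}$. This imports the $\sqrt{\log n}$ from the uniform concentration and leaves $O_\P(\sqrt{\log n}/(nh^q))$. Assumption \ref{ass:matching-bandwidth} does not make this $O(h^2+n^{-1/2})$. For $q\ge2$, the bandwidth $h=(\log\log n/n)^{1/(q+2)}$ satisfies every condition in \eqref{eq:matching_bandwidth}, since $nh^{q+2}=\log\log n\to\infty$. Yet $\sqrt{\log n}/(nh^q)$ exceeds $h^2+n^{-1/2}$ by a diverging factor.

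Appealing to ``relevant bandwidths'' does not help. The lemma, and Theorems \ref{thm:decomp_matching} and \ref{thm:decomp_ersplus} that invoke it, are stated for every $h$ satisfying Assumption \ref{ass:matching-bandwidth}. Your condition $nh^q\gtrsim\sqrt{n\log n}$ also fails at $h\asymp n^{-1/(q+4)}$ once $q\ge4$, and the paper uses $q=t_0$ and $q=2t_0$. The U-statistic refinement you offer as a fallback is never specified. As written, \eqref{eq:matching_local_mean} is not proved.

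\textbf{How to close it.} The supremum is needed only to keep $\hat\Gamma(s)^{-1}$ uniformly bounded on an event $\mathcal A_n$ with $\P(\mathcal A_n)\to1$. The size of the remainder should be controlled pointwise in $s$ and in expectation.
\begin{itemize}
\item On $\mathcal A_n$, write $\hat\Gamma^{-1}-\Gamma_h^{-1}=\hat\Gamma^{-1}(\Gamma_h-\hat\Gamma)\Gamma_h^{-1}$. The remainder is then at most $C\int\opnorm{\hat\Gamma(s)-\Gamma_h(s)}\,\norm{U(s)}\,d\nu(s)$, where $U(s)$ is the noise numerator.
\item Apply Cauchy--Schwarz inside the expectation, using $\sup_s\E\opnorm{\hat\Gamma(s)-\Gamma_h(s)}^2\lesssim 1/(nh^q)$. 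This is the same computation as for $U$ and carries no logarithm.
\item The remainder is then $O_\P(1/(nh^q))=o_\P(h^2)$, exactly because $nh^{q+2}\to\infty$. This is how the paper closes the argument.
\end{itemize}

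\textbf{A smaller point in Step 2.} You condition on the whole design and use $\E[\varepsilon_{ij}\mid S]=0$ with conditional variances bounded by $\sup\sigma^2$. For $J\ge2$ the hypotheses only give these properties conditional on the own regressor $S_{ij}$. The within-block dependence is real in the AggS-m application, where $J=2$. Take unconditional expectations instead. The block sums $\sum_jK_{ij}Z_{ij}\varepsilon_{ij}$ are independent across $i$ and have mean zero, and Jensen within each block gives the same $1/(nh^q)$ bound.
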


\begin{proof}
Write $a\lesssim b$ when $a\le Cb$ for a constant independent of $n,h,s$. For the local-linear fit in \eqref{eq:matching_local_fit}, define
\begin{equation}\label{eq:matching_local_basis}
K(u)=\mathbf 1\{\norm{u}\le1\},
\qquad z(u)=(1,u')',
\qquad u_{i,j}(s)=\frac{S_{i,j}-s}{h},
\qquad e_1=(1,0,\ldots,0)'.
\end{equation}
The empirical design matrix and its expectation are
\begin{align}
Q_{n,h}(s)
&=\frac{1}{nJh^q}\sum_{i=1}^n\sum_{j=1}^J
K(u_{i,j}(s))z(u_{i,j}(s))z(u_{i,j}(s))',
\nonumber\\
\E[Q_{n,h}(s)]
&=\int_{\{u:\norm{u}\le1,\ s+hu\in\mathcal S\}}
z(u)z(u)'f(s+hu)\,du.
\label{eq:matching_local_design}
\end{align}
Since $f$ is bounded on the compact set $\mathcal S$ and $\norm{z(u)}$ is bounded on $B(0,1)$,
\begin{equation}\label{eq:matching_population_design_upper_bound}
\sup_{s\in\mathcal S}\opnorm{\E[Q_{n,h}(s)]}\lesssim1.
\end{equation}

First, we derive a uniform lower bound on $\E[Q_{n,h}(s)]$. Choose a ball $B(c,r)\subset\mathcal S$, and set $L=2\{\operatorname{diam}(\mathcal S)+r\}$. For $0<h\le L$,
\begin{equation}\label{eq:matching_support_ball}
s+hB\left(\frac{c-s}{L},\frac rL\right)
=\left(1-\frac hL\right)s+\frac hL B(c,r)
\subset\mathcal S,
\qquad
B\left(\frac{c-s}{L},\frac rL\right)\subset B(0,1).
\end{equation}
Let $\rho=r/L$ and $a_s=(c-s)/L$, so $\norm{a_s}\le1/2$. For $v=(v_0,v_*)\in\R^{q+1}$,
\begin{align}
v'\E[Q_{n,h}(s)]v
&\ge \left(\inf_{\mathcal S}f\right)
\int_{B(a_s,\rho)}(v_0+v_*'u)^2\,du\nonumber\\
&=\left(\inf_{\mathcal S}f\right)|B(0,\rho)|
\left\{(v_0+v_*'a_s)^2+\frac{\rho^2}{q+2}\norm{v_*}^2\right\}.
\label{eq:matching_design_lower_bound}
\end{align}
Since $\norm{a_s}\le1/2$ and $a^2\le2(a+b)^2+2b^2$, we have
\begin{align}
v_0^2
&\le2(v_0+v_*'a_s)^2+2(v_*'a_s)^2\nonumber\\
&\le2(v_0+v_*'a_s)^2+2\norm{a_s}^2\norm{v_*}^2\nonumber\\
&\le2(v_0+v_*'a_s)^2+\norm{v_*}^2/2.
\label{eq:matching_intercept_bound}
\end{align}
As a result, \eqref{eq:matching_design_lower_bound} and \eqref{eq:matching_intercept_bound} imply that, for a constant $c_Q>0$,
\begin{equation}\label{eq:matching_population_design_bounds}
\inf_{s\in\mathcal S}\lambda_{\min}(\E[Q_{n,h}(s)])\ge c_Q.
\end{equation}

For fixed $s\in\mathcal S$ and $k,l\in\{1,\ldots,q+1\}$, define
\begin{equation}\label{eq:matching_design_cluster_entry}
V_i=\frac1J\sum_{j=1}^J K(u_{i,j}(s))
z_k(u_{i,j}(s))z_l(u_{i,j}(s)).
\end{equation}
Since $K(u)|z_k(u)z_l(u)|\le K(u)$,
\[
|V_i|\le\frac1J\sum_{j=1}^J K(u_{i,j}(s))\le1.
\]
\eqref{eq:matching_design_cluster_entry} and Jensen's inequality give
\begin{align}
\Var(V_i)
&\le\E[V_i^2]\le\frac1J\sum_{j=1}^J
\E[K(u_{i,j}(s))^2z_k(u_{i,j}(s))^2z_l(u_{i,j}(s))^2]\nonumber\\
&\le\frac1J\sum_{j=1}^J\P(\norm{S_{i,j}-s}\le h)
=\int_{\mathcal S\cap B(s,h)}f(x)\,dx\nonumber\\
&\le h^q|B(0,1)|\sup_{x\in\mathcal S}f(x)
=:C_fh^q=O(h^q).
\label{eq:matching_design_cluster_variance}
\end{align}
Set $\xi_i=V_i-\E[V_i]$. The $\xi_i$ are independent across $i$, and \eqref{eq:matching_design_cluster_variance} gives
\begin{align}
\E[\xi_i]&=0,\qquad
|\xi_i|\le |V_i|+\E[|V_i|]\le2,\nonumber\\
\sum_{i=1}^n\E[\xi_i^2]
&=\sum_{i=1}^n\Var(V_i)\le nC_fh^q.
\label{eq:matching_design_bernstein_moments}
\end{align}
Using \eqref{eq:matching_local_design} and \eqref{eq:matching_design_cluster_entry}, Bernstein's inequality with \eqref{eq:matching_design_bernstein_moments} gives, for $0<t\le1$,
\begin{align}
\P\left(\left|[Q_{n,h}(s)-\E[Q_{n,h}(s)]]_{k,l}\right|>t\right)
&=\P\left(\left|\sum_{i=1}^n\xi_i\right|>nh^qt\right)\nonumber\\
&\le2\exp\left\{-\frac{n^2h^{2q}t^2}
{2\{\sum_{i=1}^n\E[\xi_i^2]+\frac23nh^qt\}}\right\}\nonumber\\
&\le2\exp\left\{-\frac{nh^qt^2}{2(C_f+2t/3)}\right\}\nonumber\\
&\le2\exp(-cnh^qt^2),
\label{eq:matching_design_bernstein}
\end{align}
where $c=\{2(C_f+2/3)\}^{-1}>0$ is independent of $s,k,l$.

Partition $\R^q$ into cubes of side $hn^{-3}$; choose one point of $\mathcal S$ in each occupied cube to form $\mathcal G_n$. By \eqref{eq:matching_bandwidth}, $\frac{h^{-q}}n=\frac1{nh^q}\to0$, so
\begin{equation}
\sup_{s\in\mathcal S}\min_{g\in\mathcal G_n}\norm{s-g}
\le\sqrt q\,hn^{-3},\qquad
|\mathcal G_n|
\le\left(2+\frac{\operatorname{diam}(\mathcal S)}{hn^{-3}}\right)^q
\lesssim h^{-q}n^{3q}\lesssim n^{3q+1}.
\label{eq:matching_grid_cover}
\end{equation}
Set $t_n=\sqrt{(3q+2)\log n/(cnh^q)}\to0$. For large $n$, \eqref{eq:matching_design_bernstein} and \eqref{eq:matching_grid_cover} give
\begin{align}
&\P\left(
\max_{\substack{s\in\mathcal G_n\\1\le k,l\le q+1}}
\left|[Q_{n,h}(s)-\E[Q_{n,h}(s)]]_{k,l}\right|>t_n
\right)\nonumber\\
&\qquad\le\sum_{s\in\mathcal G_n}\sum_{k,l=1}^{q+1}
\P\left(\left|[Q_{n,h}(s)-\E[Q_{n,h}(s)]]_{k,l}\right|>t_n\right)\nonumber\\
&\qquad\le2(q+1)^2|\mathcal G_n|\exp(-cnh^qt_n^2)\nonumber\\
&\qquad\lesssim n^{3q+1}\exp\{-(3q+2)\log n\}
=n^{-1}\to0.
\label{eq:matching_grid_design_tail}
\end{align}
By \eqref{eq:matching_grid_design_tail},
\begin{align}
&\max_{s\in\mathcal G_n}\opnorm{Q_{n,h}(s)-\E[Q_{n,h}(s)]}\nonumber\\
&\qquad\le(q+1)
\max_{\substack{s\in\mathcal G_n\\1\le k,l\le q+1}}
\left|[Q_{n,h}(s)-\E[Q_{n,h}(s)]]_{k,l}\right|
=O_\P\left(\sqrt{\frac{\log n}{nh^q}}\right).
\label{eq:matching_grid_design_rate}
\end{align}

Set $\Delta_n=\sqrt q\,hn^{-3}$. Define $\mathcal H_n(g)=\{x\in\R^q:|\norm{x-g}-h|\le\Delta_n\}$.
For large $n$, $\Delta_n\le h/2$, so
\begin{align}
\P(S_{i,j}\in\mathcal H_n(g))
&\le\left(\sup_{\mathcal S}f\right)|B(0,1)|
\{(h+\Delta_n)^q-(h-\Delta_n)^q\}\nonumber\\
&\le2q\left(\sup_{\mathcal S}f\right)|B(0,1)|
(h+\Delta_n)^{q-1}\Delta_n
\le C_Hh^qn^{-3},
\label{eq:matching_shell_probability}
\end{align}
for a constant $C_H$. Set $H_{i,n}(g)=J^{-1}\sum_{j=1}^J\mathbf1\{S_{i,j}\in\mathcal H_n(g)\}\in[0,1]$. By \eqref{eq:matching_shell_probability},
\begin{align}
\E[H_{i,n}(g)]
&=\frac1J\sum_{j=1}^J\P(S_{i,j}\in\mathcal H_n(g))
\le C_Hh^qn^{-3},\nonumber\\
\Var(H_{i,n}(g))
&\le\E[H_{i,n}(g)^2]\le\E[H_{i,n}(g)]\le C_Hh^qn^{-3}.
\label{eq:matching_shell_cluster_moments}
\end{align}
Independence across $i$, \eqref{eq:matching_shell_cluster_moments}, Bernstein's inequality, and \eqref{eq:matching_grid_cover} give, for $x>0$,
\begin{equation}\label{eq:matching_shell_tail}
\begin{aligned}
&\P\Bigg(
\max_{g\in\mathcal G_n}
\left|\frac1n\sum_{i=1}^n\{H_{i,n}(g)-\E[H_{i,n}(g)]\}\right|>
\sqrt{\frac{2C_Hh^qn^{-3}x}{n}}+\frac{2x}{3n}
\Bigg)\\
&\qquad\le\sum_{g\in\mathcal G_n}2e^{-x}
=2|\mathcal G_n|e^{-x}\lesssim n^{3q+1}e^{-x}.
\end{aligned}
\end{equation}
Taking $x=(3q+2)\log n$ in \eqref{eq:matching_shell_tail} and using \eqref{eq:matching_shell_cluster_moments} yields
\begin{equation}\label{eq:matching_design_shell}
\max_{g\in\mathcal G_n}\frac1{nh^q}\sum_{i=1}^n H_{i,n}(g)
=O_\P\left(
n^{-3}+\sqrt{\frac{n^{-3}\log n}{nh^q}}
+\frac{\log n}{nh^q}
\right).
\end{equation}
For $\norm{u},\norm{\tilde u}\le1$,
\begin{align}
z_k(u)z_l(u)-z_k(\tilde u)z_l(\tilde u)
&=\{z_k(u)-z_k(\tilde u)\}z_l(u)+z_k(\tilde u)\{z_l(u)-z_l(\tilde u)\},\nonumber\\
\Longrightarrow\quad |z_k(u)z_l(u)-z_k(\tilde u)z_l(\tilde u)|
&\le |z_k(u)-z_k(\tilde u)|+|z_l(u)-z_l(\tilde u)|
\le2\norm{u-\tilde u}.
\label{eq:matching_polynomial_increment}
\end{align}
For each $s\in\mathcal S$, choose $g\in\mathcal G_n$ with $\norm{s-g}\le\Delta_n$, using \eqref{eq:matching_grid_cover}. Then
\begin{equation}\label{eq:matching_argument_distances}
\norm{u_{i,j}(s)-u_{i,j}(g)}
=\frac{\norm{s-g}}h\le\frac{\Delta_n}h,\qquad
\big|\norm{S_{i,j}-s}-\norm{S_{i,j}-g}\big|
\le\norm{s-g}\le\Delta_n.
\end{equation}
If $K(u_{i,j}(s))\ne K(u_{i,j}(g))$, the definition of $K$ and \eqref{eq:matching_argument_distances} give
\begin{align}
&\min\{\norm{S_{i,j}-s},\norm{S_{i,j}-g}\}
\le h<\max\{\norm{S_{i,j}-s},\norm{S_{i,j}-g}\},\nonumber\\
&\big|\norm{S_{i,j}-g}-h\big|
\le\big|\norm{S_{i,j}-g}-\norm{S_{i,j}-s}\big|\le\Delta_n.
\label{eq:matching_kernel_change_shell}
\end{align}
By \eqref{eq:matching_kernel_change_shell}, $S_{i,j}\in\mathcal H_n(g)$ whenever $K(u_{i,j}(s))\neq K(u_{i,j}(g))$. Since $K(u)\in\{0,1\}$,
\begin{equation}\label{eq:matching_kernel_increment}
|K(u_{i,j}(s))-K(u_{i,j}(g))|
\le\mathbf1\{S_{i,j}\in\mathcal H_n(g)\}.
\end{equation}
To derive \eqref{eq:matching_weighted_polynomial_increment}, for $u,\tilde u\in\R^q$, decompose
\begin{align}
&K(u)z_k(u)z_l(u)-K(\tilde u)z_k(\tilde u)z_l(\tilde u)\nonumber\\
&\quad=K(u)K(\tilde u)\{z_k(u)z_l(u)-z_k(\tilde u)z_l(\tilde u)\}\nonumber\\
&\qquad+K(u)\{1-K(\tilde u)\}z_k(u)z_l(u)
-\{1-K(u)\}K(\tilde u)z_k(\tilde u)z_l(\tilde u).
\label{eq:matching_kernel_product_decomposition}
\end{align}
Since $K(u)\in\{0,1\}$ and $K(u)|z_k(u)z_l(u)|\le K(u)$,
\begin{align}
&K(u)\{1-K(\tilde u)\}|z_k(u)z_l(u)|
+\{1-K(u)\}K(\tilde u)|z_k(\tilde u)z_l(\tilde u)|\nonumber\\
&\quad\le K(u)\{1-K(\tilde u)\}+\{1-K(u)\}K(\tilde u)
=|K(u)-K(\tilde u)|.
\label{eq:matching_kernel_difference_bound}
\end{align}
\noeqref{eq:matching_kernel_increment,eq:matching_kernel_product_decomposition}%
Moreover, $K(u)K(\tilde u)=1$ implies $\norm{u},\norm{\tilde u}\le1$, so \eqref{eq:matching_polynomial_increment} applies. The triangle inequality and \eqref{eq:matching_argument_distances}--\eqref{eq:matching_kernel_difference_bound} give
\begin{align}
&\big|K(u_{i,j}(s))z_k(u_{i,j}(s))z_l(u_{i,j}(s))
-K(u_{i,j}(g))z_k(u_{i,j}(g))z_l(u_{i,j}(g))\big|\nonumber\\
&\enspace\le K\!(u_{i,j}\!(s))K\!(u_{i,j}\!(g))
\big|z_k\!(u_{i,j}\!(s))z_l\!(u_{i,j}\!(s))-z_k\!(u_{i,j}\!(g))z_l\!(u_{i,j}\!(g))\big|
+|K\!(u_{i,j}\!(s))-K\!(u_{i,j}\!(g))|\nonumber\\
&\enspace\le2K(u_{i,j}(s))K(u_{i,j}(g))
\norm{u_{i,j}(s)-u_{i,j}(g)}
+|K(u_{i,j}(s))-K(u_{i,j}(g))|\nonumber\\
&\enspace\le\frac{2\Delta_n}h+\mathbf1\{S_{i,j}\in\mathcal H_n(g)\}.
\label{eq:matching_weighted_polynomial_increment}
\end{align}
By \eqref{eq:matching_local_design} and \eqref{eq:matching_weighted_polynomial_increment},
\begin{equation}\label{eq:matching_design_grid_increment}
\opnorm{Q_{n,h}(s)-Q_{n,h}(g)}
\le(q+1)\left\{\frac{2\Delta_n}{h^{q+1}}
+\frac1{nh^q}\sum_{i=1}^n H_{i,n}(g)\right\}.
\end{equation}
Taking expectations in \eqref{eq:matching_design_grid_increment} and using \eqref{eq:matching_shell_cluster_moments} gives
\begin{align}
\opnorm{\E[Q_{n,h}(s)]-\E[Q_{n,h}(g)]}
&\le\E[\opnorm{Q_{n,h}(s)-Q_{n,h}(g)}]\nonumber\\
&\le(q+1)\left\{\frac{2\Delta_n}{h^{q+1}}+C_Hn^{-3}\right\}.
\label{eq:matching_expected_design_grid_increment}
\end{align}
By \eqref{eq:matching_bandwidth},
\begin{equation}\label{eq:matching_grid_increment_rate}
\frac{\Delta_n}{h^{q+1}}
=\sqrt q\,n^{-3}h^{-q}\lesssim n^{-2}.
\end{equation}
Equations \eqref{eq:matching_grid_design_rate}, \eqref{eq:matching_design_shell}, \eqref{eq:matching_design_grid_increment}, \eqref{eq:matching_expected_design_grid_increment}, and \eqref{eq:matching_grid_increment_rate}, with $t_n\to0$ and $n^{-2}/t_n\lesssim n^{-3/2}/\sqrt{\log n}\to0$, give
\begin{align}
&\sup_{s\in\mathcal S}\opnorm{Q_{n,h}(s)-\E[Q_{n,h}(s)]}\nonumber\\
&\quad\le\max_{g\in\mathcal G_n}\opnorm{Q_{n,h}(g)-\E[Q_{n,h}(g)]}
+Cn^{-2}+\frac{q+1}{nh^q}\max_{g\in\mathcal G_n}\sum_{i=1}^n H_{i,n}(g)\nonumber\\
&\quad=O_\P(t_n+n^{-2}+n^{-3}+n^{-3/2}t_n+t_n^2)
=O_\P\left(\sqrt{\frac{\log n}{nh^q}}\right)=o_\P(1).
\label{eq:matching_uniform_design}
\end{align}
By \eqref{eq:matching_uniform_design}, the event
\begin{equation}\label{eq:matching_design_event}
\mathcal A_n=
\left\{\sup_{s\in\mathcal S}\opnorm{Q_{n,h}(s)-\E[Q_{n,h}(s)]}
\le\frac{c_Q}{2}\right\}
\end{equation}
has probability tending to one. On $\mathcal A_n$, \eqref{eq:matching_population_design_bounds}, \eqref{eq:matching_population_design_upper_bound}, and \eqref{eq:matching_design_event} give
\begin{align}
\inf_{s\in\mathcal S}\lambda_{\min}(Q_{n,h}(s))
&\ge\inf_{s\in\mathcal S}\lambda_{\min}(\E[Q_{n,h}(s)])
-\sup_{s\in\mathcal S}\opnorm{Q_{n,h}(s)-\E[Q_{n,h}(s)]}\nonumber\\
&\ge c_Q-\frac{c_Q}{2}=\frac{c_Q}{2}.\nonumber\\
\intertext{Thus,}
\sup_{s\in\mathcal S}\opnorm{Q_{n,h}(s)^{-1}}
&=\left\{\inf_{s\in\mathcal S}\lambda_{\min}(Q_{n,h}(s))\right\}^{-1}
\le\frac2{c_Q}.\nonumber\\
\intertext{In addition,}
\sup_{s\in\mathcal S}\opnorm{Q_{n,h}(s)}
&\le\sup_{s\in\mathcal S}\opnorm{\E[Q_{n,h}(s)]}
+\sup_{s\in\mathcal S}\opnorm{Q_{n,h}(s)-\E[Q_{n,h}(s)]}\nonumber\\
&\le\sup_{s\in\mathcal S}\opnorm{\E[Q_{n,h}(s)]}+\frac{c_Q}{2}\lesssim1.
\label{eq:matching_empirical_design_bounds}
\end{align}

To separate the smoothing and sampling errors, let
\begin{equation}\label{eq:matching_row_residual}
\epsilon_{i,j}=Y_{i,j}-\mu(S_{i,j}),
\qquad
d_{i,j}(s)=\mu(S_{i,j})-\mu(s)-\nabla\mu(s)'(S_{i,j}-s).
\end{equation}
Taylor's theorem on the convex support gives $|d_{i,j}(s)|\lesssim h^2$ whenever $K(u_{i,j}(s))=1$. Define
\begin{align}
T_{n,h}(s)
&=\frac{1}{nJh^q}\sum_{i,j}
K(u_{i,j}(s))z(u_{i,j}(s))d_{i,j}(s),\nonumber\\
U_{n,h}(s)
&=\frac{1}{nJh^q}\sum_{i,j}
K(u_{i,j}(s))z(u_{i,j}(s))\epsilon_{i,j}.
\label{eq:matching_score_decomposition}
\end{align}
Hence, on $\mathcal A_n$,
\begin{equation}\label{eq:matching_fit_error_decomposition}
\hat\mu_h(s)-\mu(s)
=e_1'Q_{n,h}(s)^{-1}T_{n,h}(s)
+e_1'Q_{n,h}(s)^{-1}U_{n,h}(s),
\qquad
\sup_{s\in\mathcal S}\norm{T_{n,h}(s)}\lesssim h^2.
\end{equation}
For the bound on $T_{n,h}$, use $\norm{z(u)}\le\sqrt2$ and $\left|[Q_{n,h}]_{1,1}\right|\le\opnorm{Q_{n,h}}\lesssim1$ from \eqref{eq:matching_empirical_design_bounds}.

By Assumption \ref{ass:local-linear-regularity}, we have
\begin{align}
\E\left[\frac1J\sum_{j=1}^J
K(u_{i,j}(s))z(u_{i,j}(s))\epsilon_{i,j}\right]
&=\E\left[
K\left(\frac{S-s}{h}\right)z\left(\frac{S-s}{h}\right)
\{Y-\mu(S)\}\right]=0.
\label{eq:matching_cluster_score_centering}
\end{align}
Therefore,
\begin{align}
\E[\norm{U_{n,h}(s)}^2]
&=\frac{1}{n^2h^{2q}}\sum_{i=1}^n
\E\left[\norm{\frac1J\sum_{j=1}^J
K(u_{i,j}(s))z(u_{i,j}(s))\epsilon_{i,j}}^2\right]\nonumber\\
&\le\frac{1}{n^2Jh^{2q}}\sum_{i=1}^n\sum_{j=1}^J
\E\left[K(u_{i,j}(s))^2\norm{z(u_{i,j}(s))}^2\epsilon_{i,j}^2\right]\nonumber\\
&=\frac{1}{nh^{2q}}\,
\E\left[
K\left(\frac{S-s}{h}\right)^2
\norm{z\left(\frac{S-s}{h}\right)}^2
\{Y-\mu(S)\}^2\right]\nonumber\\
&=\frac{1}{nh^{2q}}\int_{\mathcal S\cap B(s,h)}
\norm{z\left(\frac{x-s}{h}\right)}^2\sigma^2(x)f(x)\,dx\nonumber\\
&\le\frac{2|B(0,1)|h^q}{nh^{2q}}
\left(\sup_{x\in\mathcal S}\sigma^2(x)\right)
\left(\sup_{x\in\mathcal S}f(x)\right)
\lesssim\frac{1}{nh^q}.
\label{eq:matching_score_second_moment}
\end{align}
Using the same argument as in \eqref{eq:matching_score_second_moment}, we obtain
\begin{equation}\label{eq:matching_design_second_moment}
\sup_{s\in\mathcal S}
\E[\opnorm{Q_{n,h}(s)-\E[Q_{n,h}(s)]}^2]
\lesssim\frac{1}{nh^q}.
\end{equation}
Integrating \eqref{eq:matching_fit_error_decomposition} and using \eqref{eq:matching_score_second_moment},
\begin{equation}\label{eq:matching_integrated_error_bound}
\mathbf1_{\mathcal A_n}
\int\{\hat\mu_h(s)-\mu(s)\}^2\,d\nu(s)
\lesssim h^4+\int\norm{U_{n,h}(s)}^2\,d\nu(s),\quad
\E\left[\int\norm{U_{n,h}(s)}^2\,d\nu(s)\right]
\lesssim\frac{1}{nh^q}.
\end{equation}
Markov's inequality and $\P(\mathcal A_n)\to1$ prove \eqref{eq:matching_local_L2}.

To prove \eqref{eq:matching_local_mean}, we first recall the identity
\begin{equation}\label{eq:matching_inverse_identity}
Q_{n,h}^{-1}-\E[Q_{n,h}]^{-1}
=Q_{n,h}^{-1}(\E[Q_{n,h}]-Q_{n,h})\E[Q_{n,h}]^{-1},
\end{equation}
which separates the deterministic inverse from its estimation error. On $\mathcal A_n$, \eqref{eq:matching_inverse_identity} implies
\begin{align}
\left|\int e_1'(Q_{n,h}^{-1}-\E[Q_{n,h}]^{-1})U_{n,h}\,d\nu\right|
&\lesssim
\int\opnorm{Q_{n,h}-\E[Q_{n,h}]}\norm{U_{n,h}}\,d\nu,\label{eq:matching_integrated_inverse_error}\\
\E\left[\int\opnorm{Q_{n,h}-\E[Q_{n,h}]}\norm{U_{n,h}}\,d\nu\right]
&\le\int
\sqrt{\E[\opnorm{Q_{n,h}-\E[Q_{n,h}]}^2]\E[\norm{U_{n,h}}^2]}\,d\nu
\lesssim\frac{1}{nh^q}.\nonumber
\end{align}
Thus \eqref{eq:matching_integrated_inverse_error} is $O_\P((nh^q)^{-1})=o_\P(h^2)$ by \eqref{eq:matching_bandwidth}.

Integrating \eqref{eq:matching_fit_error_decomposition}, on $\mathcal A_n$,
\begin{align}
\int\{\hat\mu_h(s)-\mu(s)\}\,d\nu(s)
&=\int e_1'Q_{n,h}(s)^{-1}T_{n,h}(s)\,d\nu(s)\nonumber\\
&\quad+\int e_1'\{Q_{n,h}(s)^{-1}-\E[Q_{n,h}(s)]^{-1}\}U_{n,h}(s)\,d\nu(s)\nonumber\\
&\quad+\int e_1'\E[Q_{n,h}(s)]^{-1}U_{n,h}(s)\,d\nu(s).
\label{eq:matching_integrated_mean_decomposition}
\end{align}
By \eqref{eq:matching_empirical_design_bounds} and \eqref{eq:matching_fit_error_decomposition}, on $\mathcal A_n$,
\begin{align}
\left|\int e_1'Q_{n,h}(s)^{-1}T_{n,h}(s)\,d\nu(s)\right|
&\le\sup_{s\in\mathcal S}\opnorm{Q_{n,h}(s)^{-1}}
\sup_{s\in\mathcal S}\norm{T_{n,h}(s)}
\lesssim h^2.
\label{eq:matching_integrated_smoothing_bound}
\end{align}
Note that
\begin{align}
&\int e_1'\E[Q_{n,h}(s)]^{-1}U_{n,h}(s)\,d\nu(s)
=\frac1n\sum_{i=1}^n\psi_{i,h},\nonumber\\
&\psi_{i,h}
=\frac{1}{Jh^q}\sum_{j=1}^J\epsilon_{i,j}
\int K(u_{i,j}(s))e_1'\E[Q_{n,h}(s)]^{-1}z(u_{i,j}(s))\,d\nu(s).
\label{eq:matching_integrated_cluster_score}
\end{align}
Equation \eqref{eq:matching_cluster_score_centering} gives $\E[\psi_{i,h}]=0$. Since $\nu$ has bounded density and the integrand is supported on $B(S_{i,j},h)$, \eqref{eq:matching_population_design_bounds} gives
\begin{equation}\label{eq:matching_integrated_cluster_variance}
|\psi_{i,h}|\lesssim\frac1J\sum_j|\epsilon_{i,j}|,
\qquad
\E[\psi_{i,h}^2]\lesssim\frac1J\sum_j\E[\epsilon_{i,j}^2]\lesssim1.
\end{equation}
By Chebyshev's inequality, we have
\begin{equation}\label{eq:matching_cluster_average_rate}
\frac1n\sum_{i=1}^n\psi_{i,h}=O_\P(n^{-1/2}).
\end{equation}
\noeqref{eq:matching_integrated_mean_decomposition,eq:matching_integrated_smoothing_bound,eq:matching_integrated_cluster_score}%
Combining \eqref{eq:matching_integrated_inverse_error}--\eqref{eq:matching_cluster_average_rate} proves \eqref{eq:matching_local_mean}.
\end{proof}

\begin{proof}[Proof of Theorem \ref{thm:decomp_matching}]
Let $\E^{(0)}$ denote the expectation conditional on the reservoir $\mathcal R$. Note that Assumption \ref{ass:pop-dist} implies the target experiment is independent of $\mathcal R$.
For each method $M$, write
\begin{equation}\label{eq:matching_estimation_error}
e_M=\hat\tau_M-g_M(\D^{(0)}).
\end{equation}
The exact decomposition \eqref{eq:MSE_step3}, conditional on $\mathcal R$, is
\begin{equation}\label{eq:matching_exact_RE}
\mathrm{RE}(\hat\tau_M)
=\Var^{(0)}(e_M)
+2\mathrm{Cov}^{(0)}\left(e_M,g_M(\D^{(0)})-\E[\tau_T^{(0)}\mid\D^{(0)}]\right).
\end{equation}
We prove the results for each method separately.

\bigskip
\noindent\textbf{Proof for ArmS-m. }Let $f_w$ and $\mathcal S_w$ denote the density and support of $m_{\sh,w}$. Define
\begin{equation}\label{eq:matching_regressions}
r_w(m)=\E[m_{T,w}\mid m_{\sh,w}=m],\qquad w\in\{0,1\}.
\end{equation}

By Assumption \ref{ass:local-linear-regularity}, apply Lemma \ref{lem:matching_local_linear_rates} with $J=1$, $S=m_{\sh,w}$, $Y=m_{T,w}$, and $\nu$ the distribution of $m_{\sh,w}^{(0)}$, for each $w\in\{0,1\}$. Note that Assumption \ref{ass:matching-bandwidth} holds for $q=t_0$. Then
\begin{align}
g_{\ArmSm}(\D)&=r_1(m_{\sh,1})-r_0(m_{\sh,0}),\nonumber\\
e_{\ArmSm}
&=\{\hat r_1(m_{\sh,1}^{(0)})-r_1(m_{\sh,1}^{(0)})\}
-\{\hat r_0(m_{\sh,0}^{(0)})-r_0(m_{\sh,0}^{(0)})\}.
\label{eq:matching_arm_error_decomposition}
\end{align}
By Lemma \ref{lem:matching_local_linear_rates} and \eqref{eq:matching_arm_error_decomposition},
\begin{equation}\label{eq:matching_difference_bounds}
\begin{aligned}
\E^{(0)}[e_{\ArmSm}^2]
&\le2\sum_{w=0}^1\int_{\mathcal S_w}
\{\hat r_w(m)-r_w(m)\}^2f_w(m)\,dm\\
&=O_\P\left(h^4+\frac1{nh^{t_0}}\right),\\
|\E^{(0)}[e_{\ArmSm}]|
&=\left|\sum_{w=0}^1(2w-1)\int_{\mathcal S_w}
\{\hat r_w(m)-r_w(m)\}f_w(m)\,dm\right|\\
&\le\sum_{w=0}^1\left|\int_{\mathcal S_w}
\{\hat r_w(m)-r_w(m)\}f_w(m)\,dm\right|\\
&=O_\P\left(h^2+\frac1{\sqrt n}\right).
\end{aligned}
\end{equation}

For $M=\ArmSm$, \eqref{eq:matching_difference_bounds}, Assumption \ref{ass:local-linear-regularity}, and Cauchy--Schwarz in \eqref{eq:matching_exact_RE} give
\begin{align}
|\mathrm{RE}(\hat\tau_M)|
&\le\Var^{(0)}(e_M)
+2\sqrt{\Var^{(0)}(e_M)\mathrm{AE}(\hat\tau_M)}\nonumber\\
&=O_\P\left(
h^4+\frac1{nh^{t_0}}+
\sqrt{h^4+\frac1{nh^{t_0}}}
\right)\nonumber\\
&=O_\P\left(h^2+\frac1{\sqrt{nh^{t_0}}}\right).
\label{eq:matching_covariance_rate}
\end{align}
This proves \eqref{eq:matching_RE_arm_pooled} for ArmS-m.

The law of iterated expectations gives
\begin{align}
\E[g_{\ArmSm}(\D^{(0)})]
&=\E[r_1(m_{\sh,1}^{(0)})]-\E[r_0(m_{\sh,0}^{(0)})]\nonumber\\
&=\E[m_{T,1}]-\E[m_{T,0}]=\E[\tau_T].
\label{eq:matching_arm_mean_target}
\end{align}
Equations \eqref{eq:matching_arm_mean_target} and \eqref{eq:matching_difference_bounds} give, for $M=\ArmSm$,
\begin{equation}\label{eq:matching_bias_from_mean_error}
\mathrm{Bias}^2(\hat\tau_M)
=\{\E^{(0)}[e_M]\}^2
=O_\P\left((h^2+n^{-1/2})^2\right)
=O_\P(h^4+n^{-1}).
\end{equation}
This proves \eqref{eq:matching_bias_ERS_arm} for ArmS-m.

Equation \eqref{eq:matching_arm_error_decomposition} and Assumption \ref{ass:local-linear-regularity} give
\begin{equation}\label{eq:matching_arm_AE}
\mathrm{AE}(\hat\tau_{\ArmSm})
=\Var\left(r_1(m_{\sh,1})-r_0(m_{\sh,0})
-\E[\tau_T\mid\D]\right)=O(1).
\end{equation}
This proves \eqref{eq:AE_matching} for ArmS-m. In general, $r_1(m_{\sh,1})-r_0(m_{\sh,0})-\E[\tau_T\mid\D]$ need not vanish. To see this, note that $m_{T,1}=m_{T,0}+\tau_T$, and Assumption \ref{ass:cond-indep-tau-alpha} gives
\begin{align}
&r_1(m_{\sh,1})-r_0(m_{\sh,0})-\E[\tau_T\mid\D]\nonumber\\
&=\E[m_{T,0}+\tau_T\mid m_{\sh,1}]
-\E[m_{T,0}\mid m_{\sh,0}]-\E[\tau_T\mid\tau_{\sh}]\nonumber\\
&=\{\E[m_{T,0}\mid m_{\sh,1}]-\E[m_{T,0}\mid m_{\sh,0}]\}
+\{\E[\tau_T\mid m_{\sh,1}]-\E[\tau_T\mid\tau_{\sh}]\}.
\label{eq:matching_arm_residual_expansion}
\end{align}
Assumption \ref{ass:cond-indep-tau-alpha} does not require the two differences in \eqref{eq:matching_arm_residual_expansion} to cancel.

\bigskip
\noindent\textbf{Proof for ERS. }By Assumption \ref{ass:local-linear-regularity}, apply Lemma \ref{lem:matching_local_linear_rates} with $J=1$, $S=\tau_{\sh}$, $Y=\tau_T$, $q=t_0$, and $\nu$ the distribution of $\tau_{\sh}^{(0)}$. Define
\begin{equation}\label{eq:matching_ERS_target_error}
r_E(\tau)=\E[\tau_T\mid\tau_{\sh}=\tau],
\qquad
g_{\ERS}(\D)=r_E(\tau_{\sh}).
\end{equation}
Equations \eqref{eq:matching_local_L2} and \eqref{eq:matching_local_mean} give
\begin{equation}\label{eq:matching_ERS_estimation_rates}
\E^{(0)}[e_{\ERS}^2]
=O_\P\left(h^4+\frac{1}{nh^{t_0}}\right),
\qquad
\E^{(0)}[e_{\ERS}]
=O_\P\left(h^2+\frac1{\sqrt n}\right).
\end{equation}

\noindent By definition of $e_{\ERS}$ and $r_E$,
\begin{equation}\label{eq:matching_ERS_covariance}
\mathrm{Cov}^{(0)}\left(e_{\ERS},r_E(\tau_{\sh}^{(0)})-\E[\tau_T^{(0)}\mid\D^{(0)}]\right)=0.
\end{equation}
Equations \eqref{eq:matching_exact_RE} and \eqref{eq:matching_ERS_estimation_rates} prove \eqref{eq:matching_RE_ERS}.

The law of iterated expectations gives
\begin{equation}\label{eq:matching_mean_targets}
\E[g_{\ERS}(\D^{(0)})]
=\E[r_E(\tau_{\sh}^{(0)})]=\E[\tau_T].
\end{equation}
Equations \eqref{eq:matching_mean_targets} and \eqref{eq:matching_ERS_estimation_rates} give \eqref{eq:matching_bias_from_mean_error} for $M=\ERS$, proving \eqref{eq:matching_bias_ERS_arm} for ERS.

By Assumption \ref{ass:cond-indep-tau-alpha},
\begin{equation}\label{eq:matching_conditional_mean_reduction}
\E[\tau_T\mid\D]
=\E[\tau_T\mid\tau_{\sh},m_{\sh,0}]
=\E[\tau_T\mid\tau_{\sh}]
=r_E(\tau_{\sh}).
\end{equation}
Equations \eqref{eq:matching_ERS_target_error} and \eqref{eq:matching_conditional_mean_reduction} give $\mathrm{AE}(\hat\tau_{\ERS})=0$, proving \eqref{eq:AE_matching} for ERS.

\bigskip
\noindent\textbf{Proof for AggS-m. }Define
\begin{equation}\label{eq:matching_pooled_regression}
f_G(m)=\frac{f_0(m)+f_1(m)}{2},
\qquad
r_G(m)=\frac{f_0(m)r_0(m)+f_1(m)r_1(m)}{f_0(m)+f_1(m)}.
\end{equation}

In Lemma \ref{lem:matching_local_linear_rates}, take $J=2$ and let $((S_{i,1},Y_{i,1}),(S_{i,2},Y_{i,2}))$ be a uniform random permutation of $((m_{\sh,0}^{(i)},m_{T,0}^{(i)}),(m_{\sh,1}^{(i)},m_{T,1}^{(i)}))$, with the permutations independent across $i$ and of $\mathcal R$. The sum over $j$ in \eqref{eq:matching_local_fit} is unchanged by this permutation, so $\hat r_G=\hat\mu_h$. By \eqref{eq:matching_pooled_regression}, $f_G$ is the density of $S_{i,j}$, and
\begin{equation}\label{eq:matching_pooled_row_law}
\E[Y_{i,j}\mid S_{i,j}=m]=r_G(m),
\quad j=1,2.
\end{equation}
By Assumption \ref{ass:local-linear-regularity}, for $w\in\{0,1\}$,
\begin{equation}\label{eq:matching_pooled_evaluation_density}
0\le f_w(m)\le2f_G(m)\le2\sup_x f_G(x)<\infty,
\qquad
\mathcal S_w\subseteq\operatorname{supp}(f_G).
\end{equation}
Using \eqref{eq:matching_pooled_row_law} and \eqref{eq:matching_pooled_evaluation_density}, Lemma \ref{lem:matching_local_linear_rates} with $q=t_0$ and $d\nu(m)=f_w(m)\,dm$ gives
\begin{align}
\int_{\mathcal S_w}\{\hat r_G(m)-r_G(m)\}^2f_w(m)\,dm
&=O_\P\left(h^4+\frac{1}{nh^{t_0}}\right),\nonumber\\
\int_{\mathcal S_w}\{\hat r_G(m)-r_G(m)\}f_w(m)\,dm
&=O_\P\left(h^2+\frac1{\sqrt n}\right).
\label{eq:matching_pooled_evaluation_rates}
\end{align}
Define
\begin{align}
g_{\AggSm}(\D)&=r_G(m_{\sh,1})-r_G(m_{\sh,0}),\nonumber\\
e_{\AggSm}
&=\{\hat r_G(m_{\sh,1}^{(0)})-r_G(m_{\sh,1}^{(0)})\}
-\{\hat r_G(m_{\sh,0}^{(0)})-r_G(m_{\sh,0}^{(0)})\}.
\label{eq:matching_pooled_error_decomposition}
\end{align}
Using \eqref{eq:matching_pooled_error_decomposition} and \eqref{eq:matching_pooled_evaluation_rates} and using the same argument as \eqref{eq:matching_difference_bounds}, we obtain that
\begin{equation}\label{eq:matching_pooled_estimation_rates}
\E^{(0)}[e_{\AggSm}^2]
=O_\P\left(h^4+\frac1{nh^{t_0}}\right),
\qquad
\E^{(0)}[e_{\AggSm}]
=O_\P\left(h^2+\frac1{\sqrt n}\right).
\end{equation}
Similar to \eqref{eq:matching_covariance_rate}, we can prove \eqref{eq:matching_RE_arm_pooled} for AggS-m.

Substituting \eqref{eq:matching_pooled_regression} gives
\begin{align}
\E[g_{\AggSm}(\D^{(0)})]-\E[\tau_T]
&=\int\{(f_1-f_0)r_G-f_1r_1+f_0r_0\}\,dm\nonumber\\
&=\int
\frac{(f_1-f_0)(f_1r_1+f_0r_0)
-(f_1+f_0)(f_1r_1-f_0r_0)}
{f_1+f_0}\,dm\nonumber\\
&=\int_{\mathcal S_0\cap\mathcal S_1}
\frac{2f_0f_1}{f_0+f_1}(r_0-r_1)\,dm
=:b_{0,G}.
\label{eq:matching_pooled_bias_calculation}
\end{align}
Then,
\begin{align}
\mathrm{Bias}^2(\hat\tau_{\AggSm})
&=\{b_{0,G}+\E^{(0)}[e_{\AggSm}]\}^2\nonumber\\
&=b_{0,G}^2
+2b_{0,G}\E^{(0)}[e_{\AggSm}]
+\{\E^{(0)}[e_{\AggSm}]\}^2.
\label{eq:matching_pooled_bias_square}
\end{align}
Substituting \eqref{eq:matching_pooled_estimation_rates} into \eqref{eq:matching_pooled_bias_square} proves both cases of \eqref{eq:matching_bias_pooled}.

By Equation \eqref{eq:matching_pooled_error_decomposition} and Assumption \ref{ass:local-linear-regularity},
\begin{equation}\label{eq:matching_pooled_AE}
\mathrm{AE}(\hat\tau_{\AggSm})
=\Var\left(r_G(m_{\sh,1})-r_G(m_{\sh,0})
-\E[\tau_T\mid\D]\right)=O(1),
\end{equation}
proving \eqref{eq:AE_matching} for AggS-m.
\end{proof}

\subsection{Proof of Theorem \ref{thm:decomp_ersplus}}\label{subapp:proof_thm_decomp_ersplus}

\begin{proof}
Apply Lemma \ref{lem:matching_local_linear_rates} with $J=1$, $S=\D$, $Y=\tau_T$, $q=2t_0$, $h=h_+$, and $\nu$ the distribution of $\D^{(0)}$. Clearly, the assumptions of Lemma \ref{lem:matching_local_linear_rates} hold under Assumptions \ref{ass:pop-dist}, \ref{ass:local-linear-regularity}, \ref{ass:matching-bandwidth}. Define
\begin{equation}\label{eq:ersplus_estimation_error}
e_{\ERSplus}=\hat\tau_{\ERSplus}-r_+(\D^{(0)}).
\end{equation}
Equations \eqref{eq:matching_local_L2} and \eqref{eq:matching_local_mean} give
\begin{equation}\label{eq:ersplus_estimation_rates}
\E^{(0)}[e_{\ERSplus}^2]
=O_\P\left(h_+^4+\frac{1}{nh_+^{2t_0}}\right),
\qquad
\E^{(0)}[e_{\ERSplus}]
=O_\P\left(h_+^2+\frac1{\sqrt n}\right).
\end{equation}
In particular, $\Var^{(0)}(e_{\ERSplus})\le\E^{(0)}[e_{\ERSplus}^2]=o_\P(1)$, as required by Proposition \ref{prop:MSE_decomp}. Note that $r_+(\D)=\E[\tau_T\mid\D]$. Thus,
\begin{equation}\label{eq:ersplus_target_identities}
r_+(\D)-\E[\tau_T\mid\D]
=0,\qquad
\E^{(0)}[r_+(\D^{(0)})]
=\E[r_+(\D)]
=\E[\E[\tau_T\mid\D]]
=\E[\tau_T].
\end{equation}
Thus,
\begin{align}
\mathrm{RE}(\hat\tau_{\ERSplus})
&=\Var^{(0)}(e_{\ERSplus})
\le\E^{(0)}[e_{\ERSplus}^2],\nonumber\\
\mathrm{Bias}^2(\hat\tau_{\ERSplus})
&=\{\E^{(0)}[e_{\ERSplus}]\}^2
=O_\P(h_+^4+n^{-1}),\nonumber\\
\mathrm{AE}(\hat\tau_{\ERSplus})
&=\Var\{r_+(\D)-\E[\tau_T\mid\D]\}=0.
\label{eq:ersplus_error_components}
\end{align}
Equations \eqref{eq:ersplus_estimation_rates} and \eqref{eq:ersplus_error_components} prove \eqref{eq:ersplus_RE}, \eqref{eq:ersplus_bias}, and \eqref{eq:ersplus_AE}.
\end{proof}

\section{Algorithms}\label{subapp:algorithms}

\subsection{The Unmatched ERS Algorithm (ERS-nm)}

See Algorithm \ref{alg:ers-unm}.

\begin{algorithm}
    \caption{Experiment Reservoir Surrogate without Matching (ERS-nm)}\label{alg:ers-unm}
    \begin{algorithmic}
        \STATE {\bfseries Input:} vectors of long-term and short-term ATEs from the reservoir $y=(\ATE_{T}^{(1)}, \ldots, \ATE_{T}^{(n)})'$ and\\ $X^{(i)}=(1, \ATE_{1}^{(i)}, \ldots, \ATE_{t_0}^{(i)})'$ for $i\in [n]$,  
        vector of short-term ATEs from the target experiment $X^{(0)}=(1, \ATE_{1}^{(0)}, \ldots, \ATE_{t_0}^{(0)})'$.

        \vspace{0.2cm}
        \STATE {\bfseries Procedure:}

        \STATE Fit a predictor $\hat{f}(x)$ on $(X, y)$.
        
        \vspace{0.2cm}
        \STATE {\bfseries Output:} $\hat y^{(0)}=\hat{f}(X^{(0)})$        
    \end{algorithmic}
    
\end{algorithm}

\subsection{The ERS with Control Covariates Algorithm (ERS+)}

See Algorithm \ref{alg:ers-with-control}.

\begin{algorithm}
    \caption{Experiment Reservoir Surrogate with Control Covariates (ERS+)}\label{alg:ers-with-control}
    \begin{algorithmic}
        \STATE {\bfseries Input:} long-term ATEs $y=(\ATE_T^{(1)},\ldots,\ATE_T^{(n)})'$ and joint short-term covariates $Z^{(i)}=(1,\ATE_1^{(i)},\ldots,\ATE_{t_0}^{(i)},\bar Y_{1,0}^{(i)},\ldots,\bar Y_{t_0,0}^{(i)})'$ for $i\in[n]$, the corresponding target vector $Z^{(0)}$, distance metric $\delta$, integer $k$.

        \vspace{0.2cm}
        \STATE {\bfseries Procedure:} 

        \STATE Set $Z=(Z^{(1)},\ldots,Z^{(n)})'$ and find matches $\tilde Z=M_\delta(Z,Z^{(0)};k)$.

        \STATE Find the corresponding long-term ATEs $\tilde y$ from $y$.
        
        \STATE Fit a predictor $\hat f(z)$ on $(\tilde Z,\tilde y)$.
        
        \vspace{0.2cm}
        \STATE {\bfseries Output:} $\hat\tau_{\ERSplus}=\hat f(Z^{(0)})$.
    \end{algorithmic}
    
\end{algorithm}

\subsection{The Matched Aggregate Surrogate Algorithm (AggS-m)}

See Algorithm \ref{alg:agg-surr-variant1}.

\begin{algorithm}[H]
\caption{Aggregate Surrogate with Matching (AggS-m)}\label{alg:agg-surr-variant1}
    \begin{algorithmic}
        \STATE {\bfseries Input:}  vectors of long-term and short-term group averages from the reservoir $y_w=(\bar Y_{T,w}^{(1)}, \ldots, \bar Y_{T,w}^{(n)})'$ and $X_{w}^{(i)} = (1, \bar Y_{1,w}^{(i)},\ldots, \bar Y_{t_0,w}^{(i)})'$ for $i\in [n]$ and $w\in\{0,1\}$, vectors of short-term group averages from the target experiments $X_{w}^{(0)}=(1, \bar Y_{1,w}^{(0)}, \ldots, \bar Y_{t_0, w}^{(0)})'$ for $w\in\{0,1\}$. distance metric $\delta$, integer $k$.

        \vspace{0.2cm}
        \STATE {\bfseries Procedure:}

        \STATE Set $X_w = (X_{w}^{(1)}, \ldots, X_{w}^{(n)})'$ for $w\in \{0, 1\}$.
        \STATE Pool the reservoir rows: $X=\begin{bmatrix}X_0\\X_1\end{bmatrix}$ and $y=\begin{bmatrix}y_0\\y_1\end{bmatrix}$.

        \STATE For each $w\in\{0,1\}$, find matches $\tilde X_w=M_\delta(X,X_w^{(0)};k)$ and the corresponding long-term group averages $\tilde y_w$ from $y$.
        
        \STATE Fit a predictor $\hat g_w(x)$ on $(\tilde X_w,\tilde y_w)$ for each $w\in\{0,1\}$.

        \vspace{0.2cm}
        \STATE {\bfseries Output:} $\hat{\tau}_T^{(0)} = \hat{g}_1(X_{1}^{(0)}) -\hat{g}_0(X_{0}^{(0)})$
    \end{algorithmic}
    
\end{algorithm}

\subsection{The Matched Armwise Surrogate Algorithm (ArmS-m)}

See Algorithm \ref{alg:arm-surr-variant1}.

\nopagebreak
\begin{algorithm}[H]
\caption{Aggregate Surrogate with Matching (ArmS-m)}\label{alg:arm-surr-variant1}
    \begin{algorithmic}
        \STATE {\bfseries Input:} Same as Algorithm \ref{alg:agg-surr-variant1}.

        \vspace{0.2cm}
        \STATE {\bfseries Procedure:}

        \STATE Set $X_w = (X_w^{(1)}, \ldots, X_w^{(n)})'$ for $w\in \{0, 1\}$.

        \STATE For each $w\in\{0,1\}$, find matches $\tilde X_w=M_\delta(X_w,X_w^{(0)};k)$

        \STATE Find the corresponding long-term group average $\tilde y_w$.

        \STATE Fit a predictor $\hat g_w(x)$ on $(\tilde X_w, \tilde y_w)$.

        \vspace{0.2cm}
        \STATE {\bfseries Output:} $\hat{\tau}_T^{(0)} = \hat{g}_1(X_1^{(0)}) -\hat{g}_0(X_0^{(0)})$
    \end{algorithmic}
    
\end{algorithm}

\section{Additional Figures}\label{subapp:figures}

Figure \ref{fig:ks} is a plot of the value of the number of matched metrics $k$ for each reservoir size and ERS method (with and without control covariates), shows as a percentage of the total number of metrics available in the experiment reservoir. Our matching process imposed a maximum of one third of the total, and in general the number of matched metrics ranges close to that maximum, increasing slightly on average as the reservoir size increases.

\begin{figure}[H]
    \centering
    \includegraphics[width=0.65\linewidth]{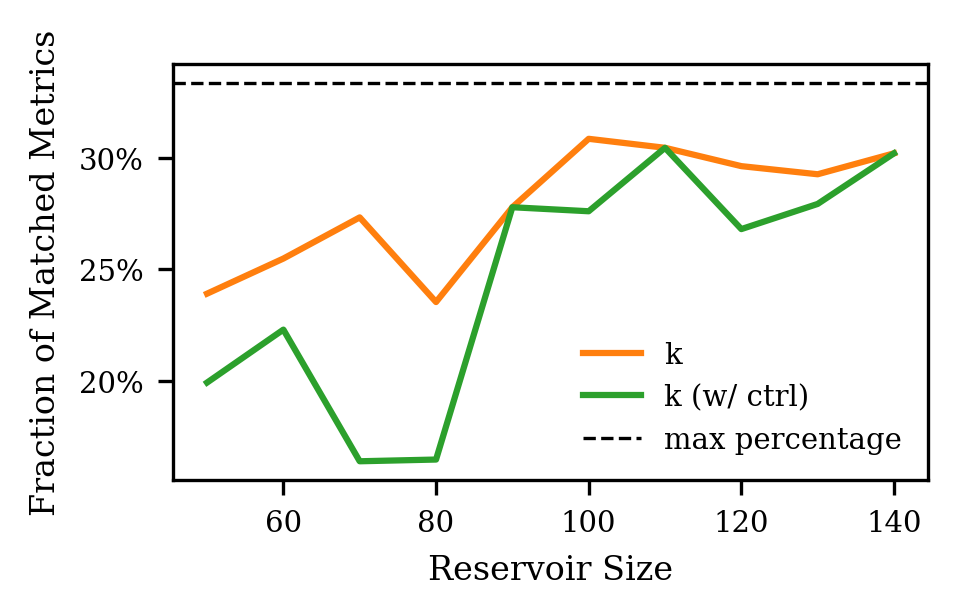}
    \caption{Percentage of reservoir metrics used for matching by ERS with and without control covariates. We give the comparison for the same sequence of reservoir sizes as in Figure~\ref{fig:percent-improvement}. The number of matched metrics $k$ is selected by leave-one-out cross-validation on the reservoir, with a maximum of one third of the total number of metrics, represented by the dashed line.}
    \label{fig:ks}
\end{figure}

\end{document}